\newtheorem{theorem}{Theorem}[section]
\newtheorem{lemma}[theorem]{Lemma}
\newtheorem{corollary}[theorem]{Corollary}
\newenvironment{remark}{\noindent{\bf Remark:}}{}
\newcommand{\ignore}[1]{}
\begin{document}

\title{Finding Dominators via Disjoint Set Union}

\date{\today}

\author{Wojciech Fraczak$^{1}$ \and Loukas Georgiadis$^{2}$ \and Andrew Miller$^{3}$ \and Robert E. Tarjan$^{4}$}

\maketitle

\begin{abstract}
The problem of finding dominators in a directed graph has many important applications, notably in global optimization of computer code.  Although linear and near-linear-time algorithms exist, they use sophisticated data structures.  We develop an algorithm for finding dominators that uses only a ``static tree'' disjoint set data structure in addition to simple lists and maps. The algorithm runs in near-linear or linear time, depending on the implementation of the disjoint set data structure.  We give several versions of the algorithm, including one that computes loop nesting information (needed in many kinds of global code optimization) and that can be made self-certifying, so that the correctness of the computed dominators is very easy to verify.
\end{abstract}

\footnotetext[1]{Universit\'{e} du Qu\'{e}bec en Outaouais, Gatineau, Qu\'{e}bec, Canada.}
\footnotetext[2]{Department of Computer Science \& Engineering, University of Ioannina, Greece. E-mail: \texttt{loukas@cs.uoi.gr}.}
\footnotetext[3]{Benbria Corporation, Ottawa, Ontario, Canada.}
\footnotetext[4]{Department of Computer Science, Princeton University, Princeton, NJ, and Microsoft Research Silicon Valley. E-mail: \texttt{ret@cs.princeton.edu}. Research at Princeton University partially supported by NSF grant CCF-0832797.}

\section{Introduction}
\label{sec:introduction}

A \emph{flow graph} $G = (V, A, s)$ is a directed graph with vertex set $V$, arc set $A$, and a distinguished \emph{start vertex} $s$ such that every vertex is reachable from $s$.  A vertex $u$ \emph{dominates} another vertex $v$ in a flow graph $G$ if every path from $s$ to $v$ contains $u$.  The dominator relation is reflexive and transitive.  Furthermore, the graph of the transitive reduction of this relation is a tree rooted at $s$, called the \emph{dominator tree} $D$: every vertex $v \not = s$ has an \emph{immediate dominator} $d(v) \not= v$, the parent of $v$ in $D$, such that all dominators of $v$ other than $v$ also dominate $d(v)$. Thus $D$, which we represent by its parent function $d$, succinctly represents the dominator relation. Our goal is to find the dominator tree of a given flow graph $G$. See Figure \ref{fig:dominator-tree}.

\begin{figure}[t]
\begin{center}
\scalebox{0.7}[0.7]{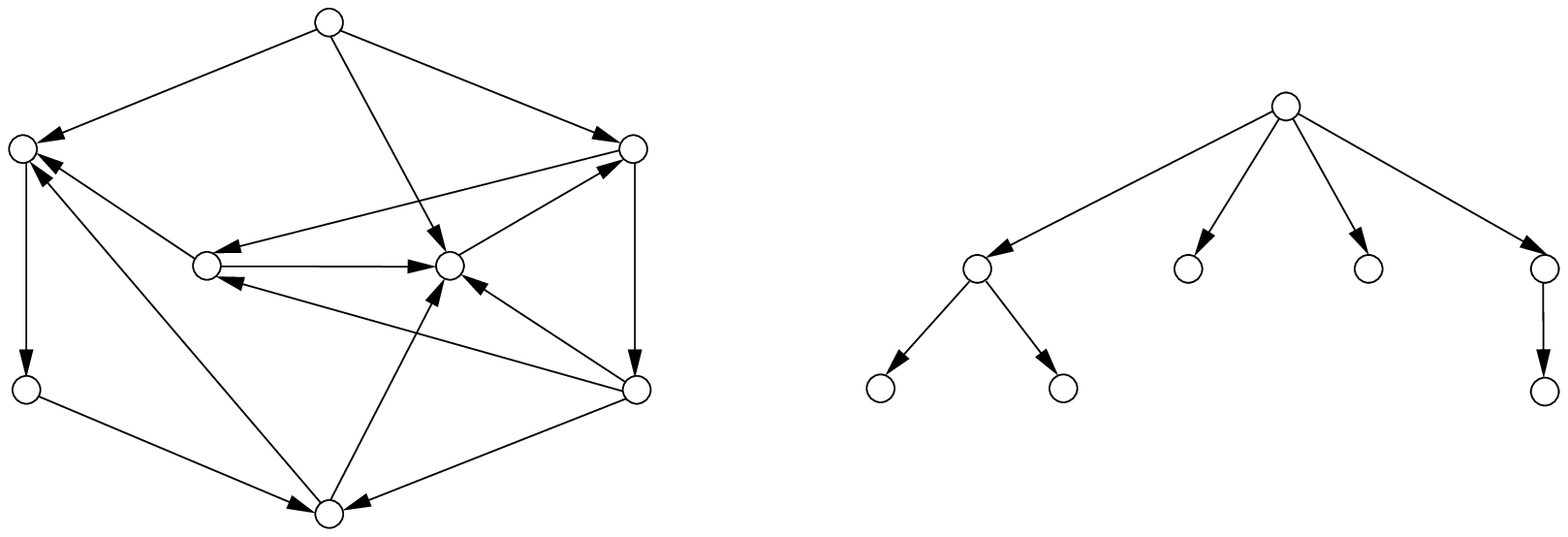}
\end{center}
\vspace{-0.5cm}
\caption{\label{fig:dominator-tree} A flow graph and its dominator tree.}
\end{figure}

Dominators have a variety of important applications, notably in optimizing compilers \cite{pcd:au,cytron:91:toplas} but also in many other areas \cite{foodwebs:ab04,amyeen:01:vlsitest,2vc,2VCSS:Geo,Rodrigues:icml12,Italiano2012,memory-leaks:mgr2010,QVDR:PADL:2006}.  Thus it is not surprising that considerable effort has gone into devising fast algorithms for finding dominators. In 1979, Lengauer and Tarjan \cite{domin:lt} presented an algorithm, the \emph{LT algorithm}, that is fast both in theory and in practice \cite{DomCertExp:SEA13,dom_exp:gtw} and that has been widely used: Lengauer and Tarjan included pseudo-code for the complete algorithm in their paper. The algorithm has one conceptually complicated part, a data structure for computing minima on paths in a tree \cite{pathcomp:t}. The running time of the LT algorithm depends on the implementation of this data structure: with a simple implementation, the LT algorithm runs in $O(m \log{n})$ time on an $n$-vertex, $m$-arc graph; with a sophisticated implementation, it runs in $O(m \alpha(n, m/n))$ time.  Here $\alpha$ is a functional inverse of Ackermann's function
defined as follows: For natural numbers $k$ and $j$, let $A(k, j)$ be defined recursively by $A(0, j) = j + 1$, $A(k, 0) = A(k - 1, 1)$ if $k > 0$, and $A(k, j) = A(k - 1, A(k, j - 1))$ if $j, k > 0$.  Let $\alpha(n, d) = \min \{ k > 0 \ | \ A(k, \lfloor d \rfloor) > n \}$. Function $\alpha$ grows extremely slowly; it is constant for all practical purposes. Later work produced more-complicated but truly linear-time variants of the LT algorithm \cite{domin:ahlt,dominators:bgkrtw,domin:bkrw,dom:gt04}.

In practice, the simple version of the LT algorithm performs at least as well as the sophisticated version, in spite of the smaller worst-case time bound of the former \cite{DomCertExp:SEA13,dom_exp:gtw}. The complexity of the underlying data structure has led researchers to seek both simpler fast algorithms \cite{dom:chk01,dominators:poset,dom_exp:gtw,loops:Ramalingam} and ways to certify the correctness of the output of a dominator-finding algorithm \cite{domv:gt05,domcert,bipolar:2012,domver:ZZ}. Notable results related to our work are the following.

Ramalingam and Reps~\cite{RR:incdom} gave an algorithm for finding dominators in an acyclic graph by computing nearest common ancestors (nca's) in a tree that grows by leaf additions. For this version of the nca problem, Gabow~\cite{nca:g} gave an $O(m)$-time RAM algorithm, and Alstrup and Thorup~\cite{dnca:at00} gave an $O(m \log{\log{n}})$-time pointer machine algorithm.  These algorithms give implementations of the Ramalingam-Reps algorithm that run in $O(m)$ time on a RAM and $O(m \log{\log{n}})$ time on a pointer machine, respectively.  Later, Ramalingam~\cite{loops:Ramalingam} gave a reduction of the problem of finding dominators in a general graph to the same problem in an acyclic graph.  Although he did not mention it, his reduction is an extension of the algorithm of Tarjan~\cite{st:t} for finding a loop nesting forest. In addition to simple lists and maps, his reduction uses a data structure for the ``static tree'' version of the disjoint set union problem \cite{dsu:gt,dsu:tarjan}.  There are simple, well-known  disjoint-set data structures with an inverse-Ackermann-function amortized time bound per operation \cite{dsu:tarjan,setunion:tvl}. Use of any of these in Ramalingam's reduction results in an $O(m \alpha(n, m/n))$ running time.  Furthermore static tree disjoint set union has an $O(m)$-time RAM algorithm \cite{dsu:gt}.  Use of this in Ramalingam's reduction gives an $O(m)$-time RAM implementation.  Combining Ramalingam's reduction with the Ramalingam-Reps algorithm for acyclic graphs gives an algorithm for finding dominators that runs in $O(m)$ time on a RAM or $O(m \log{\log{n}})$ time on a pointer machine, depending on the implementation.

Georgiadis and Tarjan~\cite{domcert,bipolar:2012} developed methods for making a dominator-finding algorithm \emph{self-certifying}, by adding the computation of a \emph{low-high order}. One of their methods requires only simple data structures and a loop nesting forest, which can be computed by Tarjan's algorithm~\cite{st:t}.

Recently, Gabow \cite{dominators:poset} has developed a dominator-finding algorithm that uses only simple data structures and a data structure for static tree set union, thereby eliminating the need for finding path minima in a tree or computing nca's in an incremental tree. His algorithm is based on his previous work on the minimal-edge poset \cite{poset:Gabow91,minset:poset}.

Our work builds on these results.  We develop a dominator-finding algorithm that, like Gabow's, uses only simple data structures and a data structure for static tree set union. Our algorithm does different computations than Gabow's algorithm, and it differs from his in several other ways. We compare our algorithm with his in Section \ref{sec:final}.

In addition to this introduction, our paper contains four sections.  In Section \ref{sec:acyclic} we develop a dominator-finding algorithm for the special case of acyclic graphs. As part of its initialization, the algorithm finds a spanning tree rooted at $s$ and computes nearest common ancestors (nca's) in this tree.  Any spanning tree will do, but the algorithm becomes simpler if the spanning tree is depth-first and its vertices are processed in reverse preorder.  This eliminates the need to compute nca's.  Section \ref{sec:general} extends the algorithm of Section \ref{sec:acyclic} to general graphs.  The extension requires the spanning tree to be depth-first, and it requires an nca computation.  Section \ref{sec:loops} describes a variant of the algorithm of Section \ref{sec:general} that runs Tarjan's algorithm for finding a loop-nesting forest as part of the computation. This eliminates the need to compute nca's, and it allows the algorithm to be easily extended to make it self-certifying. Section \ref{sec:final} contains final remarks, including a comparison of our algorithm with Gabow's.  We develop our algorithm in as general a way as possible.  This leaves several design decisions up to the implementer, such as whether to keep the several passes of the algorithm separate or to combine them.

Our paper is a completely rewritten and extended version of a conference paper \cite{dominators:FM12} by the first and third authors.  The algorithm in that paper has a gap (discussed in Section \ref{sec:general}) that was corrected by the second and fourth authors.

\section{Finding Dominators in an Acyclic Graph}
\label{sec:acyclic}

In the remainder of our paper, $G = (V, A, s)$ is a flow graph with $n$ vertices and $m$ arcs, $D$ is the dominator tree of $G$, and $d$ is the parent function of $D$.  Tree $D$ has root $s$ and vertex set $V$, but it is not necessarily a spanning tree of $G$, since its arcs need not be in $A$.  To simplify time bounds we assume $n > 1$, which implies $m > 0$ since all vertices are reachable from $s$.  We assume that there are no arcs into $s$; such arcs do not change $D$.  We assume that the original graph contains no multiple arcs (two or more arcs $(v, w)$) and no loop arc (an arc $(v, v)$).  Such arcs can be created by the contractions done by the algorithm, but they do not affect the dominators.  A graph is \emph{acyclic} if it contains no cycle of more than one vertex; thus a loop arc is not a cycle. In an abuse of notation, we denote an arc by the ordered pair of its end vertices: even though there may be several arcs with the same ends, they are all equivalent.  To make graph search efficient, we assume that each vertex $v$ has a set of its outgoing arcs $(v, w)$ and a set of its incoming arcs $(u, v)$, or equivalently a set of the vertices $w$ such that $(v, w)$ is an arc and a set of the vertices $u$  such that $(u, v)$ is an arc.  This allows examination of the arcs out of a vertex or into a vertex in time proportional to their number.

Let $T$ be an arbitrary spanning tree of $G$ rooted at $s$.   For each vertex $v \not= s$, let $p(v)$ be the parent of $v$ in $T$.  Arc $(v, w)$ of $G$ is a \emph{tree arc} if $v = p(w)$, a \emph{forward arc} if $v$ is a proper ancestor of $p(w)$ in $T$ (by \emph{proper} we mean $v \not= p(w)$), a \emph{back arc} if $v$ is a proper descendant of $w$ in $T$, a \emph{cross arc} if $v$ and $w$ are unrelated in $T$, and a \emph{loop arc} if $v = w$. See Figure \ref{fig:spanning-tree}. Note that we allow multiple tree arcs into the same vertex $w$.  Such arcs can be created by contractions.  An alternative definition is to specify one such arc to be the tree arc into $w$ and define the others to be forward arcs.  Our algorithm does not distinguish between tree arcs and forward arcs, so either definition works, as does defining all tree arcs to be forward arcs: we keep track of the current spanning tree via its parent function, not its arcs.  For any vertex $v$, the path in $T$ from $s$ to $v$ avoids all vertices that are not ancestors of $v$ in $T$, so $d(v)$ is a proper ancestor of $v$ in $T$.

\ignore{
\begin{figure}[t]
\begin{center}
\scalebox{0.7}[0.7]{\input{acyclic-example-1.pstex_t}}
\end{center}
\caption{\label{fig:acyclic-1} An acyclic flow graph, its dominator tree, and a spanning tree.}
\end{figure}
}

\begin{figure}[t]
\begin{center}
\scalebox{0.7}[0.7]{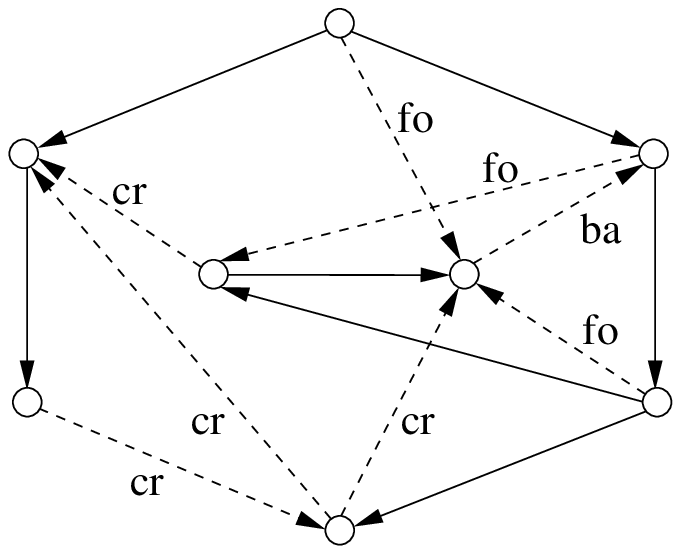}
\end{center}
\vspace{-0.5cm}
\caption{\label{fig:spanning-tree} A spanning tree of the flow graph in Figure \ref{fig:dominator-tree} shown with solid arcs; non-tree arcs are shown dashed and labeled by type: $\mathrm{fo}$ = forward arc, $\mathrm{cr}$ = cross arc, and $\mathrm{ba}$ = back arc.}
\end{figure}

Our dominator-finding algorithm for acyclic graphs, Algorithm AD, begins by building an arbitrary spanning tree $T$ rooted at $s$. It then repeatedly selects a vertex $v \not= s$ and contracts it into its parent in $T$, thereby deleting $v$ from both $G$ and $T$.  Each such contraction preserves the dominators of the undeleted vertices. For each undeleted vertex $v$, the algorithm maintains a set $\mathit{same}(v)$ of vertices having the same immediate dominator as $v$. Just before contracting $v$ into its parent, the algorithm either computes $d(w)$ for each vertex $w$ in $\mathit{same}(v)$, including $v$, or identifies a non-deleted vertex $x$ such that $d(v) = d(x)$ and adds all vertices in $\mathit{same}(v)$ to $\mathit{same}(x)$. Contractions continue until only $s$ remains, by which time all immediate dominators have been computed.

We begin the development of Algorithm AD by discussing contractions and establishing sufficient conditions for a contraction to
\begin{itemize}
\item[(a)] provide a way to compute the immediate dominator of the deleted vertex from those of the undeleted vertices, and
\item[(b)] preserve the dominators of the undeleted vertices.
\end{itemize}
A \emph{contraction} of vertex $v$ \emph{into} vertex $x \not= v$ replaces each arc end equal to $v$ by $x$.  That is, it replaces each arc $(u, v)$ with $u \not= v$ by $(u, x)$, each arc $(v, y)$ with $y \not= v$ by $(x, y)$, and each loop arc $(v, v)$ by $(x, x)$.  Any arc $(x, v)$ or $(v, x)$ becomes a new loop arc $(x, x)$. See Figure \ref{fig:contract}. Each contraction done by our algorithm contracts a vertex $v \not= s$ into its parent $p(v)$ in the current spanning tree $T$.  In addition to changing the graph, the contraction changes $T$ into a new spanning tree $T'$ rooted at $s$, whose arcs are those of $T$ not into $v$.  The parent function $p'$ of $T'$ is defined for $w \not \in \{s, v\}$ and is $p'(w) = p(w)$ if $p(w) \not= v$, $p'(w) = p(v)$ if $p(w) = v$.

\begin{figure}[t]
\begin{center}
\scalebox{0.7}[0.7]{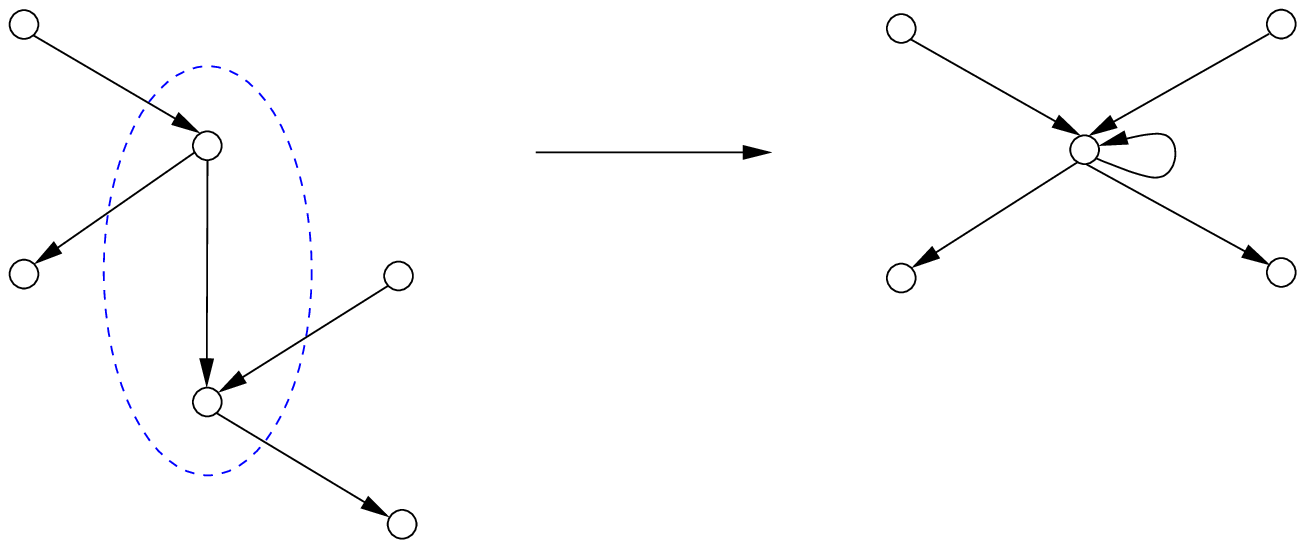}
\end{center}
\vspace{-0.5cm}
\caption{\label{fig:contract} Vertex contraction.}
\end{figure}

\begin{lemma}
\label{lemma:acyclic-contract}
Suppose $G$ is acyclic.  Let $v \not= s$, and let $G'$ be the graph formed from $G$ by contracting $v$ into $p(v)$.  Graph $G'$ contains one new loop arc $(p(v), p(v))$ for each arc $(p(v), v)$ in $G$, and no other new loop arcs.
\end{lemma}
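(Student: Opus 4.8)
The plan is to unwind the definition of contraction and then invoke acyclicity to eliminate the only other possible source of a newly created self-loop. Throughout, write $x = p(v)$; since $v \neq s$ it has a parent in $T$, and $x \neq v$ because $T$ is a tree. The contraction of $v$ into $x$ relabels every occurrence of $v$ by $x$, so I would first classify the arcs of $G$ according to how they are affected: an arc not incident to $v$ is unchanged and contributes no new loop arc; an arc $(u,v)$ with $u \neq v$ becomes $(u,x)$, which is a self-loop exactly when $u = x$; an arc $(v,y)$ with $y \neq v$ becomes $(x,y)$, which is a self-loop exactly when $y = x$; and a loop arc $(v,v)$ becomes the loop arc $(x,x)$, which is a relocation of an already-existing self-loop rather than a new one (this matches the wording of the definition, which reserves the phrase ``new loop arc'' for the images of arcs $(x,v)$ and $(v,x)$). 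Hence the new loop arcs of $G'$ are precisely the images of the arcs of $G$ of the form $(x,v)$ or $(v,x)$, one loop arc $(x,x)$ per such arc.

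Next I would rule out arcs of the form $(v,x)$. Because $T$ is a spanning tree of $G$ and $x = p(v)$, the arc $(p(v),v) = (x,v)$ is a tree arc and therefore belongs to $A$. If $G$ also contained an arc $(v,x)$, then $(x,v)$ and $(v,x)$ would form a cycle through the two distinct vertices $x$ and $v$, contradicting the hypothesis that $G$ is acyclic. So $G$ has no arc $(v, p(v))$, and every new loop arc of $G'$ is the image of an arc $(p(v),v)$ of $G$; conversely each arc $(p(v),v)$ of $G$ yields exactly one new loop arc $(p(v),p(v))$. This is the assertion of the lemma.

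There is essentially no obstacle here: the mathematical content is the single acyclicity step that forbids a back arc $(v,p(v))$, together with the bookkeeping observation that a pre-existing self-loop $(v,v)$ is carried to $(p(v),p(v))$ and must not be miscounted as new. The one point worth stating explicitly is that at least one arc $(p(v),v)$ does occur — namely the tree arc — so the count ``one new loop arc for each arc $(p(v),v)$'' is non-vacuous.
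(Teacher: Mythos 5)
Your proof is correct and follows essentially the same route as the paper's: both identify the arcs $(p(v),v)$ and $(v,p(v))$ as the only possible sources of new loop arcs and use acyclicity (via the presence of the tree arc $(p(v),v)$) to rule out the latter, with your version merely spelling out the case analysis and the treatment of a pre-existing loop arc $(v,v)$ more explicitly. One small remark: the paper's proof additionally observes that every arc added by the contraction corresponds to a path in $G$, so the contraction preserves acyclicity --- a fact not in the lemma's statement but cited later (in Lemma~\ref{lemma:acyclic-loop}) as following from this lemma, so be aware that the paper packs slightly more into this proof than the statement strictly requires.
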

\begin{proof}
Suppose $G$ is acyclic.  Every arc added by the contraction corresponds to a path in $G$, so such additions cannot create any cycles.  Each tree arc $(p(v), v)$ in $G$ becomes a loop arc $(p(v), p(v))$ in $G'$.  Since $G$ is acyclic, it contains no arc $(v, p(v))$, so no such arc becomes a loop arc in $G'$.
\end{proof}

The following two lemmas provide properties (a) and (b) of a contraction, respectively.  They hold for arbitrary flow graphs.  Assign the vertices of $T$ distinct integers from $1$ to $n$ in a bottom-up order (an order such that $v$ is numbered less than $p(v)$ for all $v \not= s$) and identify vertices by number.

\begin{lemma}
\label{lemma:parent-dom}
Let $v \not= s$ be a vertex with no entering cross arc or back arc, let $(u, v)$ be an arc with $u$ maximum, and suppose $d(p(v)) \ge u$.  If $u = p(v)$, then $d(v) = u$; otherwise, $d(v) = d(p(v))$.
\end{lemma}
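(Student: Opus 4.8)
The plan is to split on whether $u = p(v)$ and, in each case, apply the defining property of the immediate dominator: every dominator of a vertex $x$ other than $x$ also dominates $d(x)$. I will use two preliminary facts. First, since $v$ has no entering cross or back arc, every arc $(w,v)$ of $G$ is a tree arc (so $w = p(v)$) or a forward arc (so $w$ is a proper ancestor of $p(v)$ in $T$); hence every source of an arc into $v$ is an ancestor of $v$ in $T$, and in the bottom-up numbering these ancestors have strictly increasing numbers toward $s$, so they are linearly ordered by number. Since the tree arc $(p(v),v)$ is present we get $u \ge p(v)$, and $u$ is the source, closest to $s$, of an arc entering $v$. Second, any walk from $s$ to a vertex $x$ contains every dominator of $x$: truncating the walk at the first occurrence of $x$ and then deleting cycles yields a simple $s$-to-$x$ path whose vertex set lies inside that of the walk. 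An immediate consequence is an auxiliary claim: every dominator $z$ of $v$ with $z \ne v$ also dominates $p(v)$, since for any path $Q$ from $s$ to $p(v)$ the walk obtained by following $Q$ with the tree arc $(p(v),v)$ runs from $s$ to $v$, hence contains $z$, and its only vertex outside $Q$ is $v$, so $z \in Q$.

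\emph{Case $u = p(v)$.} No forward arc enters $v$ (a forward-arc source would be a proper ancestor of $p(v)$, hence larger than $p(v)$, contradicting maximality), so every $s$-to-$v$ path uses an arc $(p(v),v)$ and therefore contains $p(v)$; thus $p(v)$ dominates $v$, and consequently $p(v)$ dominates $d(v)$. If $p(v) \ne d(v)$, then $d(v)$, being a dominator of $v$, lies on the tree path $s \rightsquigarrow v$ and so is a proper ancestor of $p(v)$ in $T$; but then the tree path from $s$ to $d(v)$ omits $p(v)$, contradicting that $p(v)$ dominates $d(v)$. Hence $d(v) = p(v) = u$.

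\emph{Case $u \ne p(v)$.} Then $(u,v)$ is a forward arc and $u$ is a proper ancestor of $p(v)$ in $T$, so $p(v)$ is not an ancestor of $u$; hence the tree path $s \rightsquigarrow u$ followed by $(u,v)$ is an $s$-to-$v$ path avoiding $p(v)$, showing $p(v)$ does not dominate $v$, i.e. $d(v) \ne p(v)$. Next I claim $d(p(v))$ dominates $v$; this is where the hypothesis $d(p(v)) \ge u$ enters. Let $P = P' \cdot (w,v)$ be any $s$-to-$v$ path with last arc $(w,v)$. Then $w$ is an ancestor of $v$ in $T$ with $w \le u$ by maximality, so $w$ lies on the tree path running from $u$ down to $p(v)$ and in particular is an ancestor of $p(v)$ in $T$ (possibly $w = p(v)$). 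Appending the tree path from $w$ to $p(v)$ to $P'$ gives a walk from $s$ to $p(v)$, which therefore contains $d(p(v))$; but every vertex of that tree path has number at most $w \le u \le d(p(v))$, with equality forcing $w = u = d(p(v))$, which is the last vertex of $P'$; so in every case $d(p(v)) \in P' \subseteq P$. As $P$ was arbitrary, $d(p(v))$ dominates $v$. To finish: $d(p(v))$ is a proper ancestor of $v$ in $T$, so $d(p(v)) \ne v$, and being a dominator of $v$ other than $v$ it dominates $d(v)$; conversely $d(v)$, a dominator of $v$ other than $v$, dominates $p(v)$ by the auxiliary claim, and $d(v) \ne p(v)$, so $d(v)$ dominates $d(p(v))$. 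By antisymmetry of the dominator order, $d(v) = d(p(v))$.

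The main obstacle is the claim, in the second case, that $d(p(v))$ dominates $v$: it is the only place the hypothesis $d(p(v)) \ge u$ is used, and it requires combining the maximality of $u$ (to pin down the last arc of an arbitrary $s$-to-$v$ path) with the tree structure and the walk-versus-dominator fact. The remaining steps are routine bookkeeping with the definition of immediate dominator and the bottom-up numbering.
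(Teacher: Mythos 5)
Your proof is correct and takes essentially the same route as the paper's: the same case split on $u = p(v)$, the same device of appending the tree arc $(p(v),v)$ to a path into $p(v)$ to show $d(v)$ dominates $p(v)$, and the same replacement of the last arc of an arbitrary $s$-to-$v$ path by a tree path to show $d(p(v))$ dominates $v$, concluding by mutual domination. You merely make explicit a few details the paper leaves implicit, such as why $d(v) \neq p(v)$ and the boundary case $w = u = d(p(v))$.
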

\begin{proof}
If $u = p(v)$, then every arc into $v$ is from $u$.  (There may be more than one, if there are multiple arcs.)  Thus $u$ dominates $v$.  Since $d(v)$ is an ancestor of $p(v)$ in $T$, $u = d(v)$.  Suppose on the other hand that $u \not= p(v)$.  Both $d(v)$ and $d(p(v))$ are ancestors of $u$ in $T$.  Suppose there is a path from $s$ to $p(v)$ that avoids $d(v)$.  Adding $(p(v), v)$ to this path produces a path from $s$ to $v$ that avoids $d(v)$, a contradiction.  Thus $d(v)$ dominates $p(v)$.  Since $d(v) \not= p(v)$, $d(v)$ dominates $d(p(v))$.  Suppose there is a path from $s$ to $v$ that avoids $d(p(v))$.  Let $(x, v)$ be the last arc on this path.  Then $x$ is a descendant of $u$ and an ancestor of $p(v)$ in $T$.  Replacing $(x, v)$ by the path in $T$ from $x$ to $p(v)$ gives a path from $s$ to $p(v)$ that avoids $d(p(v))$, a contradiction.  Thus $d(p(v))$ dominates $v$.  Since $d(p(v)) \not= v$, $d(p(v))$ dominates $d(v)$.  We conclude that $d(v) = d(p(v))$.
\end{proof}

\begin{lemma}
\label{lemma:contract-1}
Let $v \not= s$ be a vertex with no entering cross arc or back arc, let $(u, v)$ be an arc entering $v$ with $u$ maximum, and suppose $d(x) \ge u$ for every descendant $x$ of $u$ in $T$. Let $G'$ be the graph formed from $G$ by contracting $v$ into $p(v)$.  Then the dominator tree $D'$ of $G'$ is $D$ with $v$ and the arc into $v$ deleted.
\end{lemma}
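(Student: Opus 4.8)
The plan is to show that $G$ and $G'$ induce the same dominator relation on $V\setminus\{v\}$ and then read off $D'$. Two preliminary remarks make the target precise. First, every dominator of a vertex lies on its tree path to $s$: the text notes this for $d$, and it propagates to $d^2,d^3,\dots$ . Second, under the hypothesis $v$ is a leaf of $D$: if some $w$ had $d(w)=v$, then $v$ would be a proper ancestor of $w$ in $T$, hence $w$ a proper descendant of $u$ in $T$ (since $u$ is an ancestor of $v$), so the hypothesis would give $d(w)=v\ge u$, contradicting $v<u$. With Lemma \ref{lemma:parent-dom} supplying $d(v)$, "$D$ with $v$ and the arc into $v$ deleted" is therefore just "$D$ with the leaf $v$ removed", so it suffices to prove that for all $w,z\in V\setminus\{v\}$, $z$ dominates $w$ in $G$ iff $z$ dominates $w$ in $G'$. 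Granting this, each $w\ne v$ has the same dominator set in $G$ and in $G'$, none of whose members is $v$, and the member of that set dominated by all the others — which by the iff applied with $d(w)$ in place of $w$ is $d(w)$ — is the immediate dominator of $w$ in $G'$; hence $d'(w)=d(w)$.

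For the iff I would pass to the reformulation "$z$ dominates $w$ iff $w$ is unreachable from $s$ after deleting $z$" ($z=s$ being trivial), i.e. show $w$ is reachable from $s$ in $G-z$ iff in $G'-z$. Given an $s$–$w$ walk in $G-z$, project it to $G'$: replace each traversal of the form $\cdots\to a'\to v\to b'\to\cdots$ by $\cdots\to a'\to p(v)\to b'\to\cdots$, using the arcs $(a',p(v))$ and $(p(v),b')$ that the contraction creates and dropping any loop steps (the cases $a'=p(v)$ or $b'=p(v)$ are even easier). All other steps survive verbatim, giving an $s$–$w$ walk in $G'$ avoiding $z$ — unless $z=p(v)$ and the original walk met $v$ but not $p(v)$. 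This last situation is the delicate one, and it is where the hypothesis must be invoked: the only arcs into $v$ not from $p(v)$ come from ancestors of $v$ with number $\le u$, and $d(p(v))\ge u$, which restricts how such a detour can occur.

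Conversely, lift an $s$–$w$ walk $Q$ in $G'-z$ to $G$. Steps not touching $p(v)$ lift directly; where $Q$ uses $\cdots\to a\to p(v)\to b\to\cdots$, choose preimages of the two incident arcs. The only hard combination is that $(a,p(v))$ is the image of a forward arc $(a,v)$ with $p(v)<a\le u$ while $(p(v),b)$ is a genuine arc of $G$; then route the detour as $a\to(\text{tree path from }a\text{ to }p(v))\to b$, which is a walk in $G$. It avoids $z$ unless $z$ lies strictly between $p(v)$ and $a$ on the tree path. But in that case $z$ is a proper descendant of $u$ in $T$, so $z<u$, whereas $z$, being a dominator of $w$, lies on the $s$–$w$ tree path, forcing $w$ into the subtree of $u$; then the hypothesis gives $d(w)\ge u>z$, which is impossible since a dominator of $w$ has number at least $d(w)$. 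So the detour avoids $z$ and the lift goes through.

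The main obstacle throughout is exactly this handling of walks through $v$ (respectively through $p(v)$): one must rule out that the contraction opens a genuinely new route to some vertex or destroys its last route avoiding a given $z$. That is precisely what the hypothesis "$d(x)\ge u$ for every descendant $x$ of $u$ in $T$" controls — it forces $u$ to sit below (in number) every immediate dominator inside its subtree — and I would budget most of the effort for the $z=p(v)$ subcase of the $G\Rightarrow G'$ direction, which is the one place where the argument is genuinely subtle.
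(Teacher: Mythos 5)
Your overall strategy --- show that $G$ and $G'$ induce the same dominance relation on $V \setminus \{v\}$ by converting $s$--$w$ walks avoiding a candidate dominator $z$ between the two graphs, after first checking that $v$ is a leaf of $D$ --- is exactly the paper's. Your lifting direction ($G' \Rightarrow G$) is sound: the tree-path detour from $a$ down to $p(v)$, combined with the observation that a $z$ lying strictly between them would be a proper descendant of $u$ dominating $w$ and hence would force $d(w) \ge u > z$, is a legitimate (and arguably cleaner) alternative to the paper's device of replacing the entire prefix up to $p(v)$ using $d(p(v)) \ge u$.

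The genuine gap is in the projection direction, in precisely the subcase you flag and then defer: $z = p(v)$ with the walk in $G$ passing through $v$. Saying the hypothesis ``restricts how such a detour can occur'' is not an argument, and the two facts you cite are not the right levers. What is actually needed is: (i) in this situation $u \ne p(v)$ --- otherwise every arc into $v$ comes from $p(v)$ (all entering arcs are tree or forward arcs and $u$ is the maximum tail), so no walk could reach $v$ while avoiding $p(v)$ --- hence $u$ is a proper ancestor of $p(v)$ and $u > p(v)$; and (ii) a reroute keyed to the successor $z'$ of $v$ on the walk, not to $p(v)$: if $z'$ is not a descendant of $u$ in $T$, replace the prefix up to $z'$ by the tree path in $T'$ from $s$ to $z'$, which avoids $p(v)$ because $p(v)$ is a proper descendant of $u$ while every vertex on that path is an ancestor of $z'$; if $z'$ is a descendant of $u$, invoke the hypothesis on $z'$ itself, $d(z') \ge u > p(v)$, to obtain a route to $z'$ that avoids $p(v)$. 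Your sketch never mentions the successor of $v$ on the walk, and $d(p(v)) \ge u$ alone does not produce the required walk in $G'$. Since you explicitly budget ``most of the effort'' for this subcase rather than carrying it out, the proof as written is incomplete at its crux.
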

\begin{proof}
Since $d(x) \ge u \ge p(v)$ for every descendant $x$ of $v$ in $T$, $v$ dominates no vertices in $G$ except itself. Thus $v$ is a leaf in $D$. Let $x$ and $w$ be vertices in $G'$.  Suppose $x$ does not dominate $w$ in $G'$.  Then there is a simple path $P$ from $s$ to $w$ in $G'$ that avoids $x$, which is either a path in $G$ or can be converted into a path in $G$ that avoids $x$ as follows. Replace any new arc from $p(v)$ by a pair of old arcs, one into $v$ and one out of $v$. If $x$ is not a proper descendant of $u$, replace any new arc $(y, p(v))$ by the path in $T$ from $y$ to $p(v)$: $y \not= x$, and $y$ must be a descendant of $u$ and a proper ancestor of $p(v)$.  If $x$ is a proper descendant of $u$, replace the part of $P$ from $s$ to $p(v)$ by a path in $G$ from $s$ to $p(v)$ that avoids $x$. Such a path must exist since $d(p(v)) \ge u$.  Thus $x$ does not dominate $w$ in $G$.

Conversely, suppose $x$ does not dominate $w$ in $G$. Then there is a simple path $P$ from $s$ to $w$ in $G$ that avoids $x$.  If $v$ is not on $P$, $P$ is a path in $G'$.  If $v$ is on $P$, $P$ can be converted to a path from $s$ to $w$ in $G'$ that avoids $x$ as follows.  Let $T'$ be the spanning tree in $G'$ formed from $T$ by the contraction, and let $(y, v)$ and $(v, z)$ be the arcs into and out of $v$ on $P$, respectively.  If $x \not= p(v)$, replace $(y, v)$ and $(v, z)$ on $P$ by $(y, p(v))$ and $(p(v), z)$.  Suppose $x = p(v)$.  Then $u \not= p(v)$, since $u = p(v)$ implies by Lemma \ref{lemma:parent-dom} that $u = x$ dominates $v$ in $G$, but $P$ contains a path from $s$ to $v$ that avoids $x$.  If $z$ is not a descendant of $u$, in $T$, then $z$ is not a descendant of $u$ in $T'$.  Replace the part of $P$ from $s$ to $z$ by the path in $T'$ from $s$ to $z$.  If $z$ is a descendant of $u$ in $T$, replace the part of $P$ from $s$ to $z$ by a path in $G$ from $s$ to $z$ that avoids $x$, which must exist since $d(z) \ge u > p(v) = x$.  Thus $x$ does not dominate $w$ in $G'$.

We conclude that the dominators of any vertex $w \not= v$ are the same in $G$ and $G'$.
\end{proof}

Algorithm AD uses Lemmas \ref{lemma:acyclic-contract}, \ref{lemma:parent-dom}, and \ref{lemma:contract-1} to find dominators.  It chooses contractions by marking arcs.  When the last arc $(u, v)$ into a vertex $v$ is marked, it contracts $v$ into $p(v)$, thereby deleting $v$.  Each new arc created by a contraction is marked if and only if the arc it replaces is marked.  Figure \ref{fig:acyclic-2} illustrates how Algorithm AD works.\\

\begin{figure}[h]
\begin{center}
\fbox{
\begin{minipage}[h]{16cm}
\begin{center}
\textbf{Algorithm AD: Find Dominators in an Acyclic Graph, Version 1}
\end{center}
\begin{description}\setlength{\leftmargin}{10pt} \setlength{\itemsep}{0pt} \parskip0pt \parsep0pt
\item[Initialization:] Find a spanning tree $T$ of the input graph $G$, let $p$ be the parent function of $T$, number the vertices of $T$ from $1$ to $n$ in a bottom-up order, and identify vertices by number. Unmark all arcs of $G$. Assign $\mathit{same}(u) \leftarrow \{u\}$ for each vertex $u$.\\
\item[Main Loop:]
\textbf{for} $u = 1$ \textbf{until} $n$ \textbf{do} \hfill
\item[] \hspace{12.5ex} \textbf{while} some tree or forward arc $(u, v)$ is unmarked \textbf{do} \hfill
\item[] \hspace{12.5ex} $\{$ \ \ mark $(u, v)$; \hfill
\item[] \hspace{16ex} \textbf{if} all arcs into $v$ are marked \textbf{then}\hfill
\item[] \hspace{16ex} $\{$ \ \ \textbf{if} $u=p(v)$ \textbf{then} \textbf{for} $w \in \emph{same}(v)$ \textbf{do} $d(w) \leftarrow u$
\item[] \hspace{19.5ex} \textbf{else} $\emph{same}(p(v)) \leftarrow \emph{same}(p(v)) \cup \emph{same}(v)$;
\ignore{\item[] \hspace{14ex} $\{$ \ \ $\emph{rd}(v) \leftarrow p(v)$; $b(v) \leftarrow$ \textbf{if} $u = p(v)$ \textbf{then} \textbf{true} \textbf{else} \textbf{false};\hfill}
\item[] \hspace{19.5ex} contract $v$ into $p(v)$ \ \ $\}$ \ \ $\}$\\
\ignore{
\item[Pass 3:] \textbf{for} $u = 2$ \textbf{until} $n$ \textbf{do} $d(v) \leftarrow$ \textbf{if} $b(v)$ \textbf{then} $\emph{rd}(v)$ \textbf{else} $d(\emph{rd}(v))$}
\end{description}
\end{minipage}
}
\end{center}
\end{figure}

\begin{figure}
\begin{center}
\scalebox{0.7}[0.7]{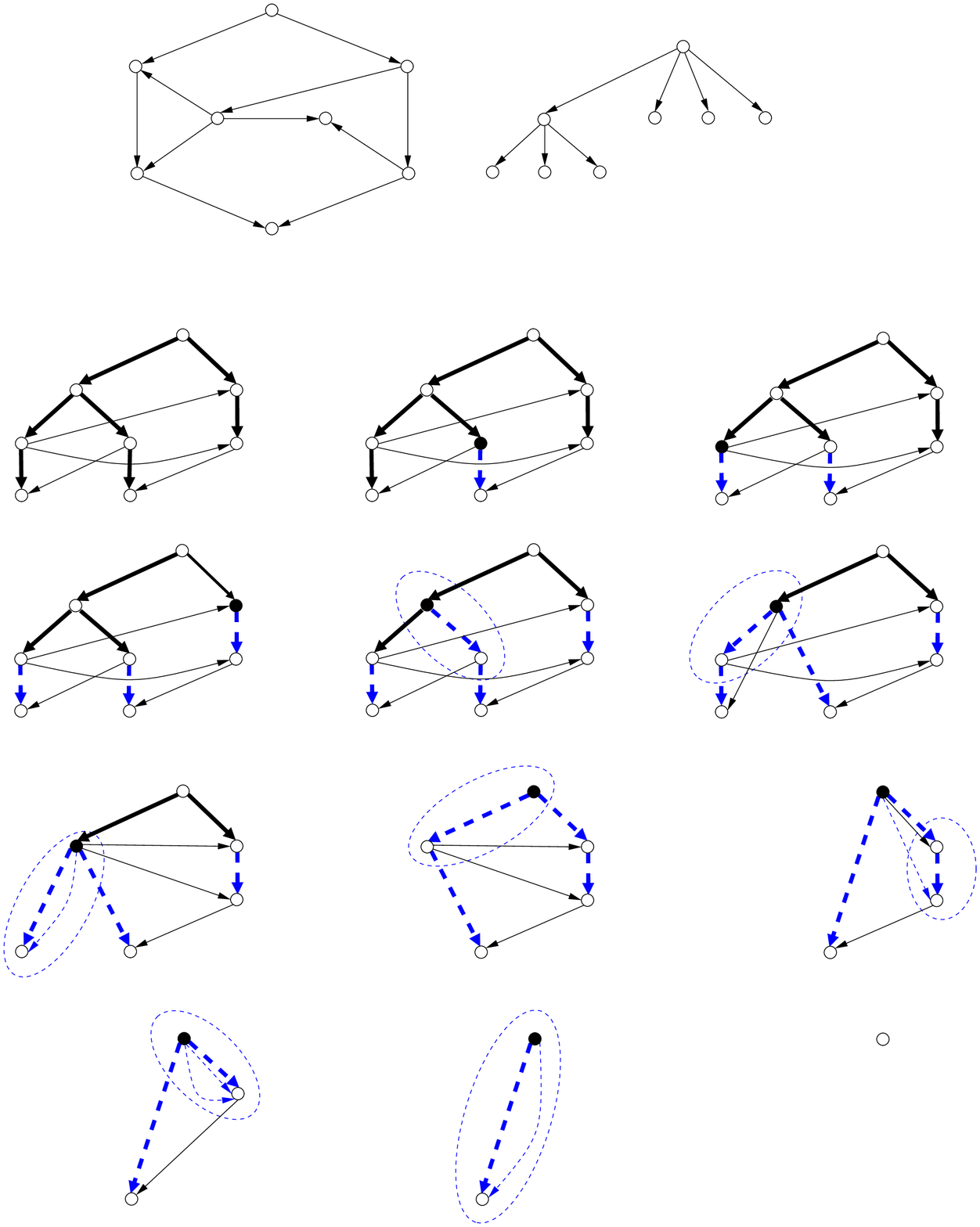}
\end{center}
\caption{\label{fig:acyclic-2} Execution of algorithm AD on an acyclic flow graph. Tree arcs are shown bold; the vertex $u$ processed in the main loop is shown filled; marked arcs are shown dashed; loop arcs created by contractions are omitted.}
\end{figure}

\begin{remark}
During the main loop of Algorithm AD, just before contracting a vertex $v$ into $p(v)$, we can store a \emph{relative dominator}  $\mathit{rd}(v) \leftarrow p(v)$ and an indicator bit $b(v)$ for $v$, where we set $b(v) \leftarrow \textbf{true}$ if $u = p(v)$, and $b(v) \leftarrow \textbf{false}$ otherwise.
This eliminates the need to maintain the $\mathit{same}$-sets but results in a three pass algorithm. The third pass processes vertices in a top-down order (the reverse of a bottom-up order) and for each vertex $v$ it assigns $d(v) \leftarrow \mathit{rd}(v)$ if $b(v) = \textbf{true}$, and $d(v) \leftarrow d(\mathit{rd}(v))$ otherwise. The LT algorithm computes immediate dominators in the same way, but it uses a different, more static definition of relative dominators.\\
\end{remark}

To prove that Algorithm AD is correct, our main task is to verify that the hypotheses of Lemmas \ref{lemma:parent-dom} and \ref{lemma:contract-1} hold for each contraction, and that the main loop deletes all vertices except $s$. The bottom-up order of vertex processing guarantees these properties.  The correctness of the dominators computation follows immediately from Lemma \ref{lemma:parent-dom}.

\begin{lemma}
\label{lemma:acyclic-loop}
If the original graph is acyclic, then throughout the main loop the current graph is acyclic, and every loop arc is marked.
\end{lemma}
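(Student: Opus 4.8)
The plan is to prove both assertions simultaneously by induction on the number of operations performed during the main loop, where a single operation is either the marking of one arc or the contraction of one vertex. Before any operation the current graph equals $G$, which is acyclic by hypothesis and contains no loop arcs (the original graph has none by assumption), so both assertions hold trivially. It remains to show that each operation preserves them, assuming inductively that they held just before it.

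First I would handle an arc-marking operation. The only arcs marked in the main loop are tree or forward arcs $(u,v)$ selected by the inner \textbf{while} loop; for such an arc $u$ is an ancestor of $v$ in the current spanning tree, so $u \ne v$ and $(u,v)$ is not a loop arc. Marking it alters neither the arc set of the current graph nor the mark status of any other arc, so the current graph stays acyclic and every loop arc remains marked.

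Next I would handle a contraction of a vertex $v$ into $p(v)$, which the algorithm performs only for $v \ne s$ and only at a moment when every arc of the current graph entering $v$ is marked. Let $G'$ be the resulting graph. By the inductive hypothesis the current graph is acyclic, so Lemma \ref{lemma:acyclic-contract} applies: the arcs added by the contraction each correspond to a path and hence create no cycle, so $G'$ is acyclic; moreover the only new loop arcs of $G'$ are copies of $(p(v),p(v))$, one for each arc $(p(v),v)$ of the current graph. Each such arc $(p(v),v)$ enters $v$ and is therefore marked, so — since a contracted arc is marked exactly when the arc it replaces is marked — every new loop arc of $G'$ is marked. Every pre-existing loop arc either survives unchanged or, if it was $(v,v)$ or $(p(v),p(v))$, reappears as $(p(v),p(v))$; in each case it was marked by the inductive hypothesis (note a loop arc $(v,v)$ is also an arc entering $v$), and the contraction preserves its mark. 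Hence every loop arc of $G'$ is marked, completing the induction.

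The argument is essentially bookkeeping, and the one point that needs care is the interaction between the mark-preservation rule for contracted arcs and the description of the new loop arcs provided by Lemma \ref{lemma:acyclic-contract}: one must observe that a contraction of $v$ is triggered precisely when all arcs entering $v$ — which include every arc $(p(v),v)$ and any loop arc $(v,v)$ — have already been marked, so that no unmarked loop arc can be produced. A secondary point worth stating explicitly is that the inner \textbf{while} loop never marks a loop arc, since loop arcs are neither tree nor forward arcs; this keeps the two parts of the invariant consistent with each other.
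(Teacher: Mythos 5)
Your proof is correct and follows essentially the same route as the paper's: acyclicity is preserved by Lemma \ref{lemma:acyclic-contract}, and every new loop arc $(p(v),p(v))$ arises from an arc $(p(v),v)$ that must already be marked for the contraction to be triggered, so it inherits the mark. You merely make explicit the induction and the trivial cases (arc markings, pre-existing loop arcs) that the paper's terser argument leaves implicit.
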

\begin{proof}
Contractions preserve acyclicity by Lemma \ref{lemma:acyclic-contract}.  By assumption the original graph contains no loop arcs.  By Lemma \ref{lemma:acyclic-contract}, any new loop arc $(p(v), p(v))$ is created by a contraction of $v$ into $p(v)$ and replaces a former arc $(p(v), v)$. For such a contraction to occur, $(p(v),v)$ must be marked. The new loop arc inherits the mark.
\end{proof}

\begin{lemma}
\label{lemma:marked-arc}
Throughout the main loop, each marked arc $(x, y)$ is a tree, forward, or loop arc such that $x \le u$.
\end{lemma}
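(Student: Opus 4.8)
The plan is to induct on the sequence of operations performed by the main loop: the initialization (which unmarks all arcs) and then each iteration of the inner \textbf{while} body, namely a mark of some tree or forward arc $(u,v)$ followed possibly by a contraction of $v$ into $p(v)$. I will show that the invariant ``every marked arc $(x,y)$ is a tree, forward, or loop arc with $x \le u$'' is preserved, where $u$ denotes the current value of the main-loop variable.

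First I would dispose of the base case: after initialization no arc is marked, so the claim holds vacuously. Next, consider the step in which the algorithm marks an arc $(u,v)$. The guard of the \textbf{while} loop ensures $(u,v)$ is a tree or forward arc, and its tail is exactly $u$, so $x = u \le u$; the invariant is maintained for this newly marked arc, and all previously marked arcs are unaffected. The only remaining concern is the increment of $u$ at the top of the \textbf{for} loop, but this only weakens the inequality $x \le u$, so it too preserves the invariant.

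The main work is the contraction step: when all arcs into $v$ are marked, the algorithm contracts $v$ into $p(v)$, and I must check that every marked arc of the new graph is a tree, forward, or loop arc whose tail is at most $u$ in the new spanning tree $T'$. By the convention that a new arc inherits the mark of the arc it replaces, a marked arc of $G'$ either is a marked arc of $G$ not incident to $v$ (its type and tail are unchanged, and these arcs are untouched by the renumbering, which keeps the original numbers) or is the image under the contraction of a marked arc of $G$ incident to $v$. Here I would case on the latter. An arc $(u', v)$ with $u' \ne p(v)$ becomes $(u', p(v))$: since $(u',v)$ was marked, by induction it was a tree or forward arc, so $u'$ was a proper ancestor of $v = p(u'\text{-child})$; hence $u'$ is an ancestor of $p(v)$ in $T$, and in $T'$ (where $v$ is spliced out) $u'$ is an ancestor of $p(v)$, making $(u',p(v))$ a tree or forward arc, with tail $u' \le u$ by induction. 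An arc $(p(v), v)$ becomes a loop arc $(p(v),p(v))$, which is allowed; its tail is $p(v)$, and since $(p(v),v)$ was marked we have $p(v) \le u$ by induction. A marked arc $(v, y)$ out of $v$ becomes $(p(v), y)$; by induction $(v,y)$ is a tree, forward, or loop arc with tail $v \le u$, but we also know $v = u$ — indeed the only way an arc out of $v$ gets marked is in an iteration with main-loop variable equal to $v$, and at the moment $v$ is contracted we must have already processed all of $v$'s outgoing tree and forward arcs, which forces $u \ge v$; combined with $v \le u$ from the hypothesis this is consistent, and I would argue that in fact all marked arcs out of $v$ have already been accounted for. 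The cleanest route is: since the \textbf{for} loop processes $v$ before it processes $p(v)$ only if $v < p(v)$ (true by the bottom-up numbering), and a marked out-arc of $v$ requires $u \ge v$, while the contraction of $v$ happens during iteration $u$ with... The delicate point—and the one I expect to be the crux—is pinning down exactly which values of $u$ a marked out-arc $(v,y)$ of $v$ can have had at the time of contraction, and showing the resulting image $(p(v),y)$ still satisfies tail $\le u$ and is of an allowed type; this is where I would lean on Lemma \ref{lemma:acyclic-loop} (loop arcs stay marked, graph stays acyclic) together with a careful reading of when the inner \textbf{while} can mark an out-arc of a vertex that is about to be swallowed. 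I would also need to observe that the contraction is triggered inside iteration $u$, so at that instant the bound to verify is against this same $u$, and that $p(v) \le u$ because $p(v)$ is an ancestor of $u$'s endpoint or was the tail of a just-marked arc—more precisely, $(u,v)$ being marked in iteration $u$ with $v$ a child-or-forward-descendant of $u$ gives $u \le p(v)$ or $u = p(v)$... here I would straighten out the two sub-cases ($u = p(v)$ versus $u$ a proper ancestor of $p(v)$) explicitly, since they determine whether $p(v) = u$ or $p(v) < u$ is what we need, and conclude $p(v) \le u$ in both.

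Once the three incident-arc cases are handled, the inductive step is complete and the lemma follows. The only genuine obstacle is the out-arc case $(v,y) \mapsto (p(v),y)$, because it is the one place where the tail of a marked arc can \emph{increase} under a contraction (from $v$ to $p(v)$, and $p(v) > v$); everything hinges on showing that whenever this happens the current main-loop index $u$ has already advanced to at least $p(v)$, or alternatively that such an out-arc of $v$ simply cannot be marked before $v$ is deleted. I would settle this by noting that out-arcs of $v$ can only be marked during iterations of the \textbf{for} loop with index $u = v$ (since the \textbf{while} loop at index $u$ only marks arcs with tail $u$), and by then showing that all of $v$'s tree and forward out-arcs must be marked—hence $v$ fully processed as a tail—no later than the end of iteration $v$; so if $v$ is contracted in iteration $u$, either $v$ has no marked out-arcs (if $v$ is contracted during iteration $v$ itself or earlier, which the bottom-up order and the contraction trigger must be reconciled to allow) or $u > v$, and in the latter case the new tail $p(v)$ still satisfies $p(v) \le u$ since $p(v)$ was already the tail of the marked arc $(u',v) = (p(v),v)$ or is an ancestor of $u$... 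I would finish by checking this chain of inequalities carefully.
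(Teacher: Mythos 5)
Your proposal is correct and follows essentially the same route as the paper's proof: an induction over the marking and contraction steps, in which the only nontrivial case is a marked out-arc $(v,y)$ whose tail rises to $p(v)$ under contraction, resolved by observing that the triggering arc $(u,v)$ is a tree or forward arc, so $u$ is an ancestor of $p(v)$ and hence $p(v)\le u$ in the bottom-up numbering. Your intermediate claim that $v=u$ is false in general (a forward arc $(u,v)$ with $u>v$ can trigger the contraction after $v$'s out-arcs were marked back in iteration $v$), but you correctly abandon it, and your final two-subcase argument ($u=p(v)$ versus $u$ a proper ancestor of $p(v)$) is exactly the justification the paper's terse ``$p(v)\le u$'' relies on.
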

\begin{proof}
Each arc $(u, v)$ marked in the main loop satisfies the lemma when it is marked, and it continues to do so since vertices are processed in increasing order.  All new arcs into $p(v)$ created by contracting $v$ into $p(v)$ are tree, forward, or loop arcs that satisfy the lemma when added and continue to do so.  If $(v, y)$ is a tree, forward, or loop arc before $v$ is contracted into $p(v)$, then so is new arc $(p(v), y)$, and $p(v) \le u$, so if $(v, y)$ is marked its replacement $(p(v), y)$ satisfies the lemma and continues to do so.
\end{proof}

The next lemma is the main part of the correctness proof.

\begin{lemma}
\label{lemma:invariants}
For any $u$, during iteration $u$ of the main loop, every undeleted vertex $v \not= s$ has $d(v) \ge u$.  At the end of the iteration, every undeleted vertex $v \not= s$ has $d(v) > u$.  Each $v$ contracted into $p(v)$ during iteration $u$ is such that $u$ and $v$ satisfy the hypotheses of Lemmas \ref{lemma:parent-dom}  and \ref{lemma:contract-1}.
\end{lemma}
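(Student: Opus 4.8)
I would prove the three assertions simultaneously by induction on the iteration number $u$.

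\emph{First assertion.} At the start of iteration $u$ this is precisely the second assertion for iteration $u-1$ (vacuous for $u=1$): every undeleted $v\ne s$ has $d(v)>u-1$, i.e. $d(v)\ge u$. Since iteration $u$ only deletes vertices, never creates new ones, and the immediate-dominator function $d$ is a fixed object, the bound $d(v)\ge u$ persists for all undeleted $v\ne s$ throughout iteration $u$.

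\emph{Third assertion.} Suppose $v$ is about to be contracted into $p(v)$ during iteration $u$, immediately after the marking of a tree or forward arc $(u,v)$, at which moment every arc into $v$ is marked. By Lemma \ref{lemma:marked-arc} every arc into $v$ is then a tree, forward, or loop arc with tail at most $u$; hence $v$ has no entering cross arc or back arc, and $(u,v)$ is an entering arc of maximum tail (a loop arc $(v,v)$, if present, has tail $v<u$, because $v$ is a proper $T$-descendant of $u$, and by Lemma \ref{lemma:acyclic-loop} it is marked, so it neither blocks the contraction nor affects the maximum). Now $p(v)$ is undeleted, being the current $T$-parent of $v$; so either $p(v)\ne s$ and the first assertion gives $d(p(v))\ge u$, or $p(v)=s$, in which case a tree or forward arc $(u,v)$ forces $u$ to be an ancestor of $p(v)$ in $T$ or equal to it, hence $u=p(v)=s$, and the conclusion of Lemma \ref{lemma:parent-dom} in that case requires no hypothesis on $d(p(v))$. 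Finally, every descendant $x$ of $u$ in the current tree $T$ is undeleted and, unless $x=u$, distinct from $s$ (and $u$ itself is undeleted throughout iteration $u$, since iteration $u$ marks only arcs \emph{out of} $u$), so the first assertion gives $d(x)\ge u$, the remaining hypothesis of Lemma \ref{lemma:contract-1}. Thus $u$ and $v$ satisfy the hypotheses of Lemmas \ref{lemma:parent-dom} and \ref{lemma:contract-1}; applying Lemma \ref{lemma:parent-dom} also shows the algorithm assigns $d(v)$ (or records it via the $\mathit{same}$-sets) correctly, and Lemma \ref{lemma:contract-1} shows the contraction preserves the dominators of all undeleted vertices.

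\emph{Second assertion (the crux).} By the third assertion every contraction in iteration $u$ satisfies the hypotheses of Lemma \ref{lemma:contract-1}, so at every moment of iteration $u$ the dominators of the current graph are those of $G$; combined with the first assertion, it remains to show that each vertex $v$ with $d(v)=u$ that is undeleted at the start of iteration $u$ is deleted before iteration $u$ ends. The key observation, obtained by iterating the fact that $d(\cdot)$ is a proper $T$-ancestor, is that any undeleted vertex $x\notin\{u,s\}$ dominated by $u$ must have $d(x)=u$: otherwise $d(x)$ would be a proper $D$-descendant of $u$, hence a proper $T$-descendant of $u$, hence numbered below $u$, contradicting $d(x)\ge u$. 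Therefore the undeleted vertices dominated by $u$, other than $u$, are exactly the surviving $D$-children of $u$, and every arc of the current graph into such a vertex has its tail in $\{u\}$ together with the surviving $D$-children of $u$ (an arc $(y,x)$ with $y\ne u$ and $y$ not dominated by $u$ would prolong an $s$-to-$y$ path avoiding $u$ into an $s$-to-$x$ path avoiding $u$). Hence $u$ and its surviving $D$-children form an acyclic sub-flow-graph with start $u$. Applying this at the moment the while loop of iteration $u$ terminates, when no tree or forward arc out of $u$ is unmarked: if some $D$-child of $u$ were still undeleted, let $x^{*}$ be first in a topological order of this sub-flow-graph; then every arc into $x^{*}$ comes from $u$, and, since $u$ is a proper $T$-ancestor of $x^{*}$, is a tree or forward arc, hence marked, so $x^{*}$ would have been contracted when the last arc into it became marked, a contradiction.

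The part I expect to demand the most care is this last argument: one must handle the arcs created by contractions that absorb $T$-descendants of $u$ into $u$ while the while loop runs (some of which may be born already marked), check that the rule ``contract $v$ when the last arc into $v$ becomes marked'' genuinely fires in each needed case, and dispose of the $u=s$ boundary case; the remainder of the proof is essentially an assembly of Lemmas \ref{lemma:parent-dom}, \ref{lemma:contract-1}, \ref{lemma:acyclic-loop}, and \ref{lemma:marked-arc}.
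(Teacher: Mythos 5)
Your proposal is correct and follows essentially the same route as the paper: the same simultaneous induction on iterations, the same use of Lemma \ref{lemma:marked-arc} (together with Lemma \ref{lemma:acyclic-loop} for loop arcs) to verify the hypotheses of Lemmas \ref{lemma:parent-dom} and \ref{lemma:contract-1}, and, for the second invariant, the same idea of using acyclicity to select a vertex with $d(v)=u$ that receives no arc from any other such vertex and arguing that all of its incoming arcs must then be marked tree or forward arcs out of $u$, forcing its contraction. Your phrasing via surviving $D$-children and a topological order, and your explicit handling of the boundary cases ($p(v)=s$, loop arcs, and the precise moment the contraction rule fires), are just more detailed renderings of the paper's argument.
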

\begin{proof}
Since $v \ge 1$ for all $v$, the first invariant holds at the beginning of the first iteration.  If the second invariant holds at the end of an iteration, then the first invariant holds at the beginning of the next. Suppose the first invariant holds before a contraction of $v$ into $p(v)$ in iteration $u$.  By the first invariant, $d(x) \ge u$ for every descendant $x$ of $u$ in $T$.  All arcs into $v$ are marked, so by Lemma \ref{lemma:marked-arc} they are all tree, forward, or loop arcs from vertices no less than $u$.  The last one marked is $(u, v)$, so all the hypotheses of Lemmas \ref{lemma:parent-dom} and \ref{lemma:contract-1} hold.  By Lemma \ref{lemma:contract-1}, the contraction preserves the dominators of the undeleted vertices, so the first invariant holds after the contraction.  By induction on the number of contractions, the first and third invariants hold during the entire iteration.

Consider a time during the iteration when there is a vertex $v$ such that $d(v) = u$.  Among such vertices, choose one, say $v$, such that there is no path from another such vertex to $v$.  Such a $v$ must exist since the current graph is acyclic by Lemma \ref{lemma:acyclic-loop}.  Suppose $v$ has an entering non-loop arc $(x, v)$ with $x \not= u$.  By the choice of $v$, $x = s$ or $d(x) > u$.  In either case there is a path from $s$ to $v$ that avoids $u$, contradicting $d(v) = u$.  We conclude that $u = p(v)$, every arc entering $v$ is a tree arc or loop arc, and $v$ will be deleted during iteration $u$ once all incoming tree arcs are marked.  Hence there can be no such undeleted $v$ at the end of the iteration, making the second invariant true.
\end{proof}

\begin{corollary}
\label{corollary:start-vertex}
At the end of the main loop, only $s$ is undeleted.
\end{corollary}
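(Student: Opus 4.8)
The plan is to apply the second invariant of Lemma \ref{lemma:invariants} to the very last iteration of the main loop. First I would recall that the vertices of $T$ are numbered $1$ through $n$ in a bottom-up order, so that $s$, being the root, receives the number $n$, and that the algorithm contracts only vertices $v \neq s$; hence $s$ is never deleted and is always among the undeleted vertices.

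Next I would observe that the main loop runs iteration $u$ for $u = 1, \dots, n$, so iteration $n$ is the final one. By Lemma \ref{lemma:invariants}, at the end of iteration $n$ every undeleted vertex $v \neq s$ satisfies $d(v) > n$. But every vertex is identified with an integer in $\{1, \dots, n\}$, so $d(v) \le n$ always; this is a contradiction unless there is no such undeleted vertex $v \neq s$. Therefore, at the end of the main loop, $s$ is the only undeleted vertex.

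There is essentially no obstacle here: the corollary is an immediate consequence of the second invariant of Lemma \ref{lemma:invariants} together with the bookkeeping observation that $s$ is numbered $n$ and is never a target of contraction, so the only remaining point to double-check is that the loop indeed terminates after iteration $u = n$, which is immediate from the \textbf{for} loop bound in Algorithm AD.
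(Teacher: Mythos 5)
Your proposal is correct and matches the paper's own argument: the paper likewise derives the corollary from the second invariant of Lemma \ref{lemma:invariants} at $u = n$, noting that $s = n$ and that $d(x) \le n$ for every $x \neq s$, which forces every vertex other than $s$ to have been deleted. No further comment is needed.
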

\begin{proof}
Once the second invariant of Lemma \ref{lemma:invariants} holds for $u = n$, $s = n$ is the only undeleted vertex, since $d(x) \le n$ if $x \not= s$.
\end{proof}

\begin{theorem}
\label{theorem:AD-correct}
Algorithm AD is correct.
\end{theorem}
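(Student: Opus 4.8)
The plan is to show that Algorithm AD terminates and that on termination it has assigned the correct immediate dominator to every vertex $v\neq s$, by assembling the lemmas already proved and adding one bookkeeping invariant about the $\mathit{same}$-sets. Termination is immediate: the main loop is a \textbf{for}-loop over $u=1,\ldots,n$, each pass through the inner \textbf{while}-loop marks a previously unmarked arc, and contractions never increase the number of arcs (they replace arcs and turn some into loop arcs), so at most $m$ arcs are ever marked.

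For correctness I would first record that the structure needed by the earlier lemmas persists throughout the loop. By Lemma~\ref{lemma:acyclic-contract} the current graph stays acyclic and by Lemma~\ref{lemma:acyclic-loop} every loop arc is marked; contracting $v$ into $p(v)$ turns the current spanning tree into a spanning tree $T'$ (as described before Lemma~\ref{lemma:acyclic-contract}), and the numbering stays bottom-up on the surviving vertices, since splicing the children of $v$ up to $p(v)$ never reverses $w<p'(w)$. By Lemma~\ref{lemma:invariants}, for every vertex $v$ the algorithm contracts during iteration $u$, the pair $u,v$ satisfies the hypotheses of Lemmas~\ref{lemma:parent-dom} and~\ref{lemma:contract-1}: here $(u,v)$ is the last arc into $v$ to be marked, and by Lemma~\ref{lemma:marked-arc} every arc into $v$ has tail $\le u$, so $u$ is a maximum tail. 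Applying Lemma~\ref{lemma:contract-1} inductively over the sequence of contractions, the dominators of the undeleted vertices in the current graph always coincide with their dominators in $G$; hence any immediate-dominator value the algorithm produces, computed via Lemma~\ref{lemma:parent-dom} in the current graph, is the true value in $G$.

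The remaining work is to verify that the $\mathit{same}$-sets propagate these values correctly, which I would do by induction on the steps of the main loop, maintaining the invariant: the sets $\mathit{same}(x)$ ranging over undeleted vertices $x$, together with $\{s\}$ and the set $\mathcal{D}$ of vertices whose $d$-value has already been assigned, form a partition of $V$; each undeleted $x$ lies in $\mathit{same}(x)$; and all vertices in a given $\mathit{same}(x)$ have the same immediate dominator as $x$ in $G$. This holds after initialization ($\mathit{same}(u)=\{u\}$, $\mathcal{D}=\emptyset$). When $v$ is contracted with $u\neq p(v)$, Lemma~\ref{lemma:parent-dom} gives $d(v)=d(p(v))$, so every vertex of $\mathit{same}(v)$ shares $p(v)$'s immediate dominator and merging $\mathit{same}(v)$ into $\mathit{same}(p(v))$ preserves the invariant; when $v$ is contracted with $u=p(v)$, Lemma~\ref{lemma:parent-dom} gives $d(v)=u$, so assigning $d(w)\leftarrow u$ for all $w\in\mathit{same}(v)$ is correct and moves exactly those vertices into $\mathcal{D}$. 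Finally, by Corollary~\ref{corollary:start-vertex} every vertex other than $s$ is deleted, and deletion happens only by contraction into a parent; I would check that each deleted vertex eventually gets its value by observing that its current ``owner'' (the undeleted $x$ with $v\in\mathit{same}(x)$) strictly increases in number whenever that owner is contracted without its value being assigned, and can never become $s$: if $p(x)=s$ then, since the root has no proper ancestors, no forward arc enters $x$, and as $x$ has no entering cross or back arc every arc into $x$ comes from $s$, forcing $u=p(x)$ and triggering the assignment branch. Hence at termination $\mathcal{D}=V\setminus\{s\}$ with every value correct, which is exactly the claim.

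I expect the third paragraph to be the main obstacle: choosing the $\mathit{same}$-set invariant so that it is simultaneously strong enough to conclude correctness and manifestly preserved by both branches of the inner conditional, and in particular arguing that a vertex absorbed into another's $\mathit{same}$-set by an early merge does eventually get its $d$-value. Everything else is a direct appeal to Lemmas~\ref{lemma:parent-dom}, \ref{lemma:contract-1}, \ref{lemma:invariants} and Corollary~\ref{corollary:start-vertex}.
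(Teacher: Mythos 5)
Your proposal is correct and follows essentially the same route as the paper, which simply cites Lemmas~\ref{lemma:parent-dom}, \ref{lemma:contract-1}, \ref{lemma:invariants} and Corollary~\ref{corollary:start-vertex}; your third paragraph spells out the $\mathit{same}$-set bookkeeping that the paper leaves implicit, and your argument there (in particular that a vertex with current parent $s$ must take the assignment branch, so nothing is absorbed into $\mathit{same}(s)$) is sound.
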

\begin{proof}
The theorem is immediate from Lemmas \ref{lemma:parent-dom}, \ref{lemma:contract-1}, \ref{lemma:invariants}, and Corollary \ref{corollary:start-vertex}.
\end{proof}

Now we develop an efficient implementation of Algorithm AD.  The first step is to observe that keeping track of unmarked arcs suffices.  Instead of marking arcs, we delete them.  When a vertex $v$ loses its last incoming arc, we contract it into $p(v)$.  This contraction adds no new arcs into $p(v)$, and in particular creates no loops.

\begin{theorem}
\label{theorem:AD-correct-2}
Algorithm AD remains correct if each contraction of a vertex $v$ into its parent is preceded by deleting all arcs into $v$.
\end{theorem}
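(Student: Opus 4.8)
The plan is to show that the modified procedure — call it Algorithm AD$'$, obtained from Algorithm AD by inserting, immediately before each contraction of a vertex $v$ into $p(v)$, the step ``delete all arcs into $v$'' — performs exactly the same contractions, in the same order, and makes the same updates to $d$ and to the $\mathit{same}$-sets as AD. Correctness then follows at once from Theorem~\ref{theorem:AD-correct}.

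The observation on which everything rests is that marked arcs are \emph{inert}: the control flow of Algorithm AD depends only on the current spanning tree (its parent function $p$) and the current set of \emph{unmarked} arcs. Indeed, the while-loop test ``some tree or forward arc $(u,v)$ is unmarked'' inspects only unmarked arcs and $p$; the test ``all arcs into $v$ are marked'' is literally the same as ``no unmarked arc enters $v$''; and the choice between setting $d(w)\leftarrow u$ for $w\in\mathit{same}(v)$ and merging $\mathit{same}$-sets depends only on whether $u=p(v)$. Next, at the moment AD$'$ is about to contract $v$, every arc into $v$ is marked, since the contraction is triggered precisely by marking the last unmarked arc entering $v$; by Lemma~\ref{lemma:marked-arc} each such arc is a tree, forward, or loop arc. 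Hence the extra step of AD$'$ only ever deletes marked arcs, so AD and AD$'$ always have the same set of unmarked arcs and differ only in that AD$'$ is missing some marked arcs.

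I would then run a lockstep induction maintaining the invariant that, between elementary steps, AD and AD$'$ have (i) the same set of undeleted vertices, (ii) the same spanning-tree parent function, and (iii) the same set of unmarked arcs. The invariant holds initially and is trivially preserved by a ``mark $(u,v)$'' step. For a contraction of $v$ into $p(v)$ (treating ``delete arcs into $v$, then contract'' as the single AD$'$ step matching AD's contraction): the new arcs AD creates are the redirections $(u',p(v))$ of arcs $(u',v)$ — all marked, so their redirections are marked — the redirections $(p(v),y)$ of arcs $(v,y)$ — unmarked iff $(v,y)$ was — and new loops at $p(v)$. By Lemma~\ref{lemma:acyclic-loop} the current graph is acyclic, so there is no arc $(v,p(v))$ and every loop is already marked; hence every new loop AD creates is marked. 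Therefore AD's unmarked arcs after the contraction are exactly the old unmarked arcs with each unmarked $(v,y)$ replaced by $(p(v),y)$, and this is precisely what AD$'$ holds afterward, since AD$'$ first discards the (marked) arcs into $v$ and then redirects only $v$'s outgoing arcs. So (iii) is preserved, and (i), (ii) are immediate. By the inertness observation the two algorithms then make identical moves, so AD$'$ deletes every vertex but $s$ (Corollary~\ref{corollary:start-vertex}) and outputs the same parent function $d$ as AD; by Theorem~\ref{theorem:AD-correct} this is the dominator tree $D$.

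The only delicate point is clause (iii) across a contraction: one must notice that AD's contraction can manufacture new \emph{marked} arcs into $p(v)$ and new loops, yet never a new \emph{unmarked} arc into $p(v)$ — this is exactly where acyclicity (Lemma~\ref{lemma:acyclic-loop}) is used, to rule out an unmarked back arc $(v,p(v))$ that would become an unmarked loop — so that the deletion performed by AD$'$ cancels precisely the marked baggage AD would otherwise carry along. Everything else is routine bookkeeping once marked arcs are recognized as inert.
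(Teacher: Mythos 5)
Your proof is correct and takes essentially the same route as the paper: the paper's (much terser) proof is exactly this coupling argument — rerun the original algorithm making the same marking choices and check by induction that the two runs stay in lockstep — with your ``inertness of marked arcs'' observation and the loop-arc bookkeeping via Lemmas~\ref{lemma:acyclic-contract} and~\ref{lemma:acyclic-loop} being the details the paper leaves implicit. No gaps.
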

\begin{proof}
Consider a run of the modified version of Algorithm AD.  Rerun the original version of the algorithm, choosing the same spanning tree and the same vertex numbering, and making the same choices of arcs to mark as in the run of the modified algorithm.  An induction on the number of arcs marked shows that this is always possible.  The two runs compute the same immediate dominators.
\end{proof}

In algorithm AD with arcs deleted as described, the number of arcs never increases, and only the first end of an arc changes.  Each new first end is a proper ancestor in the original spanning tree of the old first end.  To keep track of the effect of contractions, we use a \emph{disjoint set data structure} \cite{dsu:tarjan}.  Such a data structure maintains a collection of disjoint sets, each with a name, under three operations:

\begin{description}
\item[\emph{make-set}$(x)$:] Create a new set $\{x\}$ with name $x$.  Element $x$ must be in no existing set.
\item[\emph{find}$(x)$:] Return the name of the set containing element $x$.
\item[\emph{unite}$(x,y)$:] Unite the sets containing elements $x$ and $y$ and give the new set the name of the old set containing $x$.
\end{description}

We maintain the original arc ends and the parent function $p$ of the original spanning tree, and use the disjoint set data structure to map these to the current graph and the current tree.  To initialize the sets, we perform \emph{make-set}$(v)$ for every vertex $v$. When a vertex $v$ is contracted into its current parent, we perform \emph{unite}$(p(v), v)$.   If $(x, y)$ is an original arc, the corresponding current arc is $(\emph{find}(x), y)$.  If $p$ is the original parent function, the current parent of undeleted vertex $v \not= s$ is \emph{find}$(p(v))$.

To determine when to do contractions, we maintain an integer \emph{total}$(v)$ for each vertex $v \not= s$, equal to the number of undeleted arcs into $v$. When \emph{total}$(v)$ reaches zero, we contract $v$ into its current parent.

The last thing we need is a way to keep track of tree and forward arcs.  In the main loop we add an arc to the graph only once it is guaranteed to become a tree or forward arc by the time it is a candidate for marking.  For any two vertices $x$ and $y$, let \emph{nca}$(x, y)$ be their nearest common ancestor in the original spanning tree $T$.  An original cross arc $(x, y)$ cannot become a tree or forward arc until $u$ in the current iteration of the main loop is at least \emph{nca}$(x, y)$.  An original tree or forward arc $(x, y)$ has \emph{nca}$(x, y) = x$.  This means that the algorithm can only mark the current arc corresponding to $(x, y)$ once $u \ge \mathit{nca}(x, y)$.  We add $(x, y)$ to the graph at the beginning of iteration $u$.  Specifically, we maintain, for each undeleted vertex $x$, a bag (multiset) \emph{out}$(x)$ of vertices $v$ such that $(x, v)$ is a current undeleted arc with \emph{nca}$(x, v) \le u$, where $u$ is the current iteration of the main loop.  This bag is initially empty.  (We denote an empty bag by ``$\left [ \  \right ]$''.)  At the beginning of iteration $u$ of the main loop, for each original arc $(x, v)$ such that \emph{nca}$(x, v) = u$, we add $v$ to $\mathit{out}(\mathit{find}(x))$.  When contracting a vertex $v$ into its current parent $x$, we replace $\mathit{out}(x)$ by $\mathit{out}(x) \cup \mathit{out}(v)$.  This guarantees that during iteration $u$ of the main loop, $\mathit{out}(u)$ will contain a copy of vertex $v$ for each undeleted tree or forward arc $(u, v)$, and no other vertices.

Combining these ideas produces Version 2 of Algorithm AD.

\begin{figure}[h]
\begin{center}
\fbox{
\begin{minipage}[h]{16cm}
\begin{center}
\textbf{Algorithm AD: Find Dominators in an Acyclic Graph, Version 2}
\end{center}
\begin{description}\setlength{\leftmargin}{10pt} \setlength{\itemsep}{0pt} \parskip0pt \parsep0pt
\item[Initialization:] Find a spanning tree $T$ of the input graph $G$, let $p$ be the parent function of $T$, number the vertices of $T$ in a bottom-up order, and identify vertices by number.  Let $A$ be the arc set of $G$.  Compute $\emph{total}(u) \leftarrow | \{ (x, u) \in A \} |$ and $\emph{arcs}(u) \leftarrow \{ (x, y) \in A \ | \ \emph{nca}(x, y) = u \}$ for each vertex $u$.\\
\item[Main Loop:]
\textbf{for} $u = 1$ \textbf{until} $n$ \textbf{do} \hfill
\item[] \hspace{12.5ex}  $\{$ \ \ $\emph{out}(u) \leftarrow [ \ ]$; \hfill
\item[] \hspace{16ex} $\mathit{make\text{-}set}(u)$; \hfill
\item[] \hspace{16ex} $\mathit{same}(u) \leftarrow \{ u \}$; \hfill
\item[] \hspace{16ex} \textbf{for} $(x, y) \in \mathit{arcs}(u)$ \textbf{do} add $y$ to $\emph{out}(\emph{find}(x))$; \hfill
\item[] \hspace{16ex} \textbf{while} $\emph{out}(u) \not= [ \ ]$ \textbf{do} \hfill
\item[] \hspace{16ex} $\{$ \ \ delete some $v$ from \emph{out}$(u)$; $\emph{total}(v) \leftarrow \emph{total}(v) - 1$; \hfill
\item[] \hspace{19.5ex} \textbf{if} $\emph{total}(v) = 0$ \textbf{then} \hfill
\item[] \hspace{19.5ex} $\{$ \ \ $x \leftarrow \emph{find}(p(v))$;
\item[] \hspace{23ex} \textbf{if} $u = x$ \textbf{then} \textbf{for} $w \in \emph{same}(v)$ \textbf{do} $d(w) \leftarrow u$
\item[] \hspace{23ex} \textbf{else} $\emph{same}(x) \leftarrow \emph{same}(x) \cup \emph{same}(v)$; \hfill
\item[] \hspace{23ex} $\emph{unite}(p(v), v)$;
\item[] \hspace{23ex} $\emph{out}(x) \leftarrow \emph{out}(x) \cup \emph{out}(v)$ \ \ $\}$ \ \ $\}$ \ \ $\}$\\
\end{description}
\end{minipage}
}
\end{center}
\end{figure}

In the main loop, the \emph{unite} operation and the union of bags implement the contraction of $v$ into its current parent $x$.  The first argument of the \emph{unite} could be $x$, but by making it $p(v)$ we make the sequence of set operations an instance of static tree set union, in which the set of \emph{unite} operations is known in advance but their order is determined on-line.  The main loop does exactly one operation $\emph{unite}(p(v), v)$ for each vertex $v \not= s$, but their order depends on the non-tree arcs.

\begin{lemma}
\label{lemma:acyclic-pass2}
Suppose $G$ is acyclic.  During the \textbf{while} loop in iteration $u$ of the main loop, for each original arc $(x, v)$ such that $\mathit{nca}(x, v) \le u$ and whose corresponding current arc $(\mathit{find}(x), v)$ is undeleted, $\mathit{out}(\mathit{find}(x))$ contains one copy of $v$, and such vertices are the only vertices in $\mathit{out}$-bags.
\end{lemma}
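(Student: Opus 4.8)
The plan is to prove Lemma \ref{lemma:acyclic-pass2} by induction, maintaining an invariant that pins down exactly the multiset of vertex-copies stored across all \emph{out}-bags at every point of the execution. Concretely, I would show the following invariant holds at the start of each iteration $u$ of the \textbf{while} loop and is restored after each iteration of that loop: the total collection of copies in $\bigcup_x \mathit{out}(x)$ is in bijection with the set of original arcs $(x,v)$ such that (i) $\mathit{nca}(x,v)\le u$, (ii) the current arc $(\mathit{find}(x),v)$ is undeleted, and (iii) the copy of $v$ sits precisely in $\mathit{out}(\mathit{find}(x))$. The counting claim ($\mathit{out}(\mathit{find}(x))$ contains exactly one copy of $v$ for each such arc) then follows, and the ``only such vertices'' part is the forward direction of the bijection.

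First I would handle the base case: before iteration $u=1$, no arcs have $\mathit{nca}(x,v)<1$ and no \emph{out}-bag has been populated, so the invariant holds vacuously. Then, at the start of iteration $u$ of the main loop, the three initialization lines execute; the key one is \textbf{for} $(x,y)\in\mathit{arcs}(u)$ \textbf{do} add $y$ to $\mathit{out}(\mathit{find}(x))$. Here $\mathit{arcs}(u) = \{(x,y)\in A \mid \mathit{nca}(x,y)=u\}$, so this step adds exactly the copies needed to go from ``$\mathit{nca}\le u-1$'' to ``$\mathit{nca}\le u$,'' placing each new copy of $y$ in $\mathit{out}(\mathit{find}(x))$, which is the correct bag by (iii). (No such arc could have been deleted before iteration $u$, because by Lemma \ref{lemma:marked-arc}/\ref{lemma:invariants} reasoning a deletion of the arc into $v$ only happens when $v$ is contracted, which happens in iteration $\mathit{find}(p(v))\le u-1$; but an arc $(x,y)$ with $\mathit{nca}(x,y)=u\ge \mathit{find}(p(y))$ would force... — actually the cleaner argument is simply that $v$ is not yet deleted at the start of iteration $u$ whenever $v=y$ still appears as an undeleted vertex, and if $v$ were already deleted then $(x,v)$ would have been deleted with it, so condition (ii) would fail and we would not want the copy; I would phrase the invariant so deleted arcs are simply excluded from both sides, making this consistent automatically.) So after the initialization block, the invariant for threshold $u$ holds.

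Next comes the inductive step across one iteration of the \textbf{while} loop: we delete some $v$ from $\mathit{out}(u)$, which is legitimate only if $\mathit{out}(u)\ne[\ ]$, and decrement $\mathit{total}(v)$. Removing this copy corresponds, via the bijection, to deleting the current arc $(u,v)$ — and since $u = \mathit{find}(x)$ for whatever original $x$ this copy came from, this is exactly deleting one current arc into $v$, consistent with Theorem \ref{theorem:AD-correct-2}'s ``delete all arcs into $v$ before contracting'' policy. If $\mathit{total}(v)$ does not hit zero, nothing else changes and the invariant is restored with one fewer arc on both sides. If $\mathit{total}(v)=0$, then $v$ has no more undeleted incoming arcs, so by the invariant no \emph{out}-bag contains a copy of $v$; we then contract $v$ into $x=\mathit{find}(p(v))$ via $\mathit{unite}(p(v),v)$ and set $\mathit{out}(x)\leftarrow\mathit{out}(x)\cup\mathit{out}(v)$. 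I must check two things: (a) every current arc $(v,z)$ out of $v$ becomes $(x,z)$, matching the bag-merge — a copy of $z$ formerly in $\mathit{out}(v)$ now lives in $\mathit{out}(x)=\mathit{out}(\mathit{find}(\text{original first end}))$, which is correct because after the \emph{unite} the \emph{find} of that original first end is now $x$; (b) no \emph{loop} arc is created, i.e.\ there is no current arc $(v,x)$ or $(x,v)$ whose first end would need to go into some \emph{out}-bag as a loop — for $(x,v)$ this is ruled out since $\mathit{total}(v)=0$, and for $(v,x)$ I invoke Lemma \ref{lemma:acyclic-contract} (acyclicity forbids an arc $(v,p(v))$, hence, tracking through \emph{find}, forbids a current arc from $v$ to $x$); also any such arc would only be in $\mathit{out}$ if its \emph{nca} were $\le u$, and I would note $\mathit{nca}(v\text{-side},x) $ considerations, but the acyclicity route is cleanest. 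Condition (iii) — each copy of $z$ sitting in $\mathit{out}(\mathit{find}(\cdot))$ of its current first end — is preserved precisely because $\mathit{find}$ changes for exactly the elements of the old set named $v$, and those copies move from $\mathit{out}(v)$ to $\mathit{out}(x)$ in lockstep.

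The main obstacle I anticipate is the bookkeeping around condition (ii) and deleted arcs: I need the invariant stated so that an original arc drops out of consideration exactly when its current arc is deleted, and I need to confirm that the \textbf{while}-loop's ``delete some $v$ from $\mathit{out}(u)$'' really is deleting a current arc \emph{into} $v$ and not, say, double-counting when there are multiple original arcs collapsing to the same current arc (there aren't — multiplicities are tracked faithfully by the bag, which is why \emph{out} must be a multiset). A secondary subtlety is ensuring the \textbf{while} condition $\mathit{out}(u)\ne[\ ]$ together with the invariant implies there is still an undeleted tree or forward arc $(u,v)$ to process; this is where I use that by Lemma \ref{lemma:marked-arc} every current arc with first end $\le u$ that is still around and whose \emph{nca} is $\le u$ must be a tree or forward arc (a cross or back arc into the picture would contradict acyclicity or the $\mathit{nca}$ bound), so copies in $\mathit{out}(u)$ correspond to exactly the undeleted tree/forward arcs out of $u$, giving the promised characterization. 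Once the invariant is established, the lemma statement is just the specialization of the invariant to an arbitrary point during the \textbf{while} loop of iteration $u$.
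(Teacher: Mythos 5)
Your proof is correct and takes essentially the same approach as the paper's: an induction on the number of steps of the main loop, with the lemma statement itself serving as the invariant, where initialization via $\mathit{arcs}(u)$ establishes it for threshold $u$, deletions from \emph{out}-bags correspond to arc deletions, and the $\mathit{unite}$ plus bag-union steps preserve it under contraction. The paper states this in four sentences; you supply the detailed verification (including the loop-arc and multiplicity bookkeeping), but the underlying argument is identical.
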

\begin{proof}
The proof is by induction on the number of steps in the main loop.  The inner \textbf{for} loop in iteration $1$ establishes the invariant for iteration $1$.  If the invariant holds at the end of iteration $u$, then the inner \textbf{for} loop in iteration $u+1$ establishes it for iteration $u+1$.  Deletions of vertices from \emph{out}-bags correspond to deletions of the corresponding arcs, so such deletions preserve the invariant.  The \emph{unite} operations and bag unions done to implement contractions also preserve the invariant.
\end{proof}

\begin{corollary}
\label{corollary:acyclic-pass2}
During the \textbf{while} loop in iteration $u$ of the main loop, for each undeleted tree or forward arc $(u, v)$, there is a copy of $v$ in $\mathit{out}(u)$, and such vertices are the only vertices in $\mathit{out}(u)$.
\end{corollary}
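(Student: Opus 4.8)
The plan is to obtain the Corollary as an immediate specialization of Lemma~\ref{lemma:acyclic-pass2}. That lemma already describes the contents of the bag $\mathit{out}(u)$ during iteration $u$: it holds exactly one copy of $v$ for each original arc $(x,v)$ with $\mathit{nca}(x,v)\le u$, with $\mathit{find}(x)=u$, and with current arc $(\mathit{find}(x),v)$ undeleted, and it holds nothing else. So the only thing left to verify is that this description coincides with ``one copy of $v$ for each undeleted current tree or forward arc $(u,v)$.'' Equivalently, I must show that an undeleted current arc with first end $u$ is a tree or forward arc precisely when the underlying original arc $(x,v)$ (which has $\mathit{find}(x)=u$) satisfies $\mathit{nca}(x,v)\le u$.

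First I would isolate two elementary facts. (i) Under a bottom-up numbering every ancestor of a vertex $y$ in $T$ has number $\ge y$, and numbers strictly increase along any root-to-$y$ path; hence the vertices of $u$'s subtree in $T$ with number $\le u$ are exactly the descendants of $u$ (including $u$). (ii) During iteration $u$, a disjoint set with name $u$ is a contracted subtree of $T$ rooted at $u$, with $u$ its unique undeleted vertex; in particular $\mathit{find}(x)=u$ exactly when $x$ lies in the subtree of $u$ in $T$ (with the vertices strictly between $x$ and $u$, and $x$ itself if $x\ne u$, already deleted). Fact (i) is just the definition of bottom-up order. Fact (ii) is a routine induction on the contractions, using that each contraction merges a vertex only into its current parent and names the union after the parent's set, together with the observation — from reachability and acyclicity (Lemma~\ref{lemma:acyclic-loop}, which forbids a loop arc $(u,u)$ that alone could decrement $\mathit{total}(u)$ during iteration $u$) — that $u$ itself is not deleted before or during iteration $u$.

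With (i) and (ii) in hand the equivalence is short. If $\mathit{find}(x)=u$ and $\mathit{nca}(x,v)\le u$, then by (ii) $u$ is an ancestor of $x$ in $T$, so $\mathit{nca}(x,v)$, being an ancestor of $x$ of number $\le u$, lies on the tree path from $x$ up to $u$ and is therefore a descendant of $u$; hence its descendant $v$ is a descendant of $u$ in $T$, and indeed a proper one since acyclicity forbids a loop arc $(u,u)$. Moreover $v$ is undeleted, because Version~2 deletes every arc into a vertex before contracting it; and $u$ is undeleted; so $u$ is a proper ancestor of $v$ in the current spanning tree, i.e. the current arc $(u,v)$ is a tree or forward arc. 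Conversely, if $(u,v)$ is an undeleted current tree or forward arc, then $v$ is a proper descendant of $u$ in the current tree and hence in $T$, while $x$ is a descendant of $u$ in $T$ by (ii); so both $x$ and $v$ lie in $u$'s subtree and $\mathit{nca}(x,v)\le u$ by (i). The ``main obstacle,'' such as it is, lies entirely in nailing down fact (ii) and the claim that $u$ survives its own iteration; once those are stated, the Corollary is a direct reading of Lemma~\ref{lemma:acyclic-pass2}.
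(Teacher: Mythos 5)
Your proof is correct and takes essentially the same route as the paper's: both reduce the corollary to Lemma~\ref{lemma:acyclic-pass2} by showing that, for an original arc $(x,v)$ with $\mathit{find}(x)=u$, the condition $\mathit{nca}(x,v)\le u$ is equivalent to the current arc $(u,v)$ being a tree or forward arc, using bottom-up numbering in one direction and the fact that $x$ and $\mathit{nca}(x,v)$ must be descendants of $u$ in the other. The paper compresses the facts you label (i) and (ii) into single sentences, but the underlying argument is identical.
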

\begin{proof}
If $(u, v)$ is a current undeleted tree or forward arc, it corresponds to an original arc $(x, v)$ such that $x$ and $v$, and hence $\emph{nca}(x, v)$, are descendants of $u$ in the original spanning tree $T$.  Since the vertex order is bottom-up, $\emph{nca}(x, v) \le u$.  By Lemma \ref{lemma:acyclic-pass2}, there is a copy of $v$ in $\emph{out}(u)$ corresponding to $(x, v)$.  Conversely, if $v$ is in $\emph{out}(u)$ during the \textbf{while} loop in iteration $u$, then by Lemma \ref{lemma:acyclic-pass2} there is an undeleted current arc $(u, v)$ corresponding to an original arc $(x, v)$ such that $\emph{nca}(x, v) \le u$.  Since the vertex numbering is bottom-up, $v$ must be a descendant of $u$.  Thus $(u, v)$ is a tree or forward arc.
\end{proof}

\begin{theorem}
\label{theorem:AD2-correct}
Version 2 of Algorithm AD is correct.
\end{theorem}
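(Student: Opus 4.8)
The plan is to show that Version~2 faithfully simulates the arc-deleting variant of Algorithm AD whose correctness is guaranteed by Theorem~\ref{theorem:AD-correct-2}. Concretely, I would exhibit an invariant, maintained by induction on the steps of the main loop, asserting that at every point the disjoint-set structure, the counters $\mathit{total}$, the $\mathit{out}$-bags, and the $\mathit{same}$-sets encode exactly the state of the modified algorithm: the current graph has vertex set $\{\mathit{find}(x) : x \text{ undeleted}\}$; the current parent of an undeleted $v \ne s$ is $\mathit{find}(p(v))$; for every original arc $(x,y)$ whose replacement is still present, that replacement is $(\mathit{find}(x), y)$; $\mathit{total}(v)$ equals the number of undeleted arcs into $v$; and $\mathit{same}(v)$ is the set of undeleted vertices known to share $v$'s immediate dominator.

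First I would verify the data-structure bookkeeping. The $\mathit{make\text{-}set}$ at the start of iteration $u$ followed by $\mathit{unite}(p(v),v)$ at each contraction is exactly the sequence of set operations that implements contracting $v$ into $p(v)$, so $\mathit{find}$ always reports the current vertex into which an original vertex has collapsed; here we use that each contraction merges a vertex into (the current representative of) its tree parent, and that vertices are numbered bottom-up so that $\mathit{make\text{-}set}(u)$ precedes any $\mathit{unite}(u,\cdot)$. The decrement of $\mathit{total}(v)$ whenever a copy of $v$ is removed from an $\mathit{out}$-bag, together with Lemma~\ref{lemma:acyclic-pass2}, shows that $\mathit{total}(v)$ counts exactly the undeleted arcs into $v$, hence that $\mathit{total}(v)=0$ triggers a contraction precisely when $v$ has lost its last incoming arc --- matching the deletion rule of the modified Version~1.

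Next I would use Corollary~\ref{corollary:acyclic-pass2} to match the order of operations. During the \textbf{while} loop in iteration $u$, $\mathit{out}(u)$ contains one copy of $v$ for each undeleted tree or forward arc $(u,v)$ and nothing else; removing a copy of $v$ corresponds to deleting (the Version~1 analogue of ``marking'') one such arc. Thus each execution of the \textbf{while} body is a legal step of the modified algorithm, and the bag-union $\mathit{out}(x) \leftarrow \mathit{out}(x) \cup \mathit{out}(v)$ at a contraction carries over exactly the arcs out of $v$ that become arcs out of $x$, so the invariant on $\mathit{out}$-bags is preserved (this is the content of Lemma~\ref{lemma:acyclic-pass2}). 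When $\mathit{total}(v)$ reaches $0$ and $x = \mathit{find}(p(v))$, the test $u = x$ is exactly the test ``$u$ equals the current parent of $v$'', which by Lemma~\ref{lemma:parent-dom} distinguishes the case $d(w)=u$ for all $w \in \mathit{same}(v)$ from the case where $v$ (and all of $\mathit{same}(v)$) share the immediate dominator of the current parent, correctly handled by merging $\mathit{same}(v)$ into $\mathit{same}(x)$ for later resolution.

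Finally, since the simulated run is a run of the arc-deleting variant of Algorithm AD, by Theorem~\ref{theorem:AD-correct-2} it deletes every vertex except $s$ and assigns the correct immediate dominator to every vertex; the invariant then gives that Version~2 assigns the same values, proving correctness. The main obstacle I expect is not any single deep step but the careful synchronization of the three simultaneously evolving encodings --- the $\mathit{find}$ map, the $\mathit{total}$ counters, and the $\mathit{out}$-bags --- across a contraction, i.e.\ checking that a contraction in Version~2 and the corresponding contraction in the modified Version~1 leave all of these in corresponding states. Lemma~\ref{lemma:acyclic-pass2} and Corollary~\ref{corollary:acyclic-pass2} already isolate the hardest part (the $\mathit{out}$-bag invariant), so the remaining work is routine but must be done with care.
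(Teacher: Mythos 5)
Your proposal is correct and follows the same route as the paper: the paper's own proof simply declares the theorem immediate from Theorem \ref{theorem:AD-correct-2} and Corollary \ref{corollary:acyclic-pass2}, and your argument is exactly that reduction with the simulation invariant (the $\mathit{find}$ map, $\mathit{total}$ counters, $\mathit{out}$-bags, and $\mathit{same}$-sets encoding the state of the arc-deleting variant) spelled out explicitly. The extra bookkeeping detail you supply is consistent with what the paper leaves implicit, so nothing is missing or extraneous.
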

\begin{proof}
The Theorem is immediate from Theorem \ref{theorem:AD-correct-2} and Corollary \ref{corollary:acyclic-pass2}.
\end{proof}

To implement version 2 of Algorithm AD, we represent each set $\mathit{same}(v)$ and each bag $\mathit{out}(v)$ by a singly-linked circular list.  The circular linking allows unions to be done in $O(1)$ time. Since each vertex is in only one $\mathit{same}$-set, the lists representing these sets can be endogenous. The lists representing $\mathit{out}$-bags must be exogenous, since a vertex can be in several bags, or even in the same bag several times.  Alternatively, the $\mathit{out}$-bags can be represented by endogenous lists of the corresponding arc sets. For a discussion of endogenous and exogenous lists, see \cite{algorithms:tarjan}.
Each vertex and arc is examined $O(1)$ times.  Not counting the nearest common ancestor computations in the initialization, the running time of the algorithm is $O(m)$ plus the time for $n - 1$ \emph{unite} operations and at most $m + n$ \emph{find} operations.  If \emph{unite} and \emph{find} are implemented using compressed trees with appropriate heuristics \cite{dsu:tarjan}, the total time for the \emph{unite} and \emph{find} operations is $O(m \alpha(n, m/n))$.  Furthermore the set of \emph{unite} operations is known in advance, although their sequence is not.  This makes the set operations an instance of the static tree disjoint set union problem, which is solvable in $O(m)$ time on a RAM \cite{dsu:gt}.  The computation of nearest common ancestors can also be done by solving an instance of the static tree disjoint set union problem \cite{lca:ahu,dsu:gt}.  We conclude that the overall running time of Version 2 of Algorithm AD is $O(m \alpha(n, m/n))$, or $O(m)$ on a RAM, depending on the implementation. The only data structure needed other than simple lists and maps is one to maintain disjoint sets.

Our final refinement of Algorithm AD, Version 3, eliminates the need to compute nearest common ancestors.  We accomplish this by choosing the spanning tree and vertex order carefully.  Specifically, we choose a \emph{depth-first spanning tree} and a corresponding \emph{reverse preorder} \cite{dfs:t}.  Such a tree and order have the property that every tree or forward arc $(v, w)$ is has $v > w$ and every cross or back arc $(v, w)$ has $v < w$ \cite{dfs:t}.  We insert vertices into $\mathit{out}$ bags as follows.  At the beginning of iteration $u$ of the main loop, for every arc $(x, u)$, we insert $u$ into $\mathit{out}(\mathit{find}(x))$.  If $(x, u)$ is a tree or forward arc, $x$ has not yet been processed in main loop, so $\mathit{find}(x) = x$, and when $x$ is processed later, $u$ will be in $\mathit{out}(x)$ as desired.  If $(x, u)$ is a cross arc, $x$ has already been processed in the main loop.  Vertex $u$ will remain in $\mathit{out}(\mathit{find}(x))$ (which changes as $\mathit{find}(x)$ changes) until $\mathit{nca}(x, u)$ is processed, at which time Version 2 of Algorithm AD would add $u$ to $\mathit{out}(\mathit{find}(x))$.  Thus, even though the new version adds vertices to \emph{out} bags sooner than Version 2, these early additions do not change the candidates for arc deletions, making Version 3 correct.

\begin{figure}[h]
\begin{center}
\fbox{
\begin{minipage}[h]{16cm}
\begin{center}
\textbf{Algorithm AD: Find Dominators in an Acyclic Graph, Version 3}
\end{center}
\begin{description}\setlength{\leftmargin}{10pt} \setlength{\itemsep}{0pt} \parskip0pt \parsep0pt
\item[Initialization:] Do a depth-first search of $G$ to generate a depth-first spanning tree $T$ of $G$ with parent function $p$ and to number the vertices in reverse preorder with respect to the search.  Identify vertices by number.  Let $A$ be the arc set of $G$.\\
\item[Main Loop:]
\textbf{for} $u = 1$ \textbf{until} $n$ \textbf{do} \hfill
\item[] \hspace{12.5ex}  $\{$ \ \ $\mathit{total}(u) \leftarrow 0$; \hfill
\item[] \hspace{16ex} $\mathit{out}(u) \leftarrow [ \ ]$; \hfill
\item[] \hspace{16ex} $\mathit{make\text{-}set}(u)$; \hfill
\item[] \hspace{16ex} $\mathit{same}(u) \leftarrow \{v\}$; \hfill
\item[] \hspace{16ex} \textbf{for} $(x, u) \in A$ \textbf{do}\hfill
\item[] \hspace{16ex} $\{$ \ \ $\mathit{total}(u) \leftarrow \mathit{total}(u) + 1$; add $u$ to $\mathit{out}(\mathit{find}(x))$ \ \ $\}$ \hfill
\item[] \hspace{16ex} \textbf{while} $\mathit{out}(u) \not= [ \ ]$ \textbf{do} \hfill
\item[] \hspace{16ex} $\{$ \ \ delete some $v$ from $\mathit{out}(u)$; $\mathit{total}(v) \leftarrow \mathit{total}(v) - 1$; \hfill
\item[] \hspace{19.5ex} \textbf{if} $\mathit{total}(v) = 0$ \textbf{then} \hfill
\item[] \hspace{19.5ex} $\{$ \ \ $x \leftarrow \mathit{find}(p(v))$;
\item[] \hspace{23ex} \textbf{if} $u=x$ \textbf{then} \textbf{for} $w \in \mathit{same}(v)$ \textbf{do} $d(w) \leftarrow u$ \hfill
\item[] \hspace{23ex} \textbf{else} $\mathit{same}(x) \leftarrow \mathit{same}(x) \cup \mathit{same}(v)$;
\item[] \hspace{23ex} $\mathit{unite}(p(v), v)$;
\item[] \hspace{23ex} $\mathit{out}(x) \leftarrow \mathit{out}(x) \cup \mathit{out}(v)$ \ \ $\}$ \ \ $\}$ \ \ $\}$\\
\end{description}
\end{minipage}
}
\end{center}
\end{figure}

\begin{lemma}
\label{lemma:acyclic3-pass2}
Suppose $G$ is acyclic.  During the \textbf{while} loop in iteration $u$ of the main loop, for each original arc $(x, v)$ such that $v \le u$ and whose corresponding current arc $(\mathit{find}(x), v)$ is undeleted, $\mathit{out}(\mathit{find}(x))$ contains one copy of $v$, and such vertices are the only vertices in out bags.
\end{lemma}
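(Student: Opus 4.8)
The plan is to prove the statement by induction on the number of elementary steps executed in the main loop, following the proof of Lemma~\ref{lemma:acyclic-pass2} but adapting it to the way Version~3 populates the $\mathit{out}$-bags. The invariant I would carry is the bijection: the copies currently residing in $\mathit{out}$-bags are in one-to-one correspondence with the undeleted original arcs $(x,v)$ having $v\le u$, where the copy for $(x,v)$ is a copy of $v$ sitting in $\mathit{out}(\mathit{find}(x))$. Only three kinds of steps can disturb an $\mathit{out}$-bag: (i) the inner \textbf{for} loop ``\textbf{for} $(x,u)\in A$ \textbf{do} \ldots add $u$ to $\mathit{out}(\mathit{find}(x))$'' at the top of an iteration; (ii) the deletion ``delete some $v$ from $\mathit{out}(u)$; $\mathit{total}(v)\leftarrow\mathit{total}(v)-1$'' inside the \textbf{while} loop; and (iii) the bag union $\mathit{out}(x)\leftarrow\mathit{out}(x)\cup\mathit{out}(v)$ done when $v$ is contracted into $x=\mathit{find}(p(v))$. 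I would verify that each preserves the bijection.

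Steps (ii) and (iii) I would treat exactly as in the proof of Lemma~\ref{lemma:acyclic-pass2}: a deletion removes one copy of $v$ and in lockstep deletes the corresponding current arc $(u,v)$ and decrements $\mathit{total}(v)$, so the correspondence survives (and the deleted copy was in $\mathit{out}(u)$, so its head $v\le u$, consistent with the bound); a contraction relabels via $\mathit{unite}(p(v),v)$ exactly those sets whose name was $v$ so that they become named $x$, and the bag union carries precisely the copies whose tail's representative was $v$ into $\mathit{out}(x)$, matching the new values of $\mathit{find}$; and since $\mathit{total}(v)=0$ at a contraction, no copy of $v$ is left behind. The base case is equally routine: before iteration~$1$ every $\mathit{out}$-bag is empty, and the \textbf{for} loop of iteration~$1$ inserts one copy of $1$ into $\mathit{out}(\mathit{find}(x))$ for each original arc $(x,1)$, giving the bijection for $u=1$.

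The genuinely new point is step (i) and the fact that Version~3 inserts the \emph{head} $u$ at the start of iteration~$u$, whereas Version~2 inserts the head at the start of iteration $\mathit{nca}(x,u)$. Here I would observe: a copy of a vertex $z$ is created only by the statement that adds $z$ to an $\mathit{out}$-bag in the \textbf{for} loop of iteration~$z$, and, as noted in the proof of Lemma~\ref{lemma:acyclic-pass2}, an original arc $(x,z)$ is deleted only when its copy (a copy of $z$) is deleted from an $\mathit{out}$-bag; since that deletion can only occur in the \textbf{while} loop of some iteration at least $z$, no arc into $u$ has been deleted at any point up to and including the \textbf{for} loop of iteration~$u$. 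By the induction hypothesis the $\mathit{out}$-bags just before that \textbf{for} loop hold exactly the copies of heads $v\le u-1$ (vacuously if $u=1$), in particular no copy of $u$; the \textbf{for} loop then adds one copy of $u$ to $\mathit{out}(\mathit{find}(x))$ for each original arc $(x,u)\in A$, all of which are undeleted. These are exactly the copies needed to enlarge the bijection from heads $v\le u-1$ to heads $v\le u$, so the invariant holds at the top of the \textbf{while} loop of iteration~$u$ and the induction continues through that loop via (ii) and (iii).

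I expect the only delicate point to be this retiming: for a cross arc $(x,v)$ a copy of $v$ now occupies an $\mathit{out}$-bag throughout the ``extra'' iterations $v,\ldots,\mathit{nca}(x,v)-1$ that Version~2 would skip, so one must be confident no copy is spuriously created or lost there. The reassuring fact is structural: $\mathit{find}(x)$ only climbs the $T$-path from $x$ toward $s$, one vertex per $\mathit{unite}$, and the bag unions faithfully move each copy together with its tail's representative, so during those iterations the copy is simply parked in $\mathit{out}(\mathit{find}(x))$, exactly as the invariant demands. Whether that parked copy might be \emph{pulled} from some $\mathit{out}(u')$ before $\mathit{nca}(x,v)$ is processed is irrelevant to this lemma, which asserts the bijection only for undeleted arcs; that issue is handled in the correctness argument for Version~3.
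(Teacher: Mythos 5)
Your proof is correct and takes essentially the same route as the paper, which simply declares the argument analogous to that of Lemma \ref{lemma:acyclic-pass2}: an induction on the steps of the main loop showing that insertions, deletions, and the bag unions accompanying \emph{unite} each preserve the one-to-one correspondence between undeleted arcs $(x,v)$ with $v\le u$ and copies of $v$ in $\mathit{out}(\mathit{find}(x))$. Your explicit treatment of the retimed insertions (heads added at iteration $v$ rather than $\mathit{nca}(x,v)$) and your observation that premature removal of a parked copy is a concern for Corollary \ref{lemma:corollary3-pass2} rather than for this lemma are exactly the details the paper leaves implicit.
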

\begin{proof}
The proof is analogous to the proof of Lemma \ref{lemma:acyclic-pass2}.
\end{proof}

\begin{corollary}
\label{lemma:corollary3-pass2}
During the \textbf{while} loop in iteration $u$ of the main loop, for each undeleted tree or forward arc $(u, v)$, there is a copy of $v$ in $\mathit{out}(u)$, and such vertices are the only vertices in $\mathit{out}(u)$.
\end{corollary}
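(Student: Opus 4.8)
The plan is to mirror the proof of Corollary~\ref{corollary:acyclic-pass2}, with Lemma~\ref{lemma:acyclic3-pass2} in the role of Lemma~\ref{lemma:acyclic-pass2} and the reverse-preorder property of $T$ (tree and forward arcs $(a,b)$ have $a>b$, cross and back arcs have $a<b$) in the role of the nearest-common-ancestor bound. Two standing observations will be used repeatedly: first, the undeleted arcs of the current graph are exactly the images of the original arcs under $(x,y)\mapsto(\mathit{find}(x),y)$; second, the ancestors of an undeleted vertex in the current tree are precisely its undeleted ancestors in $T$, since each contraction of a vertex $v'$ into $p(v')$ reattaches the children of $v'$ to the $T$-ancestor $p(v')$.

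First I would handle the easy inclusion. Let $(u,v)$ be an undeleted tree or forward arc of the current graph during the \textbf{while} loop of iteration $u$, and let $(x,v)$ be an original arc with $\mathit{find}(x)=u$ mapping to it. Since $u$ is a proper ancestor of $v$ in the current tree, it is a proper ancestor of $v$ in $T$, so $v<u$ by the reverse-preorder property; in particular $v\le u$. As the current arc $(\mathit{find}(x),v)=(u,v)$ is undeleted, Lemma~\ref{lemma:acyclic3-pass2} places a copy of $v$ in $\mathit{out}(\mathit{find}(x))=\mathit{out}(u)$.

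For the converse, take $v\in\mathit{out}(u)$ during the \textbf{while} loop of iteration $u$. By Lemma~\ref{lemma:acyclic3-pass2} there is an original arc $(x,v)$ with $v\le u$, with $\mathit{find}(x)=u$, and with the current arc $(u,v)$ undeleted; the same lemma says these are the only vertices occurring in $\mathit{out}$-bags, which yields the ``only such vertices'' half of the corollary once we know what ``such vertices'' are. It remains to show $(u,v)$ is a tree or forward arc of the current graph. Since the current graph has no loop arc (the original graph has none, and no contraction creates one, $G$ being acyclic), $u\ne v$, so it suffices to show $u$ is a $T$-ancestor of $v$: then $u$ is a proper ancestor of $v$ in the current tree and $(u,v)$ is a tree or forward arc. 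Suppose not. Because $u=\mathit{find}(x)$ is a $T$-ancestor of $x$, a tree or forward arc $(x,v)$ would make $v$ a $T$-descendant of $x$, hence of $u$; and a back arc $(x,v)$ is impossible by acyclicity (this also disposes of the sub-case that $v$ is a $T$-ancestor of $u$, since then $v$ is a $T$-ancestor of $x$ too). So $(x,v)$ is a cross arc, giving $x<v$ by reverse preorder, and $u$ must lie strictly between $x$ and $z:=\mathit{nca}(x,v)$ on the path to the root (otherwise $u$ is a $T$-ancestor of $z$, hence of $v$). Then $u$ lies in the child subtree of $z$ containing $x$, which is preorder-later than $v$'s child subtree since $x<v$; comparing preorder positions inside the subtree rooted at $z$ forces $u<v$, contradicting $v\le u$.

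I expect this last case analysis---ruling out that $v$ is $T$-unrelated to $u$---to be the only substantive step, playing exactly the part that ``$v$ must be a descendant of $u$'' plays in Corollary~\ref{corollary:acyclic-pass2}; the remainder is bookkeeping, and the ``one copy'' and ``only such vertices'' clauses come for free from Lemma~\ref{lemma:acyclic3-pass2}. An equivalent packaging of the substantive step, which some readers may prefer, is to check that the current tree $T'$ with its inherited numbering is again a depth-first spanning tree in reverse preorder of the current graph (deleting a vertex and reattaching its children to its parent preserves the relative preorder, and no contraction produces a cross arc $(a,b)$ with $a>b$); granting this, the cited property applied to the current graph turns ``$v\le u$, $v\ne u$'' directly into ``$(u,v)$ is a tree or forward arc.''
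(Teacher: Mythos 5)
Your proof is correct and follows essentially the same route as the paper's: the forward inclusion is identical, and your converse direction (classifying the original arc $(x,v)$, ruling out back arcs by acyclicity, and using the subtree-block structure of reverse preorder below $\mathit{nca}(x,v)$ to force $u<v$ in the cross-arc case) is just an unpacked version of the paper's one-line observation that every ancestor of $x$ that is not an ancestor of $v$ has number less than $v$, whence $u\ge v$ forces $u$ to be an ancestor of $v$. The extra care you take with $u\ne v$ and with ancestor relations surviving contractions is sound and consistent with the implementation, which deletes all arcs into a vertex before contracting it.
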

\begin{proof}
If $(u, v)$ is a current undeleted tree or forward arc, it replaces an original arc $(x, v)$ such that $x$ and $v$ are descendants of $u$ in the original spanning tree $T$.  Since the vertex order is bottom-up, $v \le u$.  By Lemma \ref{lemma:acyclic3-pass2}, there is a copy of $v$ in $\emph{out}(u)$ corresponding to $(x, v)$.  Conversely, if $v$ is in $\emph{out}(u)$ during the \textbf{while} loop in iteration $u$, then by Lemma \ref{lemma:acyclic3-pass2} there is an undeleted current arc $(u, v)$ replacing an original arc $(x, v)$ such that $v \le u$.  Since $(u, v)$ is the current arc replacing $(x, v)$, $u$ is an ancestor of $x$ in $T$. Since the vertex numbering is reverse preorder, every ancestor of $x$ in $T$ that is not an ancestor of $v$ in $T$ has number less than $v$.  Since $u$ is an ancestor of $x$ and $u \ge v$, $u$ is an ancestor of $v$.  Thus $(u, v)$ is a tree or forward arc.
\end{proof}

\begin{theorem}
\label{theorem:AD3-correct}
Version 3 of Algorithm AD is correct.
\end{theorem}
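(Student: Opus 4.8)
The plan is to mirror the correctness proof for Version~2 (Theorem~\ref{theorem:AD2-correct}), whose entire content reduces to two facts: first, that the inner \textbf{while} loop of the main loop uses nothing about the graph except the contents of $\mathit{out}(u)$ during iteration $u$; and second, that those contents are pinned down exactly by an invariant (Corollary~\ref{corollary:acyclic-pass2} for Version~2, and now Corollary~\ref{lemma:corollary3-pass2} for Version~3), namely that $\mathit{out}(u)$ holds one copy of $v$ for each undeleted tree or forward arc $(u,v)$ and no other vertices. Given that invariant, iteration $u$ of Version~3 deletes precisely the current tree and forward arcs leaving $u$, in some order, updates $\mathit{total}$, $\mathit{same}$, and the disjoint-set structure exactly as the arc-deletion variant of Algorithm~AD does, and performs $\mathit{unite}(p(v),v)$ for each contracted $v$; so Theorem~\ref{theorem:AD-correct-2} immediately yields that the computed immediate dominators are correct.

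Concretely I would proceed in four steps: (1) note that Version~3 differs from Version~2 only in when vertices enter $\mathit{out}$ bags --- Version~3 inserts $v$ into $\mathit{out}(\mathit{find}(x))$ at the start of iteration $v$ for every arc $(x,v)$, while Version~2 waits until iteration $\mathit{nca}(x,v)$; (2) invoke Corollary~\ref{lemma:corollary3-pass2} for the $\mathit{out}(u)$ invariant; (3) observe that from this point the argument is word-for-word the proof of Theorem~\ref{theorem:AD2-correct}, with Corollary~\ref{lemma:corollary3-pass2} replacing Corollary~\ref{corollary:acyclic-pass2}; (4) conclude via Theorem~\ref{theorem:AD-correct-2}.

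The only delicate point --- and it is already discharged inside Lemma~\ref{lemma:acyclic3-pass2} and Corollary~\ref{lemma:corollary3-pass2}, so it is not an obstacle here --- is that the \emph{early} insertions are harmless: for a cross arc $(x,v)$ the reverse-preorder numbering forces $v < \mathit{nca}(x,v)$, so $v$ lingers in $\mathit{out}(\mathit{find}(x))$ from iteration $v$ onward, yet during any iteration $u$ with $v \le u < \mathit{nca}(x,v)$ we have $\mathit{find}(x) \ne u$, so $v$ is not a spurious deletion candidate; it first becomes visible in $\mathit{out}(u)$ exactly when $u$ is a common ancestor of $x$ and $v$, i.e. exactly when $(u,v)$ is a tree or forward arc, which is the same moment Version~2 would have inserted it. Since that step is already proved, the theorem follows immediately from Theorem~\ref{theorem:AD-correct-2} and Corollary~\ref{lemma:corollary3-pass2}, and I expect no further difficulty.
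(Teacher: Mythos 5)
Your proposal is correct and follows exactly the paper's route: the paper also proves this theorem by citing Theorem \ref{theorem:AD-correct-2} together with Corollary \ref{lemma:corollary3-pass2}, and your discussion of why the early insertions into $\mathit{out}$-bags are harmless reproduces the justification the paper gives in the text introducing Version 3. No gaps.
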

\begin{proof}
The Theorem is immediate from Theorem \ref{theorem:AD-correct-2} and Corollary \ref{lemma:corollary3-pass2}.
\end{proof}

If we use a different vertex order in the main loop, namely postorder, then we can fold the main loop into the depth-first search that builds the spanning tree. The result is a one-pass algorithm to find dominators. Unfortunately this method must compute nca's to determine when to add vertices to $\mathit{out}$-bags, so it uses two disjoint set data structures concurrently, one to keep track of contractions and the other to compute nca's.  We discuss this approach more fully in Section \ref{sec:general}, since it can be used for general graphs as well.

\section{Finding Dominators in a General Graph}
\label{sec:general}

As discussed in Section \ref{sec:introduction}, Ramalingam~\cite{loops:Ramalingam} gave a reduction of the dominator-finding problem on a general graph to the same problem on an acyclic graph.  His reduction uses simple data structures and static-tree disjoint set union, so it has the same asymptotic time bound as Algorithm AD.  We give a streamlined version of his reduction in Section \ref{sec:loops}.  By combining this reduction or his original reduction with Algorithm AD, we obtain an algorithm that finds dominators in an arbitrary graph in near-linear or linear time and uses only simple data structures and static-tree disjoint set union.  Although this achieves our goal, we prefer an algorithm that is self-contained and as simple as possible.  We develop such an algorithm in this section.

To explain the algorithm, we need some terminology about strongly connected subgraphs.  Let $T$ be an arbitrary spanning tree of $G$ rooted at $s$, let $p$ be the parent function of $T$, and suppose the vertices of $T$ are numbered from $1$ to $n$ in a bottom-up order and identified by number.  If $u$ is any vertex, the \emph{loop} of $u$, denoted by $\emph{loop}(u)$, is the set of all descendants $x$ of $u$ in $T$ such that there is a path from $x$ to $u$ containing only descendants of $u$ in $T$.  Vertex $u$ is the \emph{head} of the loop.  The loop of $u$ induces a strongly connected subgraph of $G$ (every vertex is reachable from any other), and it is the unique maximal set of descendants of $u$ that does so.  If $u$ and $v$ are any two vertices, their loops are either disjoint or nested (one is contained in the other).  The \emph{loop nesting forest} $H$ is the forest with parent function $h$ such that $h(v)$ is the nearest proper ancestor $u$ of $v$ in $T$ whose loop contains $v$ if there is such a vertex, \textbf{null} otherwise.  If $u$ is any vertex, $\emph{loop}(u)$ is the set of all descendants of $u$ in $H$.
An \emph{entry} to $\mathit{loop}(u)$ is an arc $(v, w)$ such that $w$ is in $\mathit{loop}(u)$ but $v$ is not; $(v, w)$ is a head entry if $w = u$ and a non-head entry otherwise.
An \emph{exit} from $\emph{loop}(u)$ is an arc from a vertex in $\emph{loop}(u)$ to a vertex in $\emph{loop}(h(u)) - \emph{loop}(u)$.  A loop has an exit if and only if it is contained in a larger loop.  These definitions extend to an arbitrary spanning tree the corresponding definitions for a depth-first spanning tree \cite{loops:Ramalingam,st:t}. See Figure \ref{fig:dfs-loops}.

\begin{figure}[t]
\begin{center}
\scalebox{0.7}[0.7]{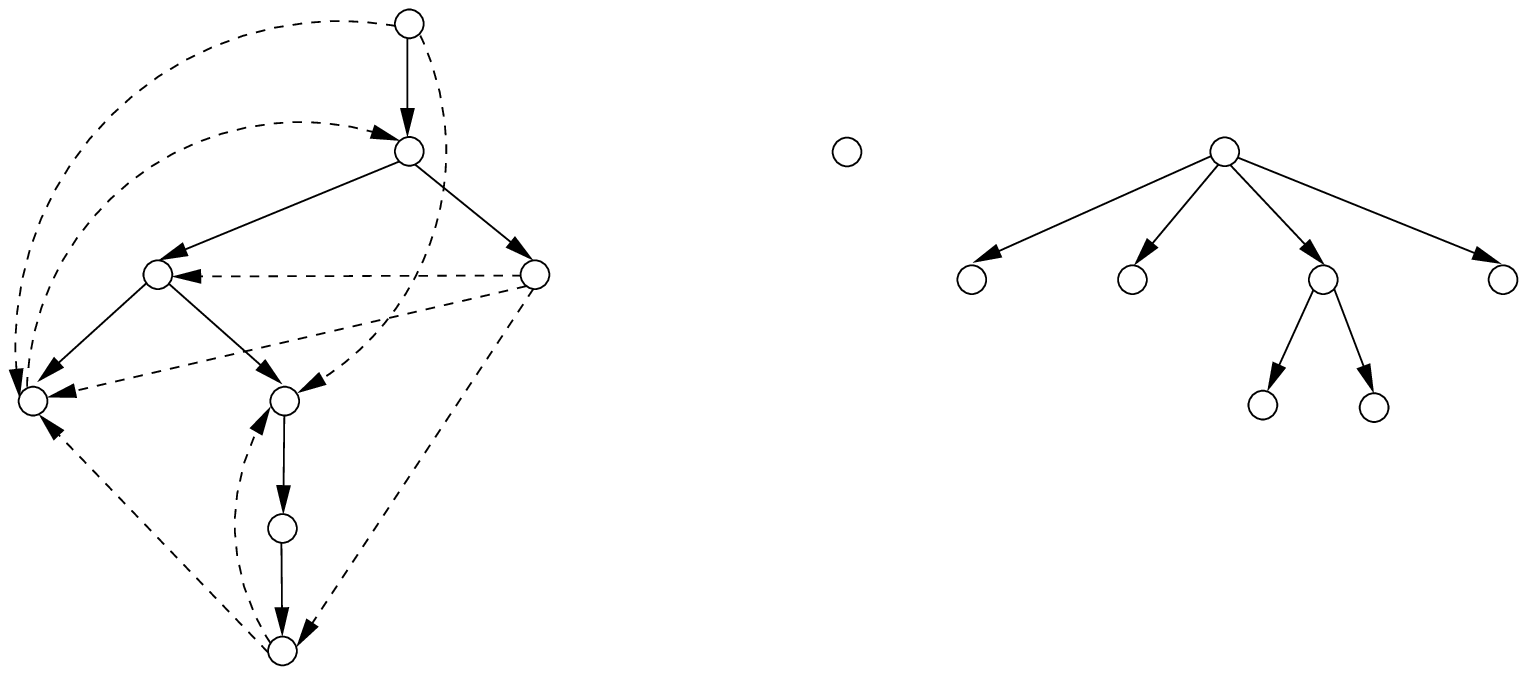}
\end{center}
\vspace{-0.5cm}
\caption{\label{fig:dfs-loops} A depth-first spanning tree of the flow graph in Figure \ref{fig:dominator-tree} (tree are shown solid, non-tree arcs are shown dashed), and the corresponding loop nesting forest.}
\end{figure}

A loop is \emph{reducible} if all its entries enter its head; that is, it has no non-head entries.
The head of a reducible loop dominates all vertices in the loop.  A flow graph is \emph{reducible} \cite{rdflow:hu74,reducibility:jcss:tarjan} if all its loops are reducible.  If $G$ is reducible, deletion of all its back arcs with respect to any spanning tree produces an acyclic graph with the same dominators as $G$.  Thus Algorithm AD extends to find the dominators of any reducible graph.

To extend algorithm AD to general graphs, we need a way to delete vertices on cycles.  For this purpose we use the \emph{transform} operation, which adds certain arcs to the graph and then does a contraction. The operation $\emph{transform}(u, v)$ requires that $u$ be a proper ancestor of $v$ in $T$ and consists of the following two steps:

\begin{description}
\item[Step 0:] For each arc from $v$ to a proper descendant $w \not\in \{v, p(v)\}$ of $u$ in $T$, add an arc from $p(u)$ to $w$.
\item[Step 1:] Contract $v$ into $p(v)$.
\end{description}

\begin{figure}[t]
\begin{center}
\scalebox{0.7}[0.7]{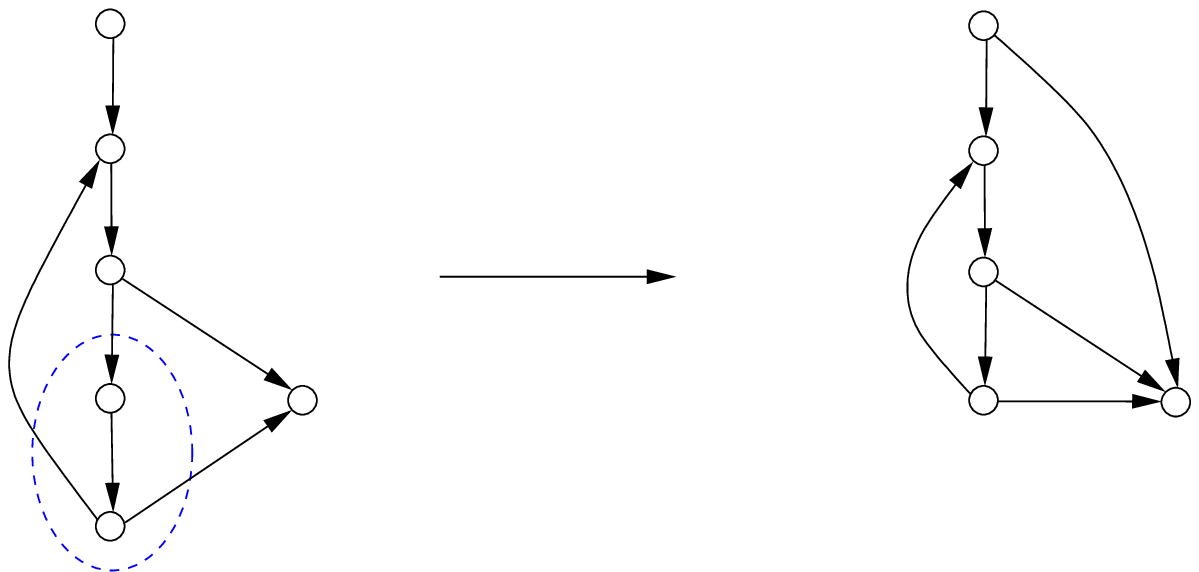}
\end{center}
\vspace{-0.5cm}
\caption{\label{fig:transform} Transform operation.}
\end{figure}

See Figure \ref{fig:transform}.
The effect of the transform is to replace $G$ by a new graph $G'$ and $T$ by a new spanning $T'$.

The purpose of Step 0 is to preserve dominators.  The arcs added in Step 0 play a crucial role in proving the correctness of our algorithm, but the algorithm itself does not actually keep track of such arcs.  This makes the behavior of the algorithm a bit subtle.

The next two lemmas, analogous to Lemmas \ref{lemma:parent-dom} and \ref{lemma:contract-1}, justify the use of transforms.

\begin{lemma}
\label{lemma:loop}
Let $u$ and $v$ be distinct vertices such that $v \in \mathit{loop}(u)$ and $d(v) > u$.  Then $d(v) = d(u)$.
\end{lemma}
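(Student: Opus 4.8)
The plan is to show that $d(u)$ dominates $v$ and, conversely, that $d(v)$ dominates $u$; since both $d(u)$ and $d(v)$ are proper ancestors of the relevant vertices in $T$ and the dominators on any vertex are totally ordered along the tree path from $s$, these two containments force $d(v)=d(u)$. First I would establish that $d(u)$ dominates $v$. Since $v\in\mathit{loop}(u)$, there is a path $Q$ from $v$ to $u$ using only descendants of $u$ in $T$. Take any path $P$ from $s$ to $v$; I claim it must pass through $d(u)$. Indeed, $P$ followed by $Q$ is a walk from $s$ to $u$, so it contains $d(u)$; but $d(u)$ is a proper ancestor of $u$ in $T$, hence $d(u)$ is not a descendant of $u$ and therefore does not lie on $Q$ (all of whose vertices are descendants of $u$). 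Thus $d(u)$ lies on $P$, proving $d(u)$ dominates $v$. Since $d(u)\neq v$ (because $d(u)$ is a proper ancestor of $u$, which is a proper ancestor of $v$ as $v$ is a descendant of $u$ and $u\neq v$), $d(u)$ also dominates $d(v)$, i.e.\ $d(v)\ge d(u)$ in the tree order, or more precisely $d(u)$ is an ancestor of $d(v)$ in $T$.

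Next I would show the reverse: $d(v)$ dominates $u$. Take any path $P'$ from $s$ to $u$; I must show it contains $d(v)$. Extend $P'$ by the tree path in $T$ from $u$ down to $v$: since $v$ is a descendant of $u$ in $T$, this tree path exists and its internal vertices are strict descendants of $u$ in $T$, hence all have tree-number less than $u$, and in particular none of them equals $d(v)$ because $d(v)>u$ by hypothesis. The concatenated walk goes from $s$ to $v$, so it contains $d(v)$; since $d(v)$ is not on the appended tree segment, $d(v)$ must lie on $P'$. Hence $d(v)$ dominates $u$. Since $d(v)>u$ implies $d(v)\neq u$, $d(v)$ also dominates $d(u)$, so $d(v)$ is an ancestor of $d(u)$ in $T$.

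Combining the two conclusions, $d(u)$ is an ancestor of $d(v)$ in $T$ and $d(v)$ is an ancestor of $d(u)$ in $T$, which is only possible if $d(u)=d(v)$, completing the proof. I expect the main subtlety — and the step most worth writing carefully — to be the two ``detour'' arguments: one must check that the appended path segment (the loop path $Q$ in the first half, the tree path $u$-to-$v$ in the second half) genuinely avoids the dominator in question. The first avoidance follows because $Q$ stays within descendants of $u$ while $d(u)$ is a proper ancestor of $u$; the second follows from the hypothesis $d(v)>u$ together with the fact that the internal vertices of the tree path from $u$ down to $v$ are proper descendants of $u$ and hence have tree-number strictly below $u$. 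Everything else is the standard ``mutual domination forces equality along a tree path'' argument already used in the proof of Lemma~\ref{lemma:parent-dom}.
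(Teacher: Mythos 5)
Your proposal is correct and uses essentially the same two detour constructions as the paper's proof: appending the loop path from $v$ to $u$ (which stays among descendants of $u$ and hence misses the proper ancestor $d(u)$) to show $d(u)$ dominates $v$, and appending the tree path from $u$ to $v$ (whose vertices other than $u$ have number below $u < d(v)$) to show $d(v)$ dominates $u$. The only cosmetic difference is that the paper runs these arguments for an arbitrary non-dominating vertex $x$ to conclude that $u$ and $v$ have identical proper dominators, whereas you specialize to $x = d(u)$ and $x = d(v)$ and finish by antisymmetry; both wrap-ups are fine.
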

\begin{proof}
Neither $u$ nor $v$ dominates the other, since both have immediate dominator greater than $u$.  Let $x \not= u$ be a vertex that does not dominate $v$.  Then there is a path $P$ from $s$ to $v$ that avoids $x$.  If $x$ is a proper descendant of $u$ then $x$ does not dominate $u$.  Suppose $x$ is not a descendant of $u$.  Since $v \in \emph{loop}(u)$, there is a path from $v$ to $u$ containing only descendants of $u$ in $T$, and hence not containing $x$.  Adding this path to $P$ produces a path from $s$ to $u$ that avoids $x$.  Thus $x$ does not dominate $u$.  Conversely, suppose $x$ does not dominate $u$.  Then there is a path $P$ from $s$ to $u$ that avoids $x$.  If $x$ is not a descendant of $u$, adding  to $P$ the path in $T$ from $u$ to $v$ produces a path from $s$ to $v$ that avoids $x$.  If $x$ is a descendant of $u$, there is a path from $s$ to $v$ that avoids $x$ since $d(v) > u$.  Thus $x$ does not dominate $v$.  It follows that $u$ and $v$ have the same proper dominators (dominators other than themselves).
\end{proof}

\begin{lemma}
\label{lemma:loop-transform}
Let $u$ and $v$ be distinct vertices such that $v \in \mathit{loop}(u)$ and $d(x) > u$ for all descendants $x$ of $u$ in $T$. Let $G'$ and $T'$ be the graph and spanning tree formed from $G$ and $T$, respectively, by doing $\mathit{transform}(u, v)$.  Then the dominator tree $D'$ of $G'$ is $D$ with $v$ and its incoming arc $(d(v), v)$ deleted.
\end{lemma}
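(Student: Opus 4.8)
The plan is to follow the template of the proof of Lemma~\ref{lemma:contract-1}. First I would show that $v$ is a leaf of $D$, so that ``$D$ with $v$ and its incoming arc $(d(v),v)$ deleted'' is a bona fide tree on $V\setminus\{v\}$, and then show that $G$ and $G'$ induce the same domination relation on $V\setminus\{v\}$, which forces $D'$ to be exactly that tree. For the leaf claim: if some vertex $z$ had $d(z)=v$, then, since an immediate dominator is always a proper ancestor in the spanning tree, $v$ would be a proper ancestor of $z$ in $T$ and hence $z$ a proper descendant of $u$, so the hypothesis would give $d(z)>u$, contradicting $d(z)=v<u$; thus $v$ is a leaf of $D$ and $d(v)$ is well defined. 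For the remaining equivalence I would fix $x,w\in V\setminus\{v\}$ and keep two facts at hand throughout: by Lemma~\ref{lemma:loop} (applicable since $v\in\mathit{loop}(u)$ and $d(v)>u$), $u$ and $v$ have the same proper dominators; and $p(v)$ is a descendant of $u$ in $T$, so $p(v)$ is not an ancestor of $p(u)$ and not an ancestor of any vertex outside the subtree of $p(v)$ in $T$.

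For the implication ``$x$ does not dominate $w$ in $G$ $\Rightarrow$ $x$ does not dominate $w$ in $G'$'', I would take a simple path $P$ from $s$ to $w$ in $G$ avoiding $x$ and convert it into a walk from $s$ to $w$ in $G'$ avoiding $x$. If $v\notin P$ then $P$ already lies in $G'$. Otherwise $v$ occurs on $P$ exactly once and internally, flanked by arcs $(y,v)$ and $(v,z)$ with $z\notin\{v,p(v)\}$. If $x\ne p(v)$, rerouting through $p(v)$ (and deleting any resulting loop arcs) gives the walk. The delicate case is $x=p(v)$: then I would discard the prefix of $P$ through $z$ and replace it by a walk from $s$ to $z$ in $G'$ that avoids $p(v)$ --- namely the tree path from $s$ to $z$ when $z$ is not a descendant of $p(v)$ in $T$ (so $p(v)$ is not an ancestor of $z$), and, when $z$ \emph{is} a descendant of $p(v)$ in $T$ (so $z$ is a proper descendant of $u$ distinct from $v$ and $p(v)$), the tree path from $s$ to $p(u)$ followed by the arc $(p(u),z)$ that Step~0 created from $(v,z)$. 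This is exactly the role of Step~0. Appending the unchanged suffix of $P$ from $z$ to $w$ (which contains no $v$ and avoids $x$) completes a walk in $G'$ avoiding $x$.

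The converse, ``$x$ does not dominate $w$ in $G'$ $\Rightarrow$ $x$ does not dominate $w$ in $G$'', is the more involved surgery, and I would recast it as: assuming $x$ dominates $w$ in $G$, show that every walk $W$ from $s$ to $w$ in $G'$ contains $x$, by induction on the number of Step~0 arcs that $W$ uses. Translating $W$ back to a walk in $G$ means expanding each visit of $p(v)$ into a detour through $v$ (using the tree arc $(p(v),v)$, or a tree path, when the incident $G'$-arcs descended from non-tree arcs at $v$) and, when $W$ uses Step~0 arcs, splitting $W$ at its last such arc $(p(u),z)$: the tail of $W$ translates to a walk from $z$ to $w$ in $G$, which together with any walk from $s$ to $v$ in $G$ and the arc $(v,z)$ gives a walk from $s$ to $w$ in $G$, hence one containing $x$. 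If $x$ lies on the translated tail it lies on $W$; otherwise every walk from $s$ to $v$ in $G$ contains $x$, i.e.\ $x$ dominates $v$, so by Lemma~\ref{lemma:loop} $x$ dominates $u$ and therefore $p(u)$, and the inductive hypothesis applied to the part of $W$ preceding $(p(u),z)$ --- which uses one fewer Step~0 arc --- puts $x$ on $W$.

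The hard part will be the case analysis inside these two conversions, above all the $x=p(v)$ case of the forward direction and the unfolding of Step~0 arcs in the converse: these are precisely the points where the choice of $p(u)$ (rather than $u$ or $p(v)$) as the source of the Step~0 arcs is forced, and where the loop hypothesis $v\in\mathit{loop}(u)$ must be turned, via Lemma~\ref{lemma:loop}, into statements about non-domination of $v$, $u$, and $p(u)$. Degenerate subcases such as $p(v)=u$, $z=u$, and $z=w$ should be checked, but each is either vacuous or handled by the same tree-path splices.
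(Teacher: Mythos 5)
Your overall plan (the leaf claim plus two\nobreakdash-way preservation of non\nobreakdash-domination on $V\setminus\{v\}$) matches the paper's, your forward direction (converting a $G$-path avoiding $x$ into a $G'$-walk, with the $x=p(v)$ case handled by the tree path to $z$ or the Step~0 arc $(p(u),z)$) is correct and essentially the paper's argument, and your induction on the number of Step~0 arcs is a legitimate, arguably elegant, alternative to the paper's direct splice for those arcs. The gap is in the converse, at the step ``if $x$ lies on the translated tail it lies on $W$.'' Translating a $G'$-walk back into $G$ is not vertex-conservative: whenever the walk enters $p(v)$ by an arc $(y,p(v))$ descended from $(y,v)$ and then leaves by an arc that was genuinely out of $p(v)$ in $G$ (or terminates at $p(v)$), you must insert a path from $v$ to $p(v)$, and no tree path from $v$ to $p(v)$ exists --- $p(v)$ is an \emph{ancestor} of $v$ --- so the inserted path has to climb through $\mathit{loop}(u)$ up to $u$ and then descend the tree from $u$ to $p(v)$. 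Its interior vertices are descendants of $u$ in $T$ (including $u$ itself) that need not appear on $W$. If $x$ is one of them, your dichotomy collapses: $x$ lies on the translated tail but not on $W$, and you also cannot conclude that every walk from $s$ to $v$ contains $x$. The same defect already infects your base case with zero Step~0 arcs.

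The paper closes exactly this hole by splitting on whether $x$ is a descendant of $u$ in $T$. If it is not, the inserted detours (which consist of $v$ and descendants of $u$) cannot contain $x$, and the local splices are safe. If it is, the hypothesis that $d(y)>u$ for every descendant $y$ of $u$ is invoked directly (this is its essential use in this direction of the proof): since the numbering is bottom-up, $d(v)>u$ and $d(p(v))>u$ force every proper dominator of $v$ and of $p(v)$ to be a proper ancestor of $u$ in $T$, so a descendant $x$ of $u$ dominates neither; hence $G$ contains paths from $s$ to $v$ and from $s$ to $p(v)$ avoiding $x$, and one replaces the \emph{entire prefix} of the walk up to the offending visit by such a path instead of splicing locally. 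Your proposal never makes this case split and uses the hypothesis only through Lemma~\ref{lemma:loop}, so the case of a proper descendant $x$ of $u$ sitting on the forced $v$-to-$p(v)$ detour is genuinely unhandled; to repair the proof you need to add this case analysis (after which your Step~0 induction can be kept or discarded in favor of the paper's direct treatment).
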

\begin{proof}
Since $d(x) > u \ge p(v)$ for every descendant $x$ of $v$ in $T$, $v$ dominates no vertices in $G$ except itself.  Thus $v$ is a leaf in $D$.  Let $x$ and $w$ be vertices in $G'$.  Suppose $x$ does not dominate $w$ in $G'$.  Then there is a path $P$ from $s$ to $w$ in $G'$ that avoids $x$.  Suppose $P$ is not in $G$.  Replace a new arc $(p(v), z)$ by old arcs $(p(v), v)$ and $(v, z)$.  If $x$ is not a descendant of $u$, replace a new arc $(p(u), z)$ by the path in $T$ from $p(u)$ to $v$ followed by old arc $(v, z)$, and replace a new arc $(y, p(v))$ by old arc $(y, v)$, followed by a path of descendants of $u$ from $v$ to $u$ (which must exist since $v \in \mathit{loop}(u)$), followed by the path in  $T$ from $u$ to $p(v)$.  The result is a path in $G$ from $s$ to $w$ that avoids $x$.  Suppose $x$ is a descendant of $u$.  If $P$ contains an arc $(p(u), z)$, replace the part of $P$ from $s$ to $z$ by a path in $G$ from $s$ to $v$ that avoids $x$, which must exist since $d(v) > u$, followed by old arc $(v, z)$.  If $P$ contains a new arc $(y, p(v))$, replace the part of $P$ from $s$ to $p(v)$ by a path in $G$ from $s$ to $p(v)$ that avoids $x$, which must exist since $d(p(v)) > u$.  The result is a path in $G$ from $s$ to $w$ that avoids $x$.  Thus $x$ does not dominate $w$ in $G$.

Conversely, suppose $x$ does not dominate $w$ in $G$. Then there is a simple path $P$ from $s$ to $w$ in $G$ that avoids $x$.  If $v$ is not on $P$, $P$ is a path in $G'$.  Suppose $v$ is on $P$.  Let $(y, v)$ and $(v, z)$ be the arcs into and out of $v$ on $P$, respectively.  If $x \not= p(v)$, replace $(y, v)$ and $(v, z)$ on $P$ by $(y, p(v))$ and  $(p(v), z)$.  The result is a path in $G'$ from $s$ to $w$ that avoids $x$. Suppose $x = p(v)$.  If $z$ is not a descendant of $u$, replace the part of $P$ from $s$ to $z$ by the path in $T'$ from $s$ to $z$, which avoids $x$.  If $z$ is a descendant of $u$, replace the part of $P$ from $s$ to $z$ by the path in $T'$ from $s$ to $p(u)$ followed by new arc $(p(u), z)$.  The result is a path in $G'$ from $s$ to $w$ that avoids $x$.  Thus $x$ does not dominate $w$ in $G'$.

We conclude that the dominators of any vertex $w \not= v$ are the same in $G$ and $G'$.
\end{proof}

\begin{figure}[t]
\begin{center}
\scalebox{0.7}[0.7]{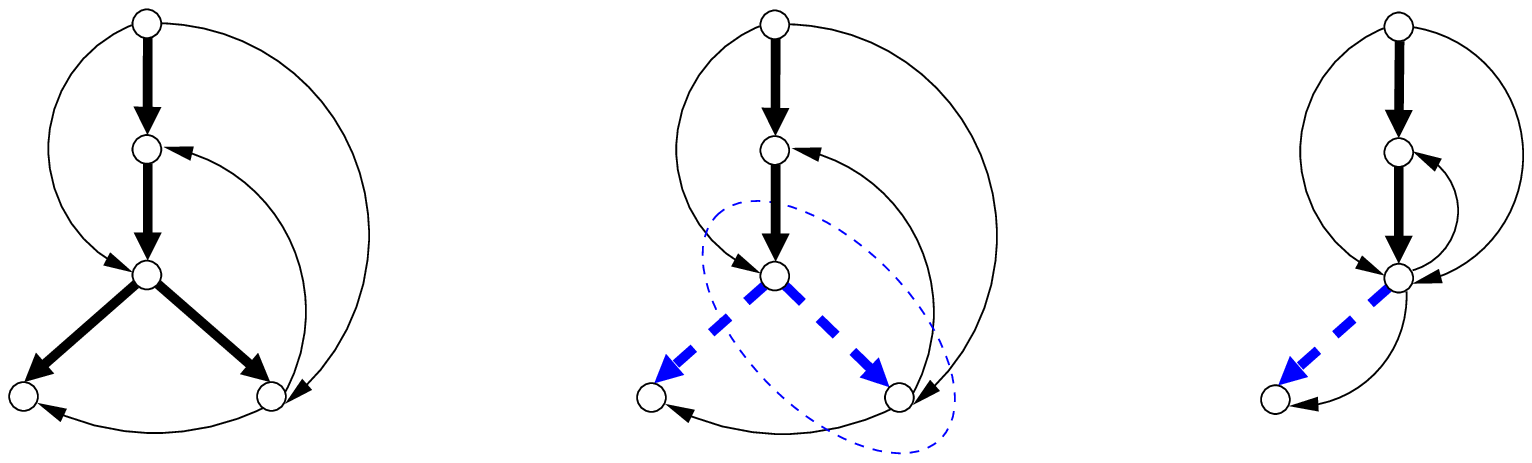}
\end{center}
\vspace{-0.5cm}
\caption{\label{fig:transform-counterexample} Counterexample to Lemma \ref{lemma:loop-transform} if Step 0 is dropped from $\mathit{transform}(u, v)$.  Vertex $5$ is the immediate dominator of all other vertices. Let $u = 4$ and $v = 2$. Contracting $2$ into $3$ without adding an arc into $1$ results in $1$ having immediate dominator $3$ instead of $1$.}
\end{figure}

Lemma \ref{lemma:loop-transform} is false if Step 0 is dropped from $\mathit{transform}(u, v)$, as the example in Figure \ref{fig:transform-counterexample} shows.

For transforms to suffice for deleting vertices on cycles, every cycle must be in a loop. Cycles outside of loops, and indeed without back arcs, can exist if the spanning tree $T$ is arbitrary, but not if $T$ is chosen carefully, specifically if $T$ is a depth-first spanning tree. If so, every cycle contains a back arc \cite{dfs:t}, and more generally every cycle contains a vertex $u$ that is a common ancestor of all other vertices on the cycle \cite{dfs:t}. That is, all vertices on the cycle are in $\mathit{loop}(u)$.(See Figure \ref{fig:dfs-loops}.)  What makes these statements true is that if vertices are numbered in postorder with respect to the depth-first search that generates the tree, then every arc $(x, y)$ with $x$ numbered less than $y$ is a back arc \cite{st:t}.

\begin{lemma}
\label{lemma:tree-transform}
If $T$ is a depth-first spanning tree and $T'$ is formed from $T$ by contracting $v$ into $p(v)$ or doing $\mathit{transform}(u, v)$, then $T'$ is also a depth-first spanning tree, with preorder on $T'$ being preorder on $T$ restricted to the vertices other than $v$, and postorder on $T'$ being postorder on $T$ restricted to the vertices other than $v$.
\end{lemma}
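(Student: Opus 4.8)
The plan is to use the standard postorder characterization of depth-first spanning trees already invoked in the text \cite{dfs:t,st:t}: for a spanning tree $T$ of a flow graph rooted at $s$, equipped with an ordering of each vertex's children that induces a preorder $\sigma$ and a postorder $\pi$, $T$ is a depth-first spanning tree of the graph if and only if $\pi$ is consistent with the tree structure (the subtree of each vertex occupies a contiguous block of $\pi$-values whose maximum is the vertex's own $\pi$-value) and every arc $(x,y)$ of the graph with $\pi(x)<\pi(y)$ has $y$ a proper ancestor of $x$ in $T$ (is a back arc). I would fix the generating child ordering of $T$, transfer it to $T'$ by inserting $v$'s children, in their $T$-order, into the slot that $v$ occupied among the children of $p(v)$, and then show: (i) $\sigma$ and $\pi$, restricted to $V\setminus\{v\}$, give a preorder and a postorder of $T'$; and (ii) $T'$ with this numbering satisfies the back-arc condition for $G'$.

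Part (i) is a routine interval argument: for every vertex $w\neq v$ the subtree of $w$ in $T'$ is its subtree in $T$ with $v$ deleted if $v$ was a descendant of $w$ in $T$, and is unchanged otherwise, so ancestor--descendant relations and the left-to-right order of disjoint subtrees among vertices other than $v$ are exactly those of $T$; hence the $\sigma$- and $\pi$-orders restricted to $V\setminus\{v\}$ are the preorder and postorder orders of $T'$. (That $T'$ is a spanning tree of $G'$ rooted at $s$ was already noted for contractions in Section \ref{sec:acyclic}, and Step 0 of a transform only adds arcs.) For part (ii) I would go through the arcs of $G'$ case by case, using that $\pi(v)<\pi(p(v))$ since $p(v)$ is a proper ancestor of $v$. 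Arcs inherited unchanged from $G$ were back arcs of $T$ and the relevant relation is preserved, so they are back arcs of $T'$. A new arc $(p(v),y)$ replacing an arc $(v,y)$ of $G$ with $y\notin\{v,p(v)\}$ needs checking only when $\pi(p(v))<\pi(y)$; then $\pi(v)<\pi(y)$, so $(v,y)$ was a back arc, $y$ is a proper ancestor of $v$, and since $y\neq p(v)$ it is a proper ancestor of $p(v)$. A new arc $(x,p(v))$ replacing an arc $(x,v)$ of $G$ with $x\notin\{v,p(v)\}$ needs checking only when $\pi(x)<\pi(p(v))$; if $\pi(x)<\pi(v)$ then $(x,v)$ was a back arc and $v$, hence $p(v)$, is a proper ancestor of $x$, while if $\pi(v)<\pi(x)<\pi(p(v))$ then $\pi(x)$ lies in the contiguous $\pi$-block of the subtree of $p(v)$ (which contains $\pi(v)$ and has maximum $\pi(p(v))$), so $x$ is a proper descendant of $p(v)$. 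Finally, an arc $(p(u),w)$ added by Step 0 of $\mathit{transform}(u,v)$ has $w$ a proper descendant of $u$, hence of $p(u)$, so $\pi(w)<\pi(p(u))$ and the condition is vacuous. A pure contraction creates only arcs of the first three kinds, $\mathit{transform}$ also the Step 0 arcs, so in either case $T'$ satisfies the characterization, which gives the lemma.

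The main obstacle is the subcase of the arc $(x,p(v))$ with $\pi(v)<\pi(x)<\pi(p(v))$: this is the one place where the interval structure of the postorder, rather than pure bookkeeping with the inequalities, is needed, in order to rule out $x$ being unrelated to $p(v)$ and thereby turning $(x,p(v))$ into a forbidden ``forward-looking'' cross arc of $T'$. A secondary point to be careful about is the interaction of Steps 0 and 1 of a transform: one should verify that Step 0 adds no arc incident to $v$, so that the arcs of $G'$ out of or into $p(v)$ created by Step 1 genuinely come from arcs of $G$ out of or into $v$, which is what the case analysis assumes.
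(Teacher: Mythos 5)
Your proof is correct. The paper itself disposes of this lemma with the single sentence ``The proof is straightforward,'' so there is no argument to compare against; your write-up supplies exactly the detail the authors omit. The structure is the natural one: verify that the restricted orders are the preorder and postorder of $T'$ (your interval argument for this is right), and then check the back-arc condition arc by arc. The only ingredient you import beyond the paper is the \emph{sufficiency} of the condition ``every arc $(x,y)$ with $\pi(x)<\pi(y)$ is a back arc'' for being a depth-first spanning tree; the paper explicitly cites only the necessity direction (via \cite{st:t}), but the converse is equally standard, so this is a fair appeal. Your two flagged delicate points are indeed the right ones: the contiguity of the postorder block of $p(v)$'s subtree is what rules out a new forward-going cross arc $(x,p(v))$ in the subcase $\pi(v)<\pi(x)<\pi(p(v))$, and Step 0 of a transform adds only arcs $(p(u),w)$ with $w$ a proper descendant of $u$ and $p(u)\neq v$, $w\notin\{v,p(v)\}$, so those arcs are untouched by the contraction and are forward arcs of $T'$, for which the condition is vacuous.
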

\begin{proof}
The proof is straightforward.
\end{proof}

We choose $T$ to be a depth-first spanning tree.  To delete vertices on cycles, we add a \textbf{while} loop at the end of the main loop of Algorithm AD that repeatedly does transforms on pairs $u$, $v$ such that $(v, u)$ is a back arc. Just before doing such a transform it adds all vertices in $\mathit{same}(v)$ to $\mathit{same}(u)$. Once there are no such back arcs, any cycle containing $u$ also contains a proper ancestor of $u$. The result is Algorithm GD, which finds dominators in a general graph. It marks arcs just like Version 1 of Algorithm AD.  Each arc added in Step 0 of \emph{transform} is unmarked, and each arc added by a contraction is marked if and only if the arc it replaces was marked.

\begin{figure}[h]
\begin{center}
\fbox{
\begin{minipage}[h]{16cm}
\begin{center}
\textbf{Algorithm GD: Find Dominators in a General Graph, Version 1}
\end{center}
\begin{description}\setlength{\leftmargin}{10pt} \setlength{\itemsep}{0pt} \parskip0pt \parsep0pt
\item[Initialization:] Find a depth-first spanning tree $T$ of the input graph $G$, let $p$ be the parent function of $T$, number the vertices of $T$ from $1$ to $n$ in a bottom-up order, and identify vertices by number.  Unmark all arcs of $G$. Assign $\mathit{same}(u) \leftarrow \{u\}$ for each vertex $u$.\\
\item[Main Loop:]
\textbf{for} $u = 1$ \textbf{until} $n$ \textbf{do} \hfill
\item[] \hspace{16ex} \textbf{while} some tree or forward arc $(u, v)$ is unmarked \textbf{do} \hfill
\item[] \hspace{16ex} $\{$ \ \ mark $(u, v)$; \hfill
\item[] \hspace{19.5ex} \textbf{if} all arcs into $v$ are marked \textbf{then}\hfill
\item[] \hspace{19.5ex} $\{$ \ \ \textbf{if} $u=p(v)$ \textbf{then} \textbf{for} $w \in \mathit{same}(v)$ \textbf{do} $d(w) \leftarrow u$
\item[] \hspace{23.5ex} \textbf{else} $\mathit{same}(p(v)) \leftarrow \mathit{same}(p(v)) \cup \mathit{same}(v)$;
\item[] \hspace{23.5ex} contract $v$ into $p(v)$ \ \ $\}$ \ \ $\}$
\item[] \hspace{16ex} \textbf{while} back or loop arc $(v, u)$ exists \textbf{do} \hfill
\item[] \hspace{19.5ex} \textbf{if} $v = u$ \textbf{then} mark $(v, u)$
\item[] \hspace{19.5ex} \textbf{else} \ \ $\{$ \ \ $\mathit{same}(u) \leftarrow \mathit{same}(u) \cup \mathit{same}(v)$; $\mathit{transform}(u, v)$  \ \ $\}$  \hfill
\end{description}
\end{minipage}
}
\end{center}
\end{figure}

\begin{figure}
\begin{center}
\scalebox{0.7}[0.7]{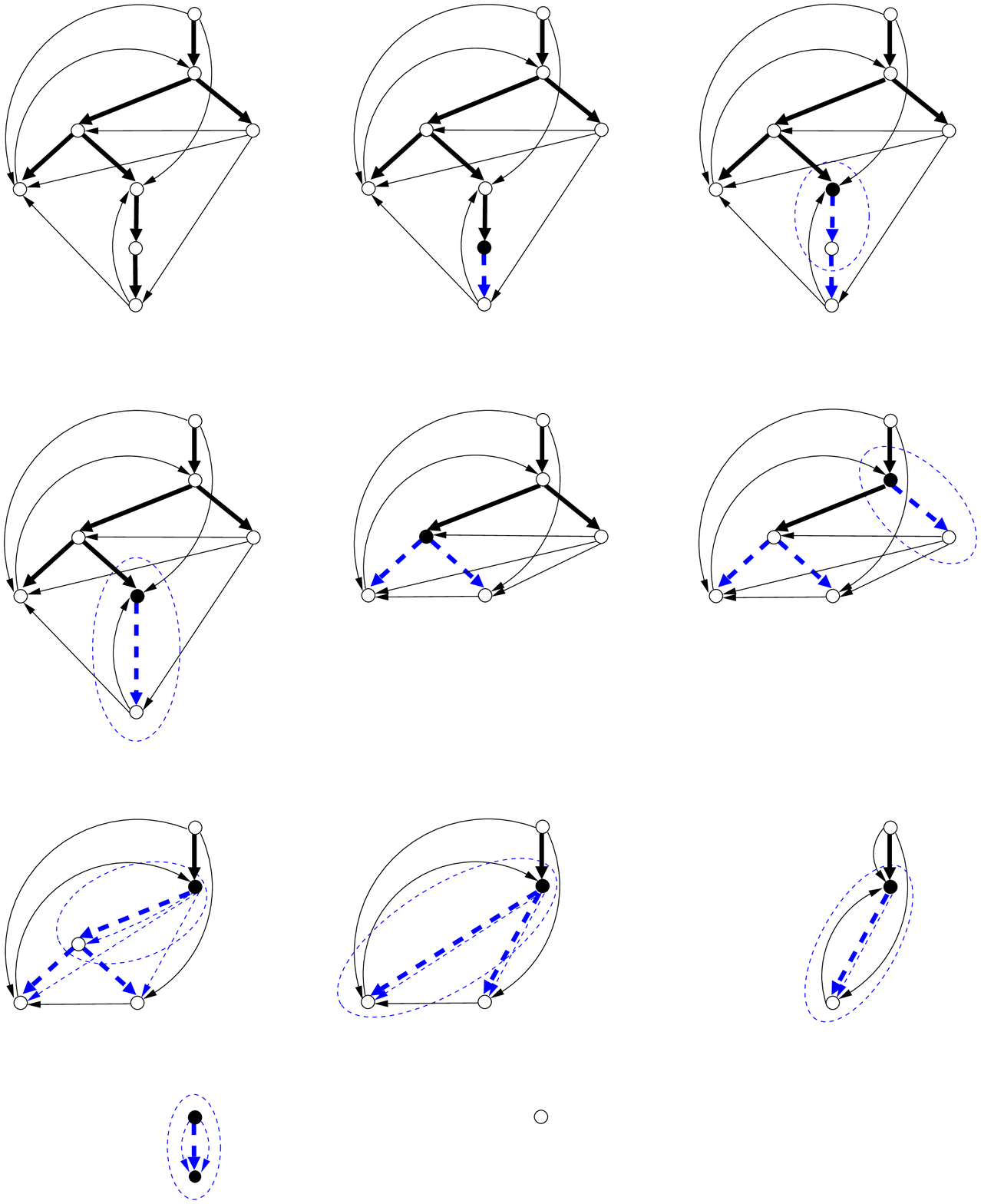}
\end{center}
\caption{\label{fig:general-example} Execution of algorithm GD on the flow graph in Figure \ref{fig:dominator-tree}, using the depth-first spanning tree in Figure \ref{fig:dfs-loops}.
Tree arcs are shown bold; the vertex $u$ processed in the main loop is shown filled; marked arcs are shown dashed; loop arcs created by contractions are omitted.}
\end{figure}

Figure \ref{fig:general-example} illustrates how Algorithm GD algorithm works. The correctness proof of Algorithm GD is analogous to that of Algorithm AD.

\begin{lemma}
\label{lemma:transform}
If $\mathit{transform}(u, v)$ is done during iteration $u$ of the main loop and $p(v) \not= u$, then $\mathit{transform}(u, p(v))$ is also done.
\end{lemma}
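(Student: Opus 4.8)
The plan is to track the arc that triggers $\mathit{transform}(u,v)$ and show it survives, as an unmarked back arc into $u$ with tail $p(v)$, until $\mathit{transform}(u,p(v))$ is performed; since the back-arc \textbf{while} loop of iteration $u$ runs until no unmarked back or loop arc into $u$ remains, this forces $\mathit{transform}(u,p(v))$ to be done.

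First I would record the relevant structure. The operation $\mathit{transform}(u,v)$ is executed only because the back-arc \textbf{while} loop of iteration $u$ selected a back arc $(v,u)$ of the current graph with $v\neq u$. Since $(v,u)$ is a back arc, $v$ is a proper descendant of $u$ in the current tree, so $u\neq v$ and $p(v)$ lies on the tree path from $u$ to $v$; as $p(v)\neq u$, $p(v)$ is a proper descendant of $u$. Step~1 of the transform contracts $v$ into $p(v)$, which replaces $(v,u)$ by $(p(v),u)$ — not a loop arc, since $p(v)\neq u$ — and, by the parent-update rule defining $T'$ (equivalently Lemma~\ref{lemma:tree-transform}), leaves $p(v)$ a proper descendant of $u$; hence $(p(v),u)$ is a back arc. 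It is also unmarked: arcs get marked only when they are tree, forward, or loop arcs, and contractions take tree/forward/loop arcs to tree/forward/loop arcs, so (as with Lemma~\ref{lemma:marked-arc}) a back arc is never marked.

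Next I would argue persistence. Through the remainder of the back-arc \textbf{while} loop of iteration $u$, the only operations are markings of loop arcs $(u,u)$ and transforms $\mathit{transform}(u,w)$; Step~0 of a transform adds only arcs out of $p(u)$, never into $u$ (as $u$ is not a proper descendant of $u$), and Step~1 contracts $w$ into $p(w)$, deleting $w$ and relabeling the tail of each arc $(w,\cdot)$ to $p(w)$. Since during iteration $u$ the vertex $u$ is never deleted nor used as an argument of a transform, the head of our arc stays $u$, and its tail changes only when $\mathit{transform}(u,\cdot)$ is applied to the current tail vertex. Hence, until $\mathit{transform}(u,p(v))$ is done, our arc remains an unmarked back arc into $u$ with tail $p(v)$. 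Finally the \textbf{while} loop terminates (each transform deletes a vertex, and only finitely many loop arcs into $u$ are ever created), and on termination no unmarked back or loop arc into $u$ remains; so $\mathit{transform}(u,p(v))$ must have been performed.

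The step I expect to be the main obstacle is this persistence argument: one has to check carefully that no plain contraction occurs outside a transform during the back-arc \textbf{while} loop, that a contraction changes an arc's tail exactly when that tail vertex is the one contracted, and that Step~0 additions and the new loop arcs created along the way do not interfere — for which it is enough (and clean) to observe that we only need \emph{some} unmarked back arc into $u$ to survive, which is precisely what the bookkeeping above provides.
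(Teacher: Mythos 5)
Your proposal is correct and follows essentially the same route as the paper: the contraction in $\mathit{transform}(u,v)$ turns the triggering back arc $(v,u)$ into a back arc $(p(v),u)$, and the second \textbf{while} loop of iteration $u$ cannot terminate while that arc exists, so $\mathit{transform}(u,p(v))$ must be performed. The paper states this in two sentences and leaves the persistence and termination bookkeeping implicit; your write-up just makes those details explicit.
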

\begin{proof}
The operation $\mathit{transform}(u, v)$ adds a back arc $(p(v), u)$.  The existence of such an arc will trigger $\mathit{transform}(u, p(v))$ later in iteration $u$.
\end{proof}

\begin{corollary}
\label{corollary:transform}
If $v$ is deleted during the second \textbf{while} loop of iteration $u$, then so are all ancestors of $v$ that are proper descendants of $u$ in $T$.
\end{corollary}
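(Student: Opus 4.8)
The plan is to recast the claim as a statement about which transforms get performed, and then to propagate ``$\mathit{transform}(u,\cdot)$ was performed'' up the tree from $v$ using Lemma \ref{lemma:transform}. First I would observe that, in the second \textbf{while} loop of iteration $u$, every vertex deletion is caused by a $\mathit{transform}(u,\cdot)$ — the $v=u$ branch only marks a loop arc — and that $\mathit{transform}(u,w)$ deletes precisely $w$; so ``$w$ is deleted during the second \textbf{while} loop of iteration $u$'' is synonymous with ``$\mathit{transform}(u,w)$ is performed during iteration $u$''. The hypothesis thus says $\mathit{transform}(u,v)$ is performed, and I must show $\mathit{transform}(u,w)$ is performed for every ancestor $w$ of $v$ in $T$ that is a proper descendant of $u$. (Here ``in $T$'' may be read as the current tree at the instant $v$ is deleted; all such $w$ are then undeleted, and since contracting a vertex into its parent re-parents only that vertex's children, the ancestor relation among undeleted vertices never changes, so the choice of instant is immaterial.)

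Next I would argue by taking a counterexample as close to $v$ as possible. Suppose $\mathit{transform}(u,w)$ is never performed for some ancestor $w$ of $v$ that is a proper descendant of $u$; among all such $w$ pick one with the fewest tree edges to $v$, and write the path from $v$ to $w$ in $T$ as $v = w_0, w_1, \dots, w_k = w$. For each $i<k$, $\mathit{transform}(u,w_i)$ is performed: this holds for $i=0$ by hypothesis, and for $0<i<k$ because $w_i$ then lies strictly between $v$ and $w$ — hence is an ancestor of $v$, a proper descendant of $u$, and strictly closer to $v$ than $w$ — so minimality of $w$ forbids it from being a counterexample. Let $\mathit{transform}(u,w_\ell)$, $\ell<k$, be the last of $\mathit{transform}(u,w_0),\dots,\mathit{transform}(u,w_{k-1})$ to be performed. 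By the time it runs, all of $w_0,\dots,w_{k-1}$ except $w_\ell$ have been deleted while $w_k=w$ has not; since each such deletion contracted a vertex into its then-current parent, the successive removals of $w_{\ell+1},\dots,w_{k-1}$ have carried the parent of $w_\ell$ up the chain to the first surviving vertex, namely $w$. As $w$ is a proper descendant of $u$ we have $w\ne u$, so Lemma \ref{lemma:transform} applied to $\mathit{transform}(u,w_\ell)$ forces $\mathit{transform}(u,w)$ to be performed during iteration $u$ — contradicting the choice of $w$. Hence no such $w$ exists, which is exactly the corollary.

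The easy parts are identifying deletions with transforms and the invariance of ancestry under contractions. The hard part will be the timing bookkeeping, which is precisely why I would phrase the argument around a ``last performed'' transform: a back arc $(w_{i+1},u)$ may already be present before $\mathit{transform}(u,w_i)$ creates one, so the transforms along the path need not fire from the bottom up, and the current parent of a given $w_i$ at a given moment is in general some proper ancestor of $w_{i+1}$ rather than $w_{i+1}$ itself. Singling out the last of the transforms strictly below $w$ pins down exactly which vertices are still present, and hence lets Lemma \ref{lemma:transform} be invoked with $w$ as its second argument.
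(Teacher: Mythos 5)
Your proposal is correct and takes essentially the same route as the paper, which simply says the corollary ``follows from Lemma \ref{lemma:transform} by induction on the number of ancestors of $v$ that are proper descendants of $u$'': both arguments climb the tree path from $v$ toward $u$ by repeatedly invoking Lemma \ref{lemma:transform} with the \emph{current} parent as its second argument. Your minimal-counterexample and ``last transform performed'' bookkeeping is just a more explicit rendering of the timing details that the paper's one-line induction leaves implicit.
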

\begin{proof}
The corollary follows from Lemma \ref{lemma:transform} by induction on the number of ancestors of $v$ that are proper descendants of $u$.
\end{proof}

\begin{lemma}
\label{lemma:general-pass2}
During the main loop of Algorithm GD, except in the middle of the second \textbf{while} loop, no vertex $x < u$ has an entering back arc.
\end{lemma}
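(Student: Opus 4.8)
The plan is to prove the invariant $I$: \emph{no vertex $x < u$ has an entering back arc}, and to show that $I$ holds at every point during the main loop except, possibly, at times strictly inside the second \textbf{while} loop. The argument is an induction on the sequence of elementary actions of the main loop (incrementing $u$, marking an arc, performing a contraction in the first \textbf{while} loop, performing a $\mathit{transform}$ in the second), in the style of the proofs of Lemmas \ref{lemma:acyclic-loop} and \ref{lemma:invariants}. I use two standing facts. (i) Every marked arc is a tree, forward, or loop arc; this is shown just as in Lemma \ref{lemma:marked-arc}, since the second \textbf{while} loop only ever marks loop arcs and a contraction sends tree, forward, and loop arcs to tree, forward, and loop arcs. (ii) $T$ remains a depth-first spanning tree (Lemma \ref{lemma:tree-transform}), so every cycle of the current graph contains a back arc. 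I also use the routine case analysis of how a contraction of $v$ into $p(v)$ changes arc types: only arcs incident to $v$ change, an image $(w,p(v))$ of an arc into $v$ is a back arc only if $(w,v)$ was, and an image $(p(v),y)$ of an arc out of $v$ is a back arc only if $(v,y)$ was (so only if $y$ is a proper ancestor of $v$).

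\emph{Base case and loop boundary.} Initially $u=1$ and $I$ is vacuous. When $u$ is incremented to $u$, I must check vertex $u-1$: the second \textbf{while} loop of iteration $u-1$ terminates only when no back or loop arc enters $u-1$, and nothing happens between that point and the increment; all other $x<u-1$ are covered by $I$ for index $u-1$, which holds at the end of iteration $u-1$ (not an excepted time).

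\emph{First \textbf{while} loop.} Marking changes no arc's type. Consider a contraction of $v$ into $p(v)$ triggered here: $(u,v)$ is tree or forward, so $v$ is a proper descendant of $u$ with $v<u$, and all arcs into $v$ are marked. By fact (i) every arc into $v$ is tree, forward, or loop, so each image $(w,p(v))$ is again tree, forward, or loop, never a back arc. An image $(p(v),y)$ of an outgoing arc $(v,y)$ can be a back arc only if $(v,y)$ was one, i.e. only if $y$ is a proper ancestor of $v$; but then, by $I$ applied just before this contraction, $y\ge u$, so the new back arc enters a vertex $\ge u$. Hence $I$ is preserved throughout the first \textbf{while} loop, in particular at the moment the second \textbf{while} loop begins. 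For the second \textbf{while} loop, $I$ is allowed to fail, but the damage is contained: Step~0 of a $\mathit{transform}$ only adds arcs $(p(u),w)$ with $w$ a proper descendant of $u$, and since $p(u)$ is an ancestor of $w$ these are not back arcs; and a contraction of $v$ into $p(v)$ inside a $\mathit{transform}$ has $p(v)$ equal to $u$ or a proper descendant of $u$, so by the same case analysis every back arc it creates enters $p(v)$ (hence $u$ or a proper descendant of $u$) or a proper ancestor of $u$ (a vertex $>u$). Combined with $I$ at the start of the loop, this shows that throughout the second \textbf{while} loop the only vertices $x<u$ that can have an entering back arc are proper descendants of $u$ in $T$.

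It then remains to show $I$ is restored when the second \textbf{while} loop ends, i.e. that at that point no proper descendant of $u$ has an entering back arc. Suppose not, and let $y$ be an alive proper descendant of $u$ with an entering back arc at loop termination. The idea is that $y$ must already have been deleted: whenever a contraction gives some proper descendant $z$ of $u$ an entering back arc, that contraction was $\mathit{transform}(u,z')$ for a child (in $T$) $z'$ of $z$, which by Lemma \ref{lemma:transform} forces $\mathit{transform}(u,z)$ as well; and contracting $z$ into $p(z)$ then moves this back arc to an entering back arc of $p(z)$, again a proper descendant of $u$ (or $u$ itself), so the phenomenon percolates up the $T$-path toward $u$, each step forcing a further $\mathit{transform}(u,\cdot)$ by Lemma \ref{lemma:transform} and Corollary \ref{corollary:transform}, until the back arc enters $u$, where the loop condition forces it to be processed; the loop cannot terminate while any such vertex survives. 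Chasing this chain contradicts the existence of $y$, so $I$ holds at the end of the second \textbf{while} loop, completing the induction. I expect this final step to be the main obstacle; the clean way to carry it out is to isolate, from Lemma \ref{lemma:transform} and Corollary \ref{corollary:transform}, the statement that at any moment in the second \textbf{while} loop of iteration $u$ every alive proper descendant of $u$ with an entering back arc is a $T$-ancestor of some vertex $v$ with $(v,u)$ currently a back arc, after which the rest is the routine arc-type bookkeeping sketched above.
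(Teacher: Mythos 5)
Your proof is correct and follows essentially the same route as the paper's: induction over iterations of the main loop, using the fact that marked arcs are tree, forward, or loop arcs (Lemma \ref{lemma:general-marked}) to handle contractions in the first \textbf{while} loop, and Lemma \ref{lemma:transform} and Corollary \ref{corollary:transform} to show that back arcs into descendants of $u$ created in the second \textbf{while} loop percolate up to $u$ and become loop arcs before that loop ends. One small caveat: your preliminary claim that an image $(w,p(v))$ of an arc into $v$ is a back arc only if $(w,v)$ was is false in general (a cross arc from a sibling subtree of $v$ under $p(v)$ becomes a back arc into $p(v)$), but you never rely on it --- in the first \textbf{while} loop you correctly appeal to your fact (i) instead, and in the second you only need that such images enter $p(v)$.
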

\begin{proof}
The lemma is true before the first iteration of the main loop.  Suppose the lemma is true at the beginning of iteration $u$.  A contraction of $v$ into $p(v)$ done during the first \textbf{while} loop of iteration $u$ can only create a new back arc by replacing an arc $(v, z)$ by $(p(v), z)$.  If $(p(v), z)$ is a back arc, so is $(v, z)$.  Since the lemma was true before the contraction, $z \ge u$, which makes the lemma true after the contraction.  By induction on the number of contractions done in the first \textbf{while} loop, the lemma is true at the end of the loop.  Consider the transforms done during the second \textbf{while} loop of iteration $u$.  Step 0 of a transform cannot add any back arcs.  If $\mathit{transform}(u, v)$ adds a back arc $(y, p(v))$, Corollary \ref{corollary:transform} implies that this arc will have been replaced by a loop or back arc $(y', u)$ by the end of the \textbf{while} loop.  The \textbf{while} loop cannot end while a back arc into $u$ exists, so all such arcs become loop arcs by the end of the \textbf{while} loop.  The other possibility is that $\mathit{transform}(u, v)$ adds a back arc $(p(v), z)$.  If $z$ is a descendant of $u$, Corollary \ref{corollary:transform} implies that this back arc is replaced by a loop arc $(u, u)$ by the end of the \textbf{while} loop.  If $z$ is not a descendant of $u$, $z > u$.  We conclude that any back arcs into vertices less than u that are added during the second \textbf{while} loop become loop arcs before the end of the \textbf{while} loop, and any back arcs into $u$ existing before the second \textbf{while} loop or added during the second \textbf{while} loop also become loop arcs before the end of the loop.  Thus the lemma holds at the end of the \textbf{while} loop, and at the beginning of the next iteration of the main loop.  By induction on the number of iterations, the lemma holds throughout the main loop.
\end{proof}

\begin{lemma}
\label{lemma:general-loops}
Throughout the main loop, each loop arc $(x, x)$ has $x \le u$.  During the first \textbf{while} loop, all loop arcs $(x, x)$ with $x < u$ are marked.  At the end of the second \textbf{while} loop, all loop arcs are marked.
\end{lemma}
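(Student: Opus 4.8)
The plan is to prove the three assertions simultaneously by induction on the iteration $u$ of the main loop, with an inner induction on the markings, contractions, and transforms performed within iteration $u$. The base case is immediate: the original graph has no loop arcs, so all three hold vacuously before iteration $1$, and iteration $1$ is empty because vertex $1$ is a leaf of $T$. For the inductive step I assume the three assertions hold through iteration $u-1$; then by the third assertion (or the base case) every loop arc is marked at the start of iteration $u$, and by the first assertion every loop arc $(x,x)$ then has $x \le u-1 < u$.

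The engine for all three parts is that new loop arcs arise only from contractions: markings add no arcs, and Step~0 of a $\mathit{transform}$ adds only arcs $(p(u),w)$ with $w$ a proper descendant of $u$ in $T$, hence $w \ne p(u)$, never a loop arc. When $v$ is contracted into $p(v)$, a new loop arc can appear only at $p(v)$, arising from an arc $(p(v),v)$, an arc $(v,p(v))$, or an existing loop arc $(v,v)$. In the first \textbf{while} loop the contracted $v$ is the head of a just-marked tree or forward arc $(u,v)$, so $u$ is an ancestor of $p(v)$ in $T$ and $v < p(v) \le u$; in the second \textbf{while} loop $\mathit{transform}(u,v)$ requires $u$ to be a proper ancestor of $v$, so again $p(v) \le u$. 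Hence every new loop arc has first end at most $u$, which with the inductive hypothesis gives the first assertion throughout iteration $u$. For the second assertion I run the inner induction: when $v$ is contracted into $p(v)$ with $p(v) < u$ in the first \textbf{while} loop, the arc $(p(v),v)$ enters $v$ and is therefore marked (a contraction in the first \textbf{while} loop occurs only once all arcs into $v$ are marked); an arc $(v,p(v))$ would be a back arc entering $p(v) < u$, which is impossible during the first \textbf{while} loop by Lemma \ref{lemma:general-pass2}; and a loop arc $(v,v)$ has $v < u$ and is marked by the inner hypothesis. In every case the new loop arc inherits a mark, so the second assertion holds throughout the first \textbf{while} loop.

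For the third assertion, note that by the second assertion all loop arcs with first end $< u$ are marked at the end of the first \textbf{while} loop; only loop arcs $(u,u)$ may still be unmarked, and since $u$ is never contracted during iteration $u$ these remain as $(u,u)$, so the second \textbf{while} loop keeps running while any of them is unmarked and marks each one via its $v=u$ branch. It remains to handle loop arcs with first end $< u$ created during the second \textbf{while} loop. Such an arc comes from some $\mathit{transform}(u,v)$ with $p(v) < u$; but then $v$ is deleted during the second \textbf{while} loop, $p(v)$ is an ancestor of $v$ that is a proper descendant of $u$, and applying Corollary \ref{corollary:transform} repeatedly (each deletion in the second \textbf{while} loop being a contraction into the parent) shows that $p(v)$ and every vertex on the tree path strictly between $v$ and $u$ is also deleted during the second \textbf{while} loop. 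Thus each such loop arc is dragged upward until its first end becomes $u$, where it gets marked before the loop ends. Consequently any loop arc with first end $< u$ that survives to the end of the second \textbf{while} loop must already have been present (with first end $< u$) at its start, hence marked by the second assertion, and marks are never removed. This establishes the third assertion and completes the induction.

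The main obstacle is the third assertion, specifically the bookkeeping for loop arcs that are born unmarked at a proper descendant of $u$ inside the second \textbf{while} loop: one must show that the cascade of transforms forced by Lemma \ref{lemma:transform} and Corollary \ref{corollary:transform} inevitably lifts every such arc to first end $u$ --- so that none can settle at an intermediate first end and escape marking --- and that marks propagate correctly through that cascade of contractions. The remaining work is routine case analysis of which arcs a contraction can turn into a loop arc.
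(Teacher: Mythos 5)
Your proof is correct and follows essentially the same route as the paper's: new loop arcs arise only from contractions, the case analysis on whether the new loop arc $(p(v),p(v))$ replaces $(v,v)$, $(p(v),v)$, or $(v,p(v))$ with Lemma \ref{lemma:general-pass2} ruling out the problematic subcase in the first \textbf{while} loop, and Corollary \ref{corollary:transform} lifting any loop arc born at a proper descendant of $u$ in the second \textbf{while} loop up to $(u,u)$ before that loop can terminate. You simply spell out a few steps the paper leaves implicit (e.g., that Step~0 of a transform never creates a loop arc and that $u$ itself is never contracted during iteration $u$).
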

\begin{proof}
By assumption, the initial graph has no loop arcs.  Each new loop arc is added by a contraction of a vertex $v$ into its parent $p(v)$.  Suppose the lemma holds before the addition.  The new loop arc is $(p(v), p(v))$.  It replaces an arc $(v, v)$, $(p(v), v)$, or $(v, p(v))$.  Furthermore $p(v) \le u$.  Thus the first part of the lemma holds after the addition.  Suppose the new loop is added during the first \textbf{while} loop.  If it replaces an arc $(v, v)$ or $(p(v), v)$, it is marked when added.  If it replaces $(v, p(v))$, Lemma \ref{lemma:general-pass2} implies that $p(v) = u$.  Thus the second part of the lemma holds after the addition.

Suppose the new loop arc is added during the second \textbf{while} loop.  Corollary \ref{corollary:transform} implies that either the new loop arc is $(u, u)$ or it will be replaced by $(u, u)$ before the second \textbf{while} loop ends.  It follows that the second \textbf{while} loop cannot end while there is an unmarked loop $(x, x)$ with $x < u$.  It also cannot end while there is an unmarked loop $(u, u)$.
\end{proof}

\begin{lemma}
\label{lemma:general-marked}
Throughout the main loop, each marked arc $(x, y)$ is a tree, forward, or loop arc such that $x \le u$.
\end{lemma}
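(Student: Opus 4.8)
The plan is to prove the invariant by induction on the number of primitive operations performed during the main loop --- markings of arcs, contractions (including the contraction that constitutes Step~1 of a $\mathit{transform}$), and additions of arcs in Step~0 of a $\mathit{transform}$ --- mirroring the proof of Lemma~\ref{lemma:marked-arc}. Passing from one iteration to the next only increases $u$ and changes no arc, so it can only weaken the constraint $x\le u$ and preserves the invariant for free; thus it suffices to check that each primitive operation preserves it. The base case is trivial: no arc is marked before the first iteration.

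First I would handle the markings. In the first \textbf{while} loop of iteration $u$, the algorithm marks an arc $(u,v)$ that is at that instant a tree or forward arc, so its first end equals the current $u$, and the invariant holds for it. In the second \textbf{while} loop the only arcs marked are loop arcs $(u,u)$ (the branch $v=u$), whose first end is again $u$. Any previously marked arc untouched by the current operation still satisfies the invariant, being unchanged with $u$ not decreased.

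Next come the contractions, which is where the real work lies. A contraction of $v$ into $p(v)$ occurs either at the end of the first \textbf{while} loop, just after $(u,v)$ was marked as a tree or forward arc, or as Step~1 of $\mathit{transform}(u,v)$, in which case $u$ is a proper ancestor of $v$ in $T$. In both cases $p(v)$ lies on the tree path from $v$ to $u$, so $p(v)\le u$, using that the current numbering is a bottom-up numbering of the current tree (a property preserved by contractions) and hence that a weak descendant has number no larger than any of its ancestors. For the arc-type part, I would use the explicit new parent function $p'$ ($p'=p$ except that $p'(w)=p(v)$ when $p(w)=v$) to check, by cases, that the contraction sends a marked tree or forward arc out of $v$, say $(v,y)$, to a tree or forward arc $(p(v),y)$; a marked tree or forward arc into $v$, say $(w,v)$, to the loop arc $(p(v),p(v))$ if $w=p(v)$ and to a tree or forward arc $(w,p(v))$ otherwise; and a marked loop arc $(v,v)$ to the loop arc $(p(v),p(v))$. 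The induction hypothesis says that every marked arc incident to $v$ is a tree, forward, or loop arc with first end at most $u$; in particular the back arc $(v,p(v))$ is unmarked, so any newly created marked copy of $(p(v),p(v))$ inherits its mark from $(p(v),v)$ or $(v,v)$ and is therefore a loop arc with first end $p(v)\le u$. Finally, Step~0 of a $\mathit{transform}$ adds only unmarked arcs, so it preserves the invariant vacuously, and induction on the number of operations completes the proof.

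The main obstacle is the (routine but fiddly) bookkeeping that arc types are preserved under a contraction, together with the supporting facts that the current numbering stays bottom-up for the current tree --- so that ``$p(v)\le u$'' follows from ``$u$ is a weak ancestor of $p(v)$'' --- and that, by Lemma~\ref{lemma:tree-transform}, the current tree stays depth-first. One point worth flagging: unlike Lemma~\ref{lemma:general-pass2}, the present lemma is asserted even in the middle of the second \textbf{while} loop, so we may not invoke the absence of low-numbered back arcs; happily this is not needed, since a back arc is never a tree, forward, or loop arc, and hence is never marked in the first place.
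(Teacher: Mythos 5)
Your proof is correct and follows essentially the same approach as the paper, whose proof is just the one-line summary ``each arc newly marked satisfies the lemma, and contractions and transforms preserve it''; you have filled in the routine case analysis that the paper leaves implicit. The key supporting observations you supply (that $p(v)\le u$ in both the contraction and transform cases, that Step~0 arcs are unmarked, and that the back arc $(v,p(v))$ cannot be marked by the induction hypothesis) are all sound.
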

\begin{proof}
Each arc newly marked by the algorithm satisfies the lemma, and contractions and transforms preserve it.
\end{proof}

\begin{lemma}
\label{lemma:invariants-2}
During any iteration of the \textbf{for} loop of the main loop, every undeleted vertex $v \not= s$ has $d(v) \ge u$; during the second \textbf{while} loop of the iteration, every undeleted vertex $v \not= s$ has $d(v) > u$.  Each contraction done during the first \textbf{while} loop is such that $u$ and $v$ satisfy the hypotheses of Lemmas \ref{lemma:parent-dom} and \ref{lemma:contract-1}.  Each operation $\mathit{transform}(u, v)$ is such that $u$ and $v$ satisfy the hypotheses of Lemmas \ref{lemma:loop} and \ref{lemma:loop-transform}.
\end{lemma}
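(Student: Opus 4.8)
The plan is to imitate the proof of Lemma~\ref{lemma:invariants}, running a double induction --- over iterations of the \textbf{for} loop, and, within an iteration, over the successive contractions and transforms --- and attaching the verification of the hypotheses of Lemmas~\ref{lemma:parent-dom}, \ref{lemma:contract-1}, \ref{lemma:loop}, and~\ref{lemma:loop-transform} to that induction. A preliminary observation I would make and use throughout is that the numbering stays bottom-up with respect to the current spanning tree: every contraction and every transform reassigns a vertex only to a parent of strictly larger number, so in the current $T$ an ancestor always has a larger number than its descendants, and consequently every proper dominator of a vertex $x$ has number larger than $x$, in fact at least $d(x)$. With this in hand, the first invariant, that every undeleted $v\not=s$ has $d(v)\ge u$ throughout iteration $u$, holds vacuously at the start of iteration~$1$ and, at the start of iteration $u>1$, follows from the second invariant at the end of iteration $u-1$; the rest of the first invariant and the hypothesis checks are then proved together by induction on the operations performed in iteration $u$.

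For a contraction of $v$ into $p(v)$ in the first \textbf{while} loop: all arcs into $v$ are marked, so by Lemma~\ref{lemma:general-marked} they are tree, forward, or loop arcs with tails at most $u$; thus $v$ has no entering cross or back arc, and the arc $(u,v)$ just marked realizes the maximum tail. Since $(u,v)$ is a tree or forward arc, $v$ is a proper descendant of $u$, so $p(v)$ is a descendant of $u$ in $T$, and either $p(v)=u$ (and then $d(v)=u$ follows at once) or $p(v)\not=s$ and $d(p(v))\ge u$ by the first invariant. Hence the hypotheses of Lemmas~\ref{lemma:parent-dom} and~\ref{lemma:contract-1} hold, and by Lemma~\ref{lemma:contract-1} the contraction preserves the dominators of the surviving vertices, so the first invariant persists across it. For an operation $\mathit{transform}(u,v)$ in the second \textbf{while} loop: it is triggered by a back arc $(v,u)$, which by itself is a path from $v$ to $u$ through descendants of $u$, so $v\in\mathit{loop}(u)$ and $v\not=u$; assuming the second invariant holds just before the transform, $d(v)>u$ and $d(x)>u$ for every descendant $x$ of $u$ in $T$, so the hypotheses of Lemmas~\ref{lemma:loop} and~\ref{lemma:loop-transform} hold; Lemma~\ref{lemma:loop-transform} then shows the transform preserves the dominators of the surviving vertices, and Lemma~\ref{lemma:loop} gives $d(v)=d(u)>u$, so deleting $v$ preserves both invariants.

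It remains to establish the second invariant at the moment the second \textbf{while} loop of iteration $u$ begins: no undeleted $v\not=s$ then has $d(v)=u$. I would argue by contradiction. Let $S$ be the set of such vertices; for $v\in S$, $u$ is a proper $T$-ancestor of $v$, so $v<u$, and $v$, being undeleted, still has an unmarked entering arc $(x,v)$ (otherwise the first \textbf{while} loop would have contracted it). If $x=u$, then $(x,v)$ is an unmarked tree or forward arc out of $u$, contradicting the fact that the first \textbf{while} loop has terminated, so $x\not=u$; and if $d(x)>u$ then, since every proper dominator of $x$ has number at least $d(x)>u$, $u$ does not dominate $x$, so some path from $s$ to $x$ avoids $u$, and appending $(x,v)$ gives a path from $s$ to $v$ avoiding $u$, contradicting $d(v)=u$. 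Hence $d(x)=u$ (recall $d(x)\ge u$ by the first invariant), i.e.\ $x\in S$. Following these entering arcs backward within the finite set $S$ yields a cycle $C$ with all vertices in $S$, which is a cycle of the current graph; since the current spanning tree is depth-first (Lemma~\ref{lemma:tree-transform}), $C$ contains a back arc, and that arc enters a vertex of $S$, a vertex numbered below $u$ --- contradicting Lemma~\ref{lemma:general-pass2}. So $S=\emptyset$, and together with the first invariant this gives $d(v)>u$ at the start of the second \textbf{while} loop; the transform analysis above then maintains $d(v)>u$ throughout that loop, completing the induction.

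I expect the main obstacle to be this last contradiction argument, and in particular two supporting points inside it: that an undeleted vertex really does retain an unmarked incoming arc at the end of the first \textbf{while} loop --- which I would justify by observing that a vertex's number of unmarked incoming arcs decreases only when an arc into it is explicitly marked (which during iteration $u$ happens only for tree or forward arcs $(u,\cdot)$, immediately after which that vertex's contraction is tested and, if warranted, performed), whereas contractions and Step~0 of transforms merely rename arc tails or add new, unmarked arcs without ever removing a mark; and that the cycle of ``$d(\cdot)=u$'' vertices extracted this way is genuinely a cycle of the current graph, so that Lemmas~\ref{lemma:tree-transform} and~\ref{lemma:general-pass2} apply. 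The remaining work --- the base cases and inductive step for the first invariant, and the four hypothesis verifications --- is routine bookkeeping once the bottom-up numbering is known to be preserved.
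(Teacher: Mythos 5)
Your proof is correct and follows essentially the same route as the paper's, which likewise reduces everything to the argument of Lemma~\ref{lemma:invariants}, verifies the contraction hypotheses via the marked-arc invariant, and uses Lemma~\ref{lemma:general-pass2} together with the depth-first property to rule out cycles among the proper descendants of $u$. The only detail you leave implicit is that the unmarked entering arc of a vertex with $d(\cdot)=u$ cannot be a loop arc at the end of the first \textbf{while} loop (so the extracted cycle genuinely has more than one vertex and therefore contains a back arc); this is exactly what Lemma~\ref{lemma:general-loops} supplies, and it is the reason the paper cites that lemma in its own proof.
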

\begin{proof}
By the proof of Lemma \ref{lemma:invariants}, (i) the first invariant holds at the beginning of the first iteration; (ii) if the first invariant holds at the beginning of the first iteration, it continues to hold throughout the first \textbf{while} loop; and (iii) by Lemma \ref{lemma:general-loops}, if the first invariant holds before a contraction done during the first \textbf{while} loop, $u$ and $v$ satisfy the hypotheses of Lemmas \ref{lemma:parent-dom} and \ref{lemma:contract-1}.  The proof of Lemma \ref{lemma:invariants} also implies that if the first invariant holds at the end of the first \textbf{while} loop, then so does the second, since the proof that there is no undeleted vertex $v$ such that $d(v) = u$ only requires that there are no cycles among the proper descendants of $u$, which follows from Lemma \ref{lemma:general-pass2}.  If the second invariant holds before an operation $\mathit{transform}(u, v)$, then $u$ and $v$ satisfy the hypotheses of Lemmas \ref{lemma:loop} and \ref{lemma:loop-transform}, and by Lemma \ref{lemma:loop-transform} the second invariant holds after the transform.  The lemma follows by induction.
\end{proof}

\begin{corollary}
\label{corollary:general-start-vertex}
At the end of the main loop, only $s$ is undeleted.
\end{corollary}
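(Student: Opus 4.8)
The plan is to mirror the proof of Corollary \ref{corollary:start-vertex}, with Lemma \ref{lemma:invariants-2} playing the role that Lemma \ref{lemma:invariants} played there. First I would observe that $s$ is never deleted by Algorithm GD: every contraction contracts some $v \neq s$ into $p(v)$, and every operation $\mathit{transform}(u, v)$ requires $u$ to be a proper ancestor of $v$ in $T$, so the vertex $v$ it contracts is a proper descendant of $u$ and hence is not the root $s$. Thus $s$ is undeleted at the end of the main loop, and it remains only to rule out any other undeleted vertex.

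Next I would apply the second invariant of Lemma \ref{lemma:invariants-2} to iteration $u = n$ of the \textbf{for} loop. That lemma tells us that at the end of the second \textbf{while} loop of iteration $n$ --- i.e., at the end of the main loop --- every undeleted vertex $v \neq s$ has $d(v) > n$. But in a bottom-up numbering $s$ receives the number $n$, and for every $v \neq s$ the immediate dominator $d(v)$ is a proper ancestor of $v$ in $T$, so $d(v) \leq n$. This contradiction forces the set of undeleted vertices other than $s$ to be empty, so only $s$ remains.

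The two points that need a little (routine) care are the following. (i) The second invariant of Lemma \ref{lemma:invariants-2} is phrased as holding ``during'' the second \textbf{while} loop, and I would want it at that loop's termination; this is fine because the only actions taken there are operations $\mathit{transform}(u, v)$ and markings of loop arcs, and by Lemma \ref{lemma:loop-transform} a transform merely deletes $v$ and leaves the immediate dominators of all surviving vertices unchanged, so the condition $d(v) > u$ for undeleted $v \neq s$, once established at the end of the first \textbf{while} loop, persists to the end of the iteration. (ii) One should also note that the main loop terminates, so that ``the end of the main loop'' is meaningful: each iteration's first \textbf{while} loop marks finitely many arcs, and its second \textbf{while} loop either deletes a vertex or marks a loop arc, and there are at most $n - 1$ deletions in all and only finitely many loop arcs (each is created by some contraction, and there are at most $n - 1$ contractions). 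I do not expect either point to be a genuine obstacle; as with Corollary \ref{corollary:start-vertex}, the corollary is essentially immediate once Lemma \ref{lemma:invariants-2} is in hand.
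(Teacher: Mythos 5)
Your proof is correct and follows essentially the same route as the paper: invoke the second invariant of Lemma \ref{lemma:invariants-2} at iteration $u = n$, so any undeleted $v \neq s$ would need $d(v) > n$, which is impossible since $d(v)$ is an ancestor of $v$ in $T$ and hence numbered at most $n$. The extra care you take about persistence of the invariant to the end of the second \textbf{while} loop and about termination is sound but routine; the paper leaves it implicit.
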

\begin{proof}
Once the second invariant of Lemma \ref{lemma:invariants-2} holds for $u = n$, $s$ is the only undeleted vertex.
\end{proof}

\begin{theorem}
\label{theorem:GD-correct}
Algorithm GD is correct.
\end{theorem}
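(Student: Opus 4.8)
The plan is to mirror the correctness proof of Algorithm~AD (Theorem~\ref{theorem:AD-correct}), with Lemma~\ref{lemma:invariants-2} doing the heavy lifting and some extra bookkeeping to account for transforms. Lemma~\ref{lemma:invariants-2} already guarantees that every operation the main loop performs is legitimate: each contraction carried out in the first \textbf{while} loop meets the hypotheses of Lemmas~\ref{lemma:parent-dom} and~\ref{lemma:contract-1}, and each $\mathit{transform}(u,v)$ meets the hypotheses of Lemmas~\ref{lemma:loop} and~\ref{lemma:loop-transform}. Hence, by Lemmas~\ref{lemma:contract-1} and~\ref{lemma:loop-transform}, every contraction and every transform preserves the dominators of the surviving vertices, so the immediate dominator $d(v)$ of an undeleted vertex $v$ is the same at every stage of the run as in the original graph; consequently any value the algorithm ever writes into $d$ is a value of the true dominator tree.

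The remaining work is to verify the $\mathit{same}$-set bookkeeping, exactly as for Algorithm~AD but now also covering the transform step. I would maintain the invariant that the sets $\mathit{same}(\cdot)$ partition $V$ and that, for every undeleted vertex $x$, we have $x\in\mathit{same}(x)$ and every member of $\mathit{same}(x)$ has the same immediate dominator as $x$. This holds initially, and I would check that it is preserved at each of the three places a set changes: merging $\mathit{same}(v)$ into $\mathit{same}(p(v))$ in the first \textbf{while} loop when $u\neq p(v)$ is justified by Lemma~\ref{lemma:parent-dom}, which gives $d(v)=d(p(v))$; merging $\mathit{same}(v)$ into $\mathit{same}(u)$ just before a transform is justified by Lemma~\ref{lemma:loop} together with the invariant $d(v)>u$ from Lemma~\ref{lemma:invariants-2}, which gives $d(v)=d(u)$; and the assignment $d(w)\leftarrow u$ for all $w\in\mathit{same}(v)$ when $u=p(v)$ is justified by Lemma~\ref{lemma:parent-dom}, since then $d(v)=u$ and all members of $\mathit{same}(v)$ share $d(v)$.

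Finally I would argue termination and completeness. The first \textbf{while} loop terminates because each pass strictly decreases the number of unmarked tree or forward arcs out of $u$; each pass through the second \textbf{while} loop either marks a loop arc or deletes a vertex by a transform, so it too terminates. By Corollary~\ref{corollary:general-start-vertex}, only $s$ is undeleted when the main loop ends, so every other vertex is deleted. To see that $d$ is then defined everywhere, track the set containing a deleted vertex $w$: whenever that set is merged, its ``owner'' (the undeleted vertex naming it) moves to a proper ancestor in $T$, and since $s$, the topmost vertex, is never contracted and never initiates a merge ($p(v)=s$ forces $u=s$, so the resolving branch $u=p(v)$ is taken), the set must eventually be owned by a vertex contracted in an iteration with $u=p(v)$, at which point $d(w)$ is assigned its correct value. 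Combining this with Lemmas~\ref{lemma:parent-dom}, \ref{lemma:contract-1}, \ref{lemma:loop}, \ref{lemma:loop-transform}, \ref{lemma:invariants-2} and Corollary~\ref{corollary:general-start-vertex} yields the theorem.

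The step I expect to be the main obstacle is getting the interplay between the $\mathit{same}$-sets and the two deletion mechanisms exactly right: one must ensure that a transform merges $\mathit{same}(v)$ into $\mathit{same}(u)$ only while $d(v)>u$ genuinely holds (precisely the second invariant of Lemma~\ref{lemma:invariants-2}, which itself rests on Lemma~\ref{lemma:general-pass2} ruling out cycles among the proper descendants of $u$), and that no set is ever stranded with a member whose immediate dominator is never recorded. If one prefers to sidestep this bookkeeping, the same reasoning shows the alternative three-pass version correct, with the third pass resolving the immediate dominators in top-down order.
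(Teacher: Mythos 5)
Your proposal is correct and follows essentially the same route as the paper: the paper's proof of this theorem is a one-line appeal to Lemmas \ref{lemma:loop}, \ref{lemma:loop-transform}, \ref{lemma:invariants-2}, and Corollary \ref{corollary:general-start-vertex}, and you invoke exactly these ingredients, merely spelling out the $\mathit{same}$-set bookkeeping, termination, and completeness details that the paper leaves implicit (mirroring its treatment of Algorithm AD). The only nitpick is that a contraction with $p(v)=u$ can temporarily increase the count of unmarked tree or forward arcs out of $u$, so termination of the first \textbf{while} loop should be argued from the global count of unmarked arcs, but this does not affect the substance of the argument.
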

\begin{proof}
The theorem is immediate from Lemmas \ref{lemma:loop}, \ref{lemma:loop-transform}, \ref{lemma:invariants-2}, and Corollary \ref{corollary:general-start-vertex}.
\end{proof}

We make Algorithm GD efficient in the same way as Algorithm AD.  First we observe that the algorithm remains correct with Step 0 of $\mathit{transform}$ deleted, so that it becomes merely a contraction.  Furthermore the algorithm remains correct if all marked arcs are deleted.  Arcs that are marked when added have no effect on the computation.  A new arc $(p(u), w)$ added by $\mathit{transform}(u, v)$ is unmarked when added, but if Algorithm GD marks each such arc at the beginning of iteration $p(u)$ of the main loop, the effect is the same as if the arc had never been added in the first place.  Once an arc is marked, it has no further effect on the behavior of the algorithm.

\begin{theorem}
\label{theorem:GD-correct-2}
Algorithm GD with each $\mathit{transform}$ replaced by a contraction and with all marked arcs deleted is correct.
\end{theorem}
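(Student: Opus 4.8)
The plan is to argue exactly as in the proof of Theorem~\ref{theorem:AD-correct-2}: fix an arbitrary run of the modified algorithm and construct a run of the original Algorithm~GD that uses the same depth-first spanning tree, the same bottom-up numbering, and the same choices of arcs to mark on the arcs the two runs have in common; then invoke Theorem~\ref{theorem:GD-correct} for that run. It is cleanest to separate the two simplifications --- deleting all marked arcs, and replacing each $\mathit{transform}$ by a bare contraction (dropping Step~0) --- and to establish each one by such a coupling, composing the two results.

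The easy half is the deletion of marked arcs, which I would treat in isolation with Step~0 still present. Couple a run of the ``delete marked arcs'' version with a run of the ``keep marked arcs'' version, maintaining by induction on the number of steps the invariant that the two runs have the same undeleted vertices, the same current parent function, the same $\mathit{same}$-sets and the same $d$-values assigned so far, and that the current graph of the delete-version equals the current graph of the keep-version with its marked arcs removed. Marking enters the control flow only through the tests ``all arcs into $v$ are marked'' (which in the delete-version reads ``$v$ has no incoming arc''), ``some tree or forward arc $(u,v)$ is unmarked'' (which reads ``some tree or forward arc $(u,v)$ exists''), and the marking of loop arcs; in each case the delete-version makes the same decision, and both contraction and $\mathit{transform}$ propagate marks in one version exactly as they propagate arc deletions in the other. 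So the two runs compute the same $d$; this parallels the AD case almost verbatim.

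The substantive half is dropping Step~0. One cannot argue that each individual operation preserves the dominators of the undeleted vertices, since Figure~\ref{fig:transform-counterexample} shows it does not, so the argument must be that the algorithm's \emph{output} is unchanged. I would introduce an intermediate version GD$^{*}$ that performs Step~0 exactly as GD does but additionally marks each arc $(p(u),w)$ created by Step~0 of a $\mathit{transform}(u,v)$ at the beginning of iteration $p(u)$ of the main loop, and show (i) GD$^{*}$ computes the same immediate dominators as GD, and (ii) GD$^{*}$ computes the same immediate dominators as the version with Step~0 removed entirely. For (i): each Step-0 arc is a tree or forward arc whose first end is a proper ancestor of $u$, and since first ends of arcs only move toward the root under contractions, such an arc would never be examined by GD before iteration $p(u)$; marking it no later than GD would cannot corrupt any $d$-value, because --- exactly as in the proof of Lemma~\ref{lemma:invariants-2} --- one carries the invariant that every contraction and every $\mathit{transform}$ is executed under the hypotheses of Lemmas~\ref{lemma:parent-dom} and~\ref{lemma:contract-1}, respectively Lemmas~\ref{lemma:loop} and~\ref{lemma:loop-transform}, so every $d$-value assigned is the true one and hence agrees with GD's. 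For (ii): in GD$^{*}$ every Step-0 arc is marked from the start of iteration $p(u)$ onward and has first end $\ge p(u)$ throughout; thus it is never examined during iterations $u,\dots,p(u)-1$ and from iteration $p(u)$ on it is marked and therefore inert, so the only effect it could have is to delay, during that window, the deletion of the current representative of its head $w$ --- a delay that, by the invariants just checked, does not change the $d$-value $w$ ultimately receives. Hence GD$^{*}$ and the Step-0-free version produce the same output. One should also record the analog of Corollary~\ref{corollary:general-start-vertex}: the modified main loop still deletes every vertex but $s$, so all $d$-values do get assigned.

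The main obstacle is part~(i) above: the claim that marking a tree or forward arc ``early'' never alters a $d$-value that GD eventually outputs, in spite of the cascading of contractions and $\mathit{transform}$s and the fact that the intermediate graphs of the modified algorithm have dominators disagreeing with the true ones. The way I would make this airtight is to carry the coupling with GD as an explicit induction hypothesis --- same undeleted vertices, same $\mathit{same}$-sets, same assigned $d$-values --- and to show that whenever either run assigns a $d$-value the applicable hypotheses of Lemmas~\ref{lemma:parent-dom}/\ref{lemma:contract-1}/\ref{lemma:loop}/\ref{lemma:loop-transform} hold at that moment, so both runs assign the correct value; differences in the timing of deletions between the two runs are then immaterial. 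Getting the bookkeeping of this correspondence right, especially across a $\mathit{transform}$ step where the two runs' graphs genuinely differ by the Step-0 arcs, is where the real work lies.
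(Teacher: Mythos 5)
Your proposal follows essentially the same route as the paper: its proof also couples the modified run with a run of the original Algorithm GD that \emph{preferentially} marks the arcs added by Step~0 of each $\mathit{transform}$ (this is exactly your GD$^{*}$), makes the same remaining marking choices, and concludes by induction on the number of arcs marked that the two runs compute the same immediate dominators. Your decomposition into two separate couplings and your discussion of why the early marking of Step-0 arcs is harmless merely spell out details that the paper compresses into one short paragraph, so the approach is the same.
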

\begin{proof}
Consider a run of Algorithm GD with transforms replaced by contractions and arcs deleted when they are marked.  Rerun the original version of Algorithm GD, choosing the same spanning tree and the same vertex numbering, preferentially marking arcs added by Step 0 of \emph{transform}, and otherwise making the same choices of arcs to mark as in the run with the transforms replaced and marked arcs deleted.  An induction on the number of arcs marked shows that this is always possible.  The two runs compute the same immediate dominators.
\end{proof}

As in Algorithm AD, we use a disjoint set data structure to keep track of contractions.  We maintain the original arc ends and use the disjoint set structure to map these to the current ends: if $(x, y)$ is an original arc, the corresponding current arc is $(\mathit{find}(x), \mathit{find}(y))$.  In the main loop we add arcs to the graph at the same time as in Algorithm AD.  Specifically, if $(x, y)$ is an arc such that $\mathit{nca}(x, y) = u$, we add $(x, y)$ to the graph at the beginning of iteration $u$. Arc $(x, y)$ cannot become a tree or forward arc until $x$ is replaced by an ancestor of $u$, which cannot happen until iteration $u$.  Similarly, arc $(x, y)$ cannot become a back or loop arc until $y$ is replaced by an ancestor of $u$. We need to keep track of arcs both into and out of vertices, so we maintain two bags of vertices for each undeleted vertex $v$: $\mathit{in}(v)$, which contains a copy of vertex $x$ for each original arc $(x, y)$ such that $v = \mathit{find}(y)$ and $(x, y)$ has been added to the graph, and $\mathit{out}(v)$, which contains a copy of vertex $y$ for each original arc $(x, y)$ such that $v = \mathit{find}(x)$ and $(x, y)$ has been added to the graph.  As arcs are marked, the corresponding entries are deleted from \emph{in}-bags and \emph{out}-bags, but some of these deletions are done lazily, as we discuss below.  If $y$ is in $\mathit{out}(u)$ during iteration $u$, then there is a corresponding tree, forward, or loop arc $(u, \mathit{find}(y))$; if $x$ is in $\mathit{in}(u)$ during iteration $u$, then there is a corresponding back or loop arc $(\mathit{find}(x), u)$.

Because of the lazy deletions, we need a slightly more complicated way of deciding when to do contractions in the first \textbf{while} loop than in Algorithm AD.  In addition to maintaining a count $\mathit{total}(v)$ of the total number of unmarked arcs into vertex $v$, we maintain a count $\mathit{added}(v)$ of the number of
arcs into $v$ that have been added to the graph during the main loop but are not yet marked.  When a tree or forward arc $(u, v)$ is marked in the first \textbf{while} loop, we decrement both $\mathit{total}(v)$ and $\mathit{added}(v)$.  To avoid the possibility of marking loop arcs twice (once in each \textbf{while} loop), we mark loop arcs only in the second \textbf{while} loop, and we do the marking implicitly at the end of this \textbf{while} loop.  At the end of iteration $u$, all added arcs into $u$ are loop arcs.  Some of these may have been marked earlier, when they were tree or forward arcs.  The value of $\mathit{added}(u)$
is the number of such arcs that have not yet been marked.  To mark these implicitly, we assign $\mathit{total}(u) \leftarrow \mathit{total}(u) - \mathit{added}(u)$ and $\mathit{added}(u) \leftarrow 0$.  This avoids the need for synchronized deletions of corresponding vertices from \emph{in}-bags and \emph{out}-bags, allowing some of these deletions to be done late or not at all.


\begin{figure}
\begin{center}
\fbox{
\begin{minipage}[h]{16cm}
\begin{center}
\textbf{Algorithm GD: Find Dominators in a General Graph, Version 2}
\end{center}
\begin{description}\setlength{\leftmargin}{10pt} \setlength{\itemsep}{0pt} \parskip0pt \parsep0pt
\item[Initialization:] Find a depth-first spanning tree of the input graph, let $p$ be its parent function, number its vertices in a bottom-up order, and identify vertices by number. Compute $\mathit{total}(u) \leftarrow |\{(x, u) \in A\}|$ and $\mathit{arcs}(u) \leftarrow \{(x, y) \in A \ | \ \mathit{nca}(x, y) = u\}$ for each vertex $u$.\\
\item[Main Loop:]
\textbf{for} $u = 1$ \textbf{until} $n$ \textbf{do} \hfill
\item[] \hspace{12.5ex}  $\{$ \ \ $\mathit{out}(u) \leftarrow [ \ ]$; $\mathit{in}(u) \leftarrow [ \ ]$;  \hfill
\item[] \hspace{16ex} $\mathit{make\text{-}set}(u)$; $\mathit{added}(u) \leftarrow 0$; \hfill
\item[] \hspace{16ex} $\mathit{same}(u) \leftarrow \{u\}$;
\item[] \hspace{16ex} \textbf{for} $(x, y) \in \mathit{arcs}(u)$ \textbf{do}\hfill
\item[] \hspace{16ex} $\{$ \ \ add $y$ to $\mathit{out}(\mathit{find}(x))$; add $x$ to $\mathit{in}(\mathit{find}(y))$;
\item[] \hspace{19.5ex} $\mathit{added}(\mathit{find}(y)) \leftarrow \mathit{added}(\mathit{find}(y))$ + 1 \ \ $\}$ \hfill
\item[] \hspace{16ex} \textbf{while} $\mathit{out}(u) \not= [ \ ]$ \textbf{do} \hfill
\item[] \hspace{16ex} $\{$ \ \ delete some $y$ from $\mathit{out}(u)$;  $v \leftarrow \mathit{find}(y)$; \hfill
\item[] \hspace{19.5ex} \textbf{if} $v \not= u$ \textbf{then}
\item[] \hspace{19.5ex} $\{$ \ \ $\mathit{total}(v) \leftarrow \mathit{total}(v) - 1$; $\mathit{added}(v) \leftarrow \mathit{added}(v) - 1$ \ \ $\}$ \hfill
\item[] \hspace{19.5ex} \textbf{if} $\mathit{total}(v) = 0$ \textbf{then} \hfill
\item[] \hspace{19.5ex} $\{$ \ \ $x \leftarrow \mathit{find}(p(v))$; \hfill
\item[] \hspace{23ex} \textbf{if} $u = x$ \textbf{then} \textbf{for} $w \in \mathit{same}(v)$ do $d(w) \leftarrow u$
\item[] \hspace{23ex} \textbf{else} $\mathit{same}(x) \leftarrow \mathit{same}(x) \cup \mathit{same}(v)$; \hfill
\item[] \hspace{23ex} $\mathit{unite}(p(v), v)$; $\mathit{out}(x) \leftarrow \mathit{out}(x) \cup \mathit{out}(v)$ \ \ $\}$ \ \ $\}$ \hfill
\item[] \hspace{16ex} \textbf{while} $\mathit{in}(u) \not= [ \ ]$ \textbf{do} \hfill
\item[] \hspace{16ex} $\{$ \ \ delete some $z$ from $\mathit{in}(u)$;  $v \leftarrow \mathit{find}(z)$; \hfill
\item[] \hspace{19.5ex} \textbf{while} $v \not= u$ \textbf{do} \hfill
\item[] \hspace{19ex} \ $\{$ \ \ $\mathit{same}(u) \leftarrow \mathit{same}(u) \cup \mathit{same}(v)$; \hfill
\item[] \hspace{23ex} $x \leftarrow \mathit{find}(p(v))$;  $\mathit{unite}(p(v), v)$; \hfill
\item[] \hspace{23ex} $\mathit{in}(x) \leftarrow \mathit{in}(x) \cup \mathit{in}(v)$; \hfill
\item[] \hspace{23ex} $\mathit{out}(x) \leftarrow \mathit{out}(x) \cup \mathit{out}(v)$; \hfill
\item[] \hspace{23ex} $\mathit{total}(x) \leftarrow \mathit{total}(x) + \mathit{total}(v)$; \hfill
\item[] \hspace{23ex} $\mathit{added}(x) \leftarrow \mathit{added}(x) + \mathit{added}(v)$;
\item[] \hspace{23ex} $v \leftarrow x$ \ \ $\}$ \ \ $\}$ \hfill
\item[] \hspace{16ex} $\mathit{total}(u) \leftarrow \mathit{total}(u) - \mathit{added}(u)$; $\mathit{added}(u) \leftarrow 0$ \ \ $\}$ \hfill \\
\end{description}
\end{minipage}
}
\end{center}
\end{figure}

The resulting implementation is Version 2 of Algorithm GD. The body of the loop ``\textbf{while} $v \not= u$ \textbf{do}'' inside the second \textbf{while} loop of the main loop adds the vertices in $\mathit{same}(v)$ to $\mathit{same}(u)$, contracts $v$ into its parent $x$, and replaces $v$ by $x$. The effect of this loop is to contract the entire tree path from $u$ to the original $v$ into $u$.  Instead of waiting to delete a back arc $(v, u)$ until it becomes a loop arc, the algorithm deletes it as soon as it is found but does contractions as if the arc were undeleted until it becomes a loop arc. Whereas a contraction in the first \textbf{while} loop can change only the first ends of arcs and does not add any new arcs into a vertex (since $v$ has no incoming arcs when contracted), a contraction in the second \textbf{while} loop can change either end of an arc, and can add new arcs into $x$. Thus not only $\mathit{in}(x)$ but also $\mathit{out}(x)$, $\mathit{total}(x)$, and $\mathit{added}(x)$ must be updated.

\begin{lemma}
\label{lemma:general-invariants2-a}
During either \textbf{while} loop in iteration $u$ of the main loop, for each original arc $(x, y)$ such that $\mathit{nca}(x, y) \le u$ and whose corresponding arc is unmarked, $y$ is in $\mathit{out}(\mathit{find}(x))$ and $x$ is in $\mathit{in}(\mathit{find}(y))$.
\end{lemma}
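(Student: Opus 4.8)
The plan is to prove this by induction on the number of elementary steps executed inside the main loop, following closely the pattern of the proof of Lemma~\ref{lemma:acyclic-pass2}.

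First I would handle the start of iteration $u$. By the induction hypothesis at the end of iteration $u-1$, the invariant holds for every unmarked original arc $(x,y)$ with $\mathit{nca}(x,y)\le u-1$. The inner \textbf{for} loop over $\mathit{arcs}(u)$ then inserts, for every original arc $(x,y)$ with $\mathit{nca}(x,y)=u$, a copy of $y$ into $\mathit{out}(\mathit{find}(x))$ and a copy of $x$ into $\mathit{in}(\mathit{find}(y))$; since every such arc is unmarked at that moment, the invariant holds for all arcs with $\mathit{nca}(x,y)\le u$. At this point I need the auxiliary observation that at the start of iteration $u$ we have $\mathit{find}(z)\neq u$ for every $z\neq u$ (because $\mathit{make\text{-}set}(u)$ is not performed before iteration $u$, so no contraction before iteration $u$ can name a set $u$); consequently the assignments $\mathit{out}(u)\leftarrow[\,]$ and $\mathit{in}(u)\leftarrow[\,]$ at the top of iteration $u$ discard no entry the invariant still needs, and the entry just inserted for an arc $(x,y)$ with $\mathit{nca}(x,y)=u$ goes into $\mathit{out}(u)$ (resp.\ $\mathit{in}(u)$) only when $x=u$ (resp.\ $y=u$).

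Next I would check that each elementary step preserves the invariant. There are essentially four kinds of step. (i) Deleting a vertex $y$ from $\mathit{out}(u)$ and, when $v=\mathit{find}(y)\neq u$, decrementing $\mathit{total}(v)$ and $\mathit{added}(v)$: this marks precisely the arc whose $\mathit{out}$-entry was just removed, so the arc leaves the scope of the invariant; its $\mathit{in}$-entry is permitted to linger, since, unlike Lemma~\ref{lemma:acyclic-pass2}, this lemma does not assert that the bags contain nothing else. (ii) Deleting a vertex $z$ from $\mathit{in}(u)$ in the second \textbf{while} loop and running the ``\textbf{while} $v\neq u$'' contraction of the tree path up to $u$: the $\mathit{transform}$ marks the back arc it processes, so again that arc leaves the scope of the invariant. (iii) Each \emph{unite} together with the accompanying bag unions $\mathit{out}(x)\leftarrow\mathit{out}(x)\cup\mathit{out}(v)$ and (in the second \textbf{while} loop) $\mathit{in}(x)\leftarrow\mathit{in}(x)\cup\mathit{in}(v)$: these change $\mathit{find}$ consistently for all elements of $v$'s set, and the bag unions carry the required copies along, so for every arc whose endpoint images change the invariant is maintained. (iv) The implicit marking $\mathit{total}(u)\leftarrow\mathit{total}(u)-\mathit{added}(u)$ at the end of iteration $u$: I would argue that by then every still-unmarked arc whose current form is $(u,u)$ is being marked, so the invariant is re-established for iteration $u+1$.

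The main obstacle is step~(i) in the degenerate case $v=\mathit{find}(y)=u$, and the analogous situations in the second \textbf{while} loop, where the $\mathit{out}$-entry of a current loop arc $(u,u)$ is removed with no accompanying bookkeeping. To deal with this I would invoke Lemmas~\ref{lemma:general-pass2} and~\ref{lemma:general-loops} (read for this implementation, where ``marked'' means ``removed from the count $\mathit{total}$''): any arc whose current form has become a loop arc $(x,x)$ with $x<u$ is already marked, and a loop arc $(u,u)$ arising during the first \textbf{while} loop comes either from an already-marked loop arc on a proper descendant of $u$ or from a back arc into $u$ whose copy in $\mathit{in}(u)$ still certifies it; hence discarding its stale $\mathit{out}(u)$-entry is harmless. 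Untangling exactly which copies survive the contractions and the clearings --- that is, bookkeeping the interaction of the lazy deletions with the \emph{unite} operations --- is the delicate part; the remainder of the argument is routine and parallels Lemma~\ref{lemma:acyclic-pass2}.
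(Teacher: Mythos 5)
Your proposal is correct and follows exactly the route the paper takes: the paper's entire proof is the one line ``straightforward by induction on the number of steps done by the main loop,'' and your argument is that induction carried out in detail, including the key observations the paper leaves implicit (that the lemma claims only containment, not exclusivity, so lingering entries are harmless, and that the degenerate loop-arc cases are covered because such arcs are already marked or are implicitly marked at the end of the iteration via Lemma \ref{lemma:general-invariants2-d}). The delicate bookkeeping you flag around lazy deletions is genuinely the crux, and your resolution via Lemmas \ref{lemma:general-pass2} and \ref{lemma:general-loops} is consistent with how the paper handles it.
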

\begin{proof}
The proof is straightforward by induction on the number of steps done by the main loop.
\end{proof}

\begin{lemma}
\label{lemma:general-invariants2-b}
Let $(x, y)$ be an original arc that is a back arc or becomes a back arc. Let $u$ be the iteration of the main loop during which $(x, y)$ is added if it is originally a back arc or during which it becomes a back arc.  During the second \textbf{while} loop of iteration $u$, vertex $x$ will be deleted from $\mathit{in}(u)$ and $(x, y)$ will become a loop arc $(u, u)$.  Vertex $y$ will never be deleted from an $\mathit{out}$-set.
\end{lemma}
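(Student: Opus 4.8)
The plan is to follow the single arc $(x,y)$ through the run, tracking its current incarnation $(\mathit{find}(x),\mathit{find}(y))$ and exploiting four standing facts: the representatives $\mathit{find}(x)$ and $\mathit{find}(y)$ only move \emph{up} the original tree $T$ (a contraction merges a set into the set of its representative's $T$-parent), so each stays on the $T$-root-path of its endpoint; by the postorder numbering a current arc $(a,b)$ is a back arc iff $a<b$ and a loop arc iff $a=b$; a non-loop arc is never marked, so (Lemma~\ref{lemma:general-invariants2-a}) until $(x,y)$ is a loop arc a copy of $y$ sits in $\mathit{out}(\mathit{find}(x))$ and a copy of $x$ in $\mathit{in}(\mathit{find}(y))$; and any vertex contracted during iteration $u$ has number $\le u$ (it is either a current tree/forward/loop child processed in the first \textbf{while} loop, or a proper current-descendant of $u$ contracted in the second). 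First I would pin down $u$ and show $\mathit{find}(y)=u$ when $(x,y)$ is, or becomes, a back arc. If $(x,y)$ is originally a back arc then $y$ is a proper $T$-ancestor of $x$, $\mathit{nca}(x,y)=y$, and the arc is added at the start of iteration $u=y$, where $\mathit{find}(y)=y$ and $\mathit{find}(x)$ is still a proper $T$-descendant of $y$ (a descendant of $y$ merges into $y$'s set only during iteration $y$), hence $\mathit{find}(x)<u$ and the current arc is the back arc $(\mathit{find}(x),u)$. If $(x,y)$ only becomes a back arc, the passage from $\mathit{find}(x)\ge\mathit{find}(y)$ to $\mathit{find}(x)<\mathit{find}(y)$ is caused by an increase of $\mathit{find}(y)$, which cannot occur in a first \textbf{while} loop ($v$ is contracted there only when $\mathit{total}(v)=0$, impossible while the unmarked arc $(x,y)$ enters $v$), hence occurs in the ``\textbf{while} $v\neq u$'' body of the second \textbf{while} loop of the current iteration $u$, which contracts a current-tree path up to $u$; so right after it $\mathit{find}(y)=u$. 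Either way, during the second \textbf{while} loop of iteration $u$ the arc is the back arc $(\mathit{find}(x),u)$ with $\mathit{find}(x)<u$, and by Lemma~\ref{lemma:general-invariants2-a} a copy of $x$ is in $\mathit{in}(u)$.

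Next I would run that second \textbf{while} loop forward: it drains $\mathit{in}(u)$ by final pops, so the copy of $x$ recording $(x,y)$ is eventually popped unless already consumed. At that step $v:=\mathit{find}(x)$; if $v=u$, an earlier pop in this iteration already contracted a path through $\mathit{find}(x)$ and $(x,y)$ is already the loop arc $(u,u)$. Otherwise $v<u$ and, since $(\mathit{find}(x),u)$ is a back arc, $u$ is a current-tree ancestor of $\mathit{find}(x)$; the inner loop contracts the current-tree path from $\mathit{find}(x)$ up to $u$ into $u$ — by Corollary~\ref{corollary:transform} every current-tree ancestor of $\mathit{find}(x)$ that is a proper descendant of $u$, in particular $\mathit{find}(x)$ itself, is deleted — so afterwards $\mathit{find}(x)=u=\mathit{find}(y)$, $(x,y)$ is the loop arc $(u,u)$, and the copy of $x$ has left $\mathit{in}(u)$ for good. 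This establishes the first two assertions (and, with Lemma~\ref{lemma:general-loops}, that the arc is marked by the end of iteration $u$).

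For the last assertion, observe that an entry is popped from an $\mathit{out}$-set, and acts on a vertex $v\neq u$, only when it is popped from $\mathit{out}(u)$ during iteration $u$ and the current arc out of $u$ it records is a tree or forward arc. But the copy of $y$ recording $(x,y)$ lies in $\mathit{out}(\mathit{find}(x))$, and $(x,y)$ is never a tree or forward arc: once it is a back arc it never becomes one (the set of $\mathit{find}(x)$ only grows up $T$, so it reaches $\mathit{find}(y)$'s set before passing it, turning the arc into a loop), and it never enters a tree/forward state beforehand either — a tree/forward incarnation whose source-representative equalled the then-current loop index would be popped and consumed in that iteration's first \textbf{while} loop, so $(x,y)$ could not later become a back arc, and by the $\mathit{out}$-bag invariant (entries of $\mathit{out}(u)$ during iteration $u$ record only tree, forward, or loop arcs out of $u$) there is no other way for its out-copy to reach the active $\mathit{out}(u)$. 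Hence that copy of $y$ is never deleted from an $\mathit{out}$-set with effect; any eventual pop of it from some $\mathit{out}(u')$ has $\mathit{find}(y)=u'$, so the ``$v\neq u$'' branch is skipped.

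The step I expect to be the main obstacle is the bookkeeping in the first paragraph: proving that an increase of $\mathit{find}(y)$ that turns $(x,y)$ into a back arc must happen inside the second \textbf{while} loop of the very iteration in question and must carry $\mathit{find}(y)$ exactly to that loop index $u$ (not to some intermediate $T$-ancestor), and, dually, that $(x,y)$ never slips through a tree/forward incarnation that would consume it first. Doing this cleanly requires combining the ``representatives rise along $T$'' structure with Lemma~\ref{lemma:general-pass2} (no vertex $<u$ has an entering back arc except in the middle of a second \textbf{while} loop) and the $\mathit{out}(u)$-bag invariant; once those are pinned down, the remaining verifications are routine.
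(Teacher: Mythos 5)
Your argument is correct in substance, and it is worth noting that the paper itself offers no real proof here: its entire justification is ``straightforward by induction on the number of steps taken by the main loop.'' What you have written is essentially that induction, localized to a single arc, and you correctly isolate the two facts that make it work: (i) each disjoint set is a connected subtree of $T$ whose name is its topmost vertex, so $\mathit{find}(x)$ and $\mathit{find}(y)$ climb their root paths and one must enter the other's set before passing it (hence a back arc can only become a loop, and symmetrically a tree/forward arc can never become a back arc); and (ii) while the arc $(x,y)$ is unmarked it contributes to $\mathit{total}(\mathit{find}(y))$, so $\mathit{find}(y)$'s set cannot be contracted in a first \textbf{while} loop and can only be carried up to $u$ by the path contraction in the second \textbf{while} loop of iteration $u$. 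Your observation that the final sentence of the lemma should be read as ``never deleted \emph{with effect}'' (a pop with $\mathit{find}(y)=u'$ skips the $v\neq u$ branch) is accurate and, if anything, more careful than the statement being proved. A few caveats you should clean up: the standing fact ``a non-loop arc is never marked'' is false in general (tree and forward arcs are marked precisely while non-loop); what you need, and do eventually justify, is that \emph{this} arc is never a tree or forward arc, and that justification must be folded into the same induction as Lemmas \ref{lemma:general-invariants2-a}, \ref{lemma:general-invariants2-c}, and \ref{lemma:general-invariants2-d} since they lean on one another. Also, Version 2 assumes only a bottom-up order, not postorder, so ``$a<b$ iff back arc'' is not available — fortunately you argue via tree ancestry everywhere it matters. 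Finally, your case analysis for ``never tree/forward beforehand'' only treats the incarnation whose source-representative equals the current loop index; the remaining case ($\mathit{find}(x)$ a proper ancestor of $\mathit{find}(y)$ but still below the current index) is closed by the same subtree-connectivity argument you already use in the other direction, so state it explicitly rather than appealing to consumption from $\mathit{out}(u)$.
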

\begin{proof}
The proof is straightforward by induction on the number of steps taken by the main loop.
\end{proof}

\begin{lemma}
\label{lemma:general-invariants2-c}
Let $(x, y)$ be an original arc that is a tree or forward arc or becomes a tree or forward arc. If $x$ is deleted from $\mathit{in}(u)$ during the second \textbf{while} loop of iteration $u$ of the main loop, $(x, y)$ has become a loop arc; that is, $\mathit{find}(x) = u$.
\end{lemma}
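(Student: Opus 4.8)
The plan is to obtain Lemma~\ref{lemma:general-invariants2-c} as one clause of a simultaneous induction on the number of steps of the main loop that also carries Lemmas~\ref{lemma:general-invariants2-a} and~\ref{lemma:general-invariants2-b} together with the auxiliary invariant: \emph{every} execution of ``delete some $z$ from $\mathit{in}(u)$'' has $\mathit{find}(z)$ equal to $u$ or to a proper $T$-descendant of $u$, so that the ensuing inner ``\textbf{while} $v\neq u$'' loop exactly contracts the current $T$-path from $u$ down to $\mathit{find}(z)$. Two structural observations feed the argument. First, every $\mathit{unite}$ the algorithm performs has the form $\mathit{unite}(p(v),v)$ along a current tree edge, so each $\mathit{find}$-set is at all times a connected subtree of $T$ whose name is its largest-numbered ($T$-topmost) member, and the ancestor relation among sets only coarsens; consequently, once an arc has $\mathit{find}(x)$ a $T$-ancestor of (or equal to) $\mathit{find}(y)$ this stays true forever, so an arc that is, or ever becomes, a tree or forward arc is never a back arc. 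Second, an entry leaves $\mathit{in}(u)$ only through ``delete some $z$ from $\mathit{in}(u)$'': the updates $\mathit{in}(x')\leftarrow\mathit{in}(x')\cup\mathit{in}(v)$ merely move entries between bags or into $\mathit{in}(u)$, the first \textbf{while} loop touches no $\mathit{in}$-bag, and $u$ is not contracted during iteration~$u$.

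Given these, the inductive step for Lemma~\ref{lemma:general-invariants2-c} runs as follows. Suppose entry ``$x$'' of the arc $(x,y)$---which is, or becomes, a tree or forward arc---is deleted from $\mathit{in}(u)$ during the second \textbf{while} loop of iteration~$u$. By the second observation this occurs at an execution of ``delete some $z$ from $\mathit{in}(u)$'' with $z=x$, after which the algorithm sets $v\leftarrow\mathit{find}(x)$. By Lemma~\ref{lemma:general-invariants2-a} when the arc is still unmarked, and by the lazy-marking bookkeeping maintained in the induction when it has been marked, the arc currently corresponding to ``$x$'' is $(\mathit{find}(x),u)$; in particular $\mathit{find}(y)=u$. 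Since $(x,y)$ is never a back arc, $\mathit{find}(x)$ is not a proper $T$-descendant of $u$; and by the auxiliary invariant $\mathit{find}(x)$ is $u$ or a proper $T$-descendant of $u$. The only common possibility is $\mathit{find}(x)=u$, whence $\mathit{find}(x)=u=\mathit{find}(y)$ and $(x,y)$ is the loop arc $(u,u)$, as required. (Intuitively: if $\mathit{find}(x)$ were incomparable to $u$ or a proper $T$-ancestor of $u$, the inner ``\textbf{while} $v\neq u$'' loop, which repeatedly replaces $v$ by its current $T$-parent, could never reach~$u$.)

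The real work---and the step I expect to be the obstacle---is keeping the $\mathit{in}$-bag bookkeeping honest through the deferred deletions, i.e. proving the ``lazy-marking'' clause just used: once a tree or forward arc is marked, its leftover entry ``$x$'' is thereafter carried along only by the second-\textbf{while}-loop contractions of the head's class, so it can reach $\mathit{in}(u)$ only after $\mathit{find}(y)$ has climbed up to $u$, at which point the coarsening of classes has already forced $\mathit{find}(x)=\mathit{find}(y)$. Establishing this, together with the auxiliary invariant and Lemmas~\ref{lemma:general-invariants2-a} and~\ref{lemma:general-invariants2-b}, is exactly the kind of step-by-step case analysis (contractions in the first \textbf{while} loop, transforms in the second, arc insertions at the start of each iteration) that those two lemmas also require, and I would fold all of it into a single induction on the steps of the main loop.
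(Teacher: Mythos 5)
Your argument is correct and takes the same route the paper does: the paper's entire proof of this lemma is the one-line assertion that it follows ``by induction on the number of steps taken by the main loop,'' which is precisely the simultaneous induction you carry out. The supporting facts you isolate---each $\mathit{find}$-set is a connected subtree of $T$ named by its topmost vertex, the tree/forward-versus-back classification of an arc is monotone under contractions, and an entry can reach $\mathit{in}(u)$ only if $u$ is a $T$-ancestor of the arc's first endpoint, so $\mathit{find}(x)$ is simultaneously forced to be an ancestor of $u$ and a descendant of $u$---are exactly the details the paper leaves implicit.
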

\begin{proof}
The proof is straightforward by induction on the number of steps taken by the main loop.
\end{proof}

\begin{lemma}
\label{lemma:general-invariants2-d}
The main loop correctly maintains $\mathit{total}$ and $\mathit{added}$.
\end{lemma}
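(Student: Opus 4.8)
The plan is to prove by induction on the number of atomic steps of the main loop that the following invariant holds at every point between consecutive steps: for each undeleted vertex $v$, the value $\mathit{total}(v)$ equals the number of original arcs $(x,y)$ with $\mathit{find}(y)=v$ whose corresponding current arc is still unmarked (with the convention that a loop arc counts as unmarked until it receives its implicit mark at the end of its iteration), and $\mathit{added}(v)$ equals the number of such arcs that have in addition already been added to the graph, i.e.\ those with $\mathit{nca}(x,y)\le u$ for the current main-loop index $u$. In particular $0\le \mathit{added}(v)\le \mathit{total}(v)$, a consequence I will carry along in the induction and use below. The bridge that makes each update readable as a one-arc adjustment is the correspondence supplied by Lemmas~\ref{lemma:general-invariants2-a}--\ref{lemma:general-invariants2-c}: every unmarked added arc $(x,y)$ contributes the element $y$ to $\mathit{out}(\mathit{find}(x))$ and the element $x$ to $\mathit{in}(\mathit{find}(y))$; conversely, a stale (already-marked) entry can survive in a bag only as a loop arc, and such an entry always has $\mathit{find}=u$ during iteration $u$, hence is recognized as a loop arc and never triggers a decrement of $\mathit{total}$ or $\mathit{added}$. (For out-bags this is because marking a tree or forward arc deletes its copy from the out-bag; for in-bags it is exactly the content of Lemma~\ref{lemma:general-invariants2-c}.)

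Then I would verify each update in turn. Initialization sets $\mathit{total}(u)$ to the number of original arcs into $u$, none marked, and $\mathit{added}(u)\leftarrow 0$; since the class of $u$ is created only at the start of iteration $u$, no arc with $\mathit{find}(y)=u$ has been added yet, so the invariant holds. In the inner \textbf{for} loop over $\mathit{arcs}(u)$, adding $(x,y)$ leaves it unmarked, so $\mathit{total}$ is unchanged while $\mathit{added}(\mathit{find}(y))$ is incremented by one, as required. In the first \textbf{while} loop, deleting $y$ with $v=\mathit{find}(y)\ne u$ marks the tree-or-forward arc $(u,v)$ (which was added and, by the previous paragraph's correspondence, genuinely unmarked), and the decrements of $\mathit{total}(v)$ and $\mathit{added}(v)$ are exactly right; when $v=u$ the element is a loop arc and nothing changes, which is correct since loop arcs are marked only implicitly at iteration's end. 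The contraction $\mathit{unite}(p(v),v)$ of the first \textbf{while} loop is performed only when $\mathit{total}(v)=0$, whence $\mathit{added}(v)=0$; so $v$ has no unmarked arc into it, merging $v$ into $x$ creates no new unmarked arc into $x$, and leaving $\mathit{total}(x)$, $\mathit{added}(x)$ untouched is correct. In the second \textbf{while} loop, each contraction of $v$ into $x=\mathit{find}(p(v))$ turns every unmarked arc into $v$ into an arc into $x$ (possibly a loop arc, which still counts), so $\mathit{total}(x)\leftarrow\mathit{total}(x)+\mathit{total}(v)$ and $\mathit{added}(x)\leftarrow\mathit{added}(x)+\mathit{added}(v)$, together with the symmetric updates to $\mathit{in}$ and $\mathit{out}$, record exactly this; the back arc currently being processed becomes a loop arc $(u,u)$ and, correctly, stays counted in $\mathit{total}(u)$ and $\mathit{added}(u)$ for now. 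Finally, $\mathit{total}(u)\leftarrow\mathit{total}(u)-\mathit{added}(u)$, $\mathit{added}(u)\leftarrow 0$ at the end of iteration $u$ implicitly marks every added-but-unmarked arc into $u$; by Lemma~\ref{lemma:general-loops} and Corollary~\ref{corollary:transform} all of these are loop arcs, so this bulk marking is exactly right, after which $\mathit{total}(u)$ counts precisely the unmarked arcs into $u$ not yet added and $\mathit{added}(u)=0$.

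The main obstacle, I expect, is disentangling the bookkeeping created by lazy deletion: one must be sure that in the second \textbf{while} loop no unmarked arc is double-counted because its representative lingers in a stale $\mathit{in}$- or $\mathit{out}$-bag, and none is lost; and that the single back arc being processed is exactly the one whose eventual implicit marking at iteration's end is absorbed by the terminal $\mathit{total}(u)\leftarrow\mathit{total}(u)-\mathit{added}(u)$ adjustment. Lemmas~\ref{lemma:general-invariants2-b} and~\ref{lemma:general-invariants2-c}, which pin down exactly when an element leaves an $\mathit{in}$-bag (never until it has become a loop arc), are what make this case go through; the remaining cases reduce to routine one-arc accounting.
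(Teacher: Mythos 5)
Your proof is correct and follows essentially the same route as the paper's: fix a marking discipline (tree or forward arcs marked when popped from an out-bag, loop arcs marked implicitly at the end of the relevant second \textbf{while} loop), then verify by induction on the atomic steps of the main loop that $\mathit{total}(v)$ counts the unmarked arcs into $v$ and $\mathit{added}(v)$ counts those that have also been added. The paper merely states this marking convention and invariant and leaves the induction to the reader, whereas you carry out the case analysis explicitly; the substance is the same.
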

\begin{proof}
Mark an original arc $(z, y)$ when $y$ is deleted from some $\mathit{out}(u)$ such that $\mathit{find}(y) \not= u$ or at the end of the second \textbf{while} loop during which $z$ is deleted from some $\mathit{in}(u)$, whichever comes first.  When the former happens, $(z, y)$ is a tree or forward arc.  With this definition, the main loop maintains the invariant that, for each vertex $v$, $\mathit{total}(v)$ is the number of unmarked arcs into $v$ and $\mathit{added}(v)$ is the number of unmarked arcs into $v$ so far added by the inner \textbf{for} loop, as can be proved by induction on the number of steps taken by the main loop.  Also, the proof of Lemma \ref{lemma:general-loops} shows that after each iteration of the main loop, all loop arcs are marked by this definition.   \end{proof}

\begin{theorem}
\label{lemma:GD2-correct}
Version 2 of Algorithm GD is correct.
\end{theorem}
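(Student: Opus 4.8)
The strategy is to prove that Version~2 of Algorithm~GD is a faithful implementation of the abstract algorithm shown correct in Theorem~\ref{theorem:GD-correct-2}: Algorithm~GD with every $\mathit{transform}$ replaced by a contraction and every marked arc deleted. Concretely, I would set up a correspondence between a run of Version~2 and a run of that abstract algorithm in which Version~2's stored original arc ends, mapped through $\mathit{find}$, give the current arcs of the abstract run; a vertex is deleted in the abstract run exactly when it is no longer its own $\mathit{find}$ representative; and the $\mathit{same}$-sets are maintained identically on both sides. I would then show, by induction on the number of steps of the main loop, that the two runs stay in this correspondence and therefore make exactly the same contractions and the same assignments $d(w)\leftarrow u$, after which correctness is immediate from Theorem~\ref{theorem:GD-correct-2}.

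The inductive step amounts to matching, iteration by iteration, the bodies of the two algorithms. Lemma~\ref{lemma:general-invariants2-a} supplies the crucial fact that during iteration $u$ the bag $\mathit{out}(u)$ holds one copy of $y$ for each unmarked original arc $(x,y)$ with $\mathit{find}(x)=u$ --- i.e.\ for each current tree, forward, or loop arc out of $u$ --- and symmetrically $\mathit{in}(u)$ holds the tails of the current back and loop arcs into $u$; together with the observation in the text that an arc with $\mathit{nca}(x,y)=u$ cannot become a tree/forward or a back/loop arc before iteration $u$, this shows the first \textbf{while} loop examines exactly the arcs that Algorithm~GD's first \textbf{while} loop marks and the second \textbf{while} loop examines exactly the back arcs that trigger transforms. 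Deleting $y$ from $\mathit{out}(u)$ with $\mathit{find}(y)=v\neq u$ is the act of marking the tree/forward arc $(u,v)$, and by Lemma~\ref{lemma:general-invariants2-d} the counter $\mathit{total}(v)$ always equals the number of unmarked arcs into $v$, so the test $\mathit{total}(v)=0$ fires precisely when Algorithm~GD contracts $v$; the subsequent $\mathit{unite}(p(v),v)$, $\mathit{out}(x)\leftarrow\mathit{out}(x)\cup\mathit{out}(v)$, and $\mathit{same}$-update realize that contraction, and the branch $u=\mathit{find}(p(v))$ performs the $d(w)\leftarrow u$ assignments, which are the same ones Algorithm~GD makes. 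In the second \textbf{while} loop, repeatedly contracting $v=\mathit{find}(z)$ into its current parent until $\mathit{find}$ reaches $u$ contracts the whole tree path from $u$ down to the tail of the back arc $(z,u)$, which by Corollary~\ref{corollary:transform} is exactly the chain of transforms (now contractions) that Algorithm~GD performs; Lemma~\ref{lemma:general-invariants2-b} certifies that a genuine back arc is driven to a loop arc $(u,u)$ in this very iteration and that its head is never removed from an $\mathit{out}$-set, and Lemma~\ref{lemma:general-invariants2-c} certifies that if a tree or forward arc's tail still sits in $\mathit{in}(u)$ then the arc has already become a loop arc, so the inner loop ``\textbf{while} $v\neq u$'' exits at once and no spurious contraction occurs. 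Finally $\mathit{total}(u)\leftarrow\mathit{total}(u)-\mathit{added}(u)$ implicitly marks the remaining loop arcs into $u$, and the argument of Lemma~\ref{lemma:general-loops} shows these are all the loop arcs present at that point.

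Putting the pieces together shows that Version~2 performs the same sequence of contractions and the same $d$-assignments as the abstract algorithm of Theorem~\ref{theorem:GD-correct-2}, hence is correct. I expect the second \textbf{while} loop to be the main obstacle: one must verify that the eager deletion of each back arc from $\mathit{in}(u)$, the lazy (or entirely omitted) deletions of the other bag entries, and the migration of $\mathit{out}$, $\mathit{total}$, and $\mathit{added}$ into the surviving vertex all preserve the invariants of Lemmas~\ref{lemma:general-invariants2-a}--\ref{lemma:general-invariants2-d}, so that a contraction is never triggered one step too early or too late and that a vertex reached through a stale $\mathit{in}$- or $\mathit{out}$-entry after it has already been contracted is silently skipped rather than processed a second time (including the degenerate iteration $u=n$, where $s$ has no parent). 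This is the ``subtle'' behavior flagged in the text, and it is where the step-by-step induction used in all four supporting lemmas does the real work.
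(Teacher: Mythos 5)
Your proposal is correct and follows exactly the paper's route: the paper also proves this theorem by reducing Version 2 to the abstract algorithm of Theorem \ref{theorem:GD-correct-2} via the four invariant lemmas (Lemmas \ref{lemma:general-invariants2-a}--\ref{lemma:general-invariants2-d}), which is precisely the correspondence you set up and the lemmas you invoke. The paper states this reduction in one line, whereas you spell out the step-by-step simulation; the content is the same.
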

\begin{proof}
The theorem is immediate from Theorem \ref{theorem:GD-correct-2} and Lemmas \ref{lemma:general-invariants2-a}, \ref{lemma:general-invariants2-b}, \ref{lemma:general-invariants2-c}, and \ref{lemma:general-invariants2-d}.
\end{proof}

Version 2 of Algorithm GD has a running time of $O(m \alpha(n, m/n))$, or $O(m)$ on a RAM, depending on the disjoint set implementation.  The only data structure needed other than simple lists and maps is one to maintain disjoint sets.

Unlike Algorithm AD, Algorithm GD cannot avoid the need to compute nearest common ancestors to determine when to add vertices to \emph{in}-bags and \emph{out}-bags.  If the vertex order is postorder, nca's are not needed for filling \emph{in}-bags; if the order is reverse preorder, nca's are not needed for filling \emph{out}-bags; but no order suffices for both. The crucial difficulty is that one cannot tell a priori whether a cross arc will become a back arc or a tree or forward arc. The algorithm presented in the conference version of our paper \cite{dominators:FM12} does not compute nca's, and consequently is incorrect on general graphs, although it is correct for acyclic graphs.

Another way to keep track of added arcs that is perhaps more intuitive than the one used in Version 2 of Algorithm GD is to maintain the added arcs themselves instead of their ends.  Then $\mathit{in}(v)$ and $\mathit{out}(v)$ become sets of arcs rather than bags of vertices.  When marking an arc, we delete it from both sets containing it.  We can avoid the need to implement these sets as doubly-linked lists (needed to support arbitrary deletion) by marking arcs as deleted.  This approach avoids the need to maintain \emph{added} counts as well as \emph{total} counts.  But the approach we have chosen has the advantage that it allows us to separate the acyclic part of the algorithm from the cyclic part.  We shall exploit this ability in Section \ref{sec:loops}.

Like Algorithm AD, Algorithm GD can be modified to build the spanning tree and compute dominators in a single depth-first search.  We conclude this section by presenting Version 3 of Algorithm GD, which does this.  The algorithm uses two disjoint set data structures concurrently: one to keep track of contractions, the other to compute nca's.  We use the prefix ``$n$-'' to indicate the operations of the latter.  To compute nca's, the algorithm uses the following streamlined version \cite{pathcomp:t} of the nca algorithm of Hopcroft and Ullman \cite{lca:ahu}:  During the depth-first search, when the algorithm first visits a vertex $u$ (the preorder visit to $u$), it does \emph{n-makeset}$(u)$.  When retreating along a tree arc $(p(u), u)$, it does \emph{unite}$(p(u), u)$.  This maintains the invariant that if $u$ is the current vertex of the search and $v$ is any visited vertex, \emph{n-find}$(v)$ is the nearest common ancestor of $u$ and $v$ in the depth-first spanning tree.  Thus when the search retreats along an arc $(u, v)$, \emph{nca}$(u, v) = $ \emph{n-find}$(v)$.

To run the depth-first search, the algorithm maintains a bit $\mathit{visited}(u)$ for each vertex $u$ such that $\emph{visited}(u)$ is \textbf{true} if and only if $u$ has been visited in preorder. When visiting a vertex $u$ in postorder, the algorithm executes iteration $u$ of the main loop.


\begin{figure}
\begin{center}
\fbox{
\begin{minipage}[h]{16cm}
\begin{center}
\textbf{Algorithm GD: Find Dominators in a General Graph, Version 3}
\end{center}
\begin{description}\setlength{\leftmargin}{10pt} \setlength{\itemsep}{0pt} \parskip0pt \parsep0pt
\item[] \textbf{for} $u \in V$ \textbf{do} $\mathit{visited}(u) \leftarrow \textbf{false}$; \hfill
\item[] $\mathit{dfs}(s)$; \hfill\\
\item[] procedure $\mathit{dfs}(u)$:
\item[] $\{$ \ \ $\mathit{previsit}(u)$;
\item[] \hspace{3ex} \textbf{for} $(u, v) \in A$ \textbf{do}
\item[] \hspace{3ex} $\{$ \ \ \textbf{if} $\mathit{visited}(v) = \mathbf{false}$ \textbf{then} \ $\{$ $\mathit{dfs}(v)$; $p(v) \leftarrow u$; $\mathit{n\text{-}unite}(p(v), v)$ $\}$
\item[] \hspace{3ex} \ \ \ \ $\mathit{total}(v) \leftarrow \mathit{total}(v) + 1$; add $(u, v)$ to $\mathit{arcs}(\mathit{n\text{-}find}(v))$ \ \ $\}$
\item[] \ \ \ \ \ $\mathit{postvisit}(u)$ \ $\}$\\
\item[] procedure $\mathit{previsit}(u)$:
\item[] $\{$ \ \ $\mathit{visited}(u) \leftarrow \textbf{true}$; $\mathit{total}(u) \leftarrow 0$; $\mathit{arcs}(u) \leftarrow [ \ ]$; $\mathit{n\text{-}makeset}(u)$ \ \ $\}$\\
\item[] procedure $\mathit{postvisit}(u)$:
\item[] $\{$ \ \ $\mathit{out}(u) \leftarrow [\ ]$; $\mathit{in}(u) \leftarrow [\ ]$;
\item[] \hspace{3ex} $\mathit{make\text{-}set}(u)$; $\mathit{added}(u) \leftarrow 0$;
\item[] \hspace{3ex} $\mathit{same}(u) \leftarrow \{u\}$;
\item[] \hspace{3ex} \textbf{for} $(x, y) \in \mathit{arcs}(u)$ \textbf{do}
\item[] \hspace{3ex} $\{$ \ add $y$ to $\mathit{out}(\mathit{find}(x))$; add $x$ to $\mathit{in}(\mathit{find}(y))$;
\item[] \hspace{3ex} \ \ \ \ $\mathit{added}(\mathit{find}(y)) \leftarrow \mathit{added}(\mathit{find}(y)) + 1$ \ $\}$
\item[] \hspace{3ex} \textbf{while} $\mathit{out}(u) \not= [\ ]$ \textbf{do}
\item[] \hspace{3ex} $\{$ \ delete some $y$ from $\mathit{out}(u)$; $v \leftarrow \mathit{find}(y)$;
\item[] \hspace{6ex} \textbf{if} $v \not= u$ \textbf{then} $\{$ $\mathit{total}(v) \leftarrow \mathit{total}(v) - 1$; $\mathit{added}(v) \leftarrow \mathit{added}(v) - 1$ $\}$
\item[] \hspace{6ex} \textbf{if} $\mathit{total}(v) = 0$ \textbf{then}
\item[] \hspace{6ex} $\{$ \ $x \leftarrow \mathit{find}(p(v))$;
\item[] \hspace{6ex} \ \ \ \ \textbf{if} $u=x$ \textbf{then} \textbf{for} $w \in \mathit{same}(v)$ \textbf{do} $d(w) \leftarrow v$
\item[] \hspace{6ex} \ \ \ \ \textbf{else} $\mathit{same}(x) \leftarrow \mathit{same}(x) \cup \mathit{same}(v)$;
\item[] \hspace{6ex} \ \ \ \ $\mathit{unite}(p(v), v)$; $\mathit{out}(x) \leftarrow \mathit{out}(x) \cup \mathit{out}(v)$ \ $\}$ \ $\}$
\item[] \hspace{3ex} \textbf{while} $\mathit{in}(u) \not= [\ ]$ \textbf{do}
\item[] \hspace{3ex} $\{$ \ delete some $z$ from $\mathit{in}(u)$; $v \leftarrow \mathit{find}(z)$;
\item[] \hspace{6ex} \textbf{while} $v \not= u$ \textbf{then}
\item[] \hspace{6ex} $\{$ \ $\mathit{same}(u) \leftarrow \mathit{same}(u) \cup \mathit{same}(v)$
\item[] \hspace{9ex} $x \leftarrow \mathit{find}(p(v))$; $\mathit{unite}(p(v), v)$;
\item[] \hspace{9ex} $\mathit{in}(x) \leftarrow \mathit{in}(x) \cup \mathit{in}(v)$;
\item[] \hspace{9ex} $\mathit{out}(x) \leftarrow \mathit{out}(x) \cup \mathit{out}(v)$;
\item[] \hspace{9ex} $\mathit{total}(x) \leftarrow \mathit{total}(x) + \mathit{total}(v)$;
\item[] \hspace{9ex} $\mathit{added}(x) \leftarrow \mathit{added}(x) + \mathit{added}(v)$;
\item[] \hspace{9ex} $v \leftarrow x$ \ $\}$ \ $\}$
\item[] \ \ \ \ \ $\mathit{total}(u) \leftarrow \mathit{total}(u) - \mathit{added}(u)$; $\mathit{added}(u) \leftarrow 0$ \ $\}$
\end{description}
\end{minipage}
}
\end{center}
\end{figure}


\begin{theorem}
\label{theorem:GD3-correct}
Version 3 of Algorithm GD is correct.
\end{theorem}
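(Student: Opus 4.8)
The plan is to argue that Version 3 of Algorithm GD performs exactly the same computation as Version 2, and then to invoke Theorem \ref{lemma:GD2-correct}. Version 3 differs from Version 2 only in that the construction of the depth-first spanning tree $T$ (with parent function $p$), the bottom-up numbering of its vertices, and the initialization of the counts $\mathit{total}$ and the lists $\mathit{arcs}$ are all folded into a single depth-first search: the body of $\mathit{postvisit}(u)$ is precisely iteration $u$ of the main loop of Version 2, and these bodies are executed in the order in which the search leaves the vertices, i.e., in postorder. So the first step is the standard observation that a depth-first search from $s$ generates a depth-first spanning tree $T$ rooted at $s$, that postorder is a bottom-up order on $T$ (a vertex is left after all of its children and so receives a smaller number than its parent), and that identifying vertices by their postorder numbers makes the postvisits occur in the order $1, 2, \ldots, n$ required by Version 2.

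The second step is to verify the nearest-common-ancestor subroutine. Doing $\mathit{n\text{-}makeset}(u)$ at the preorder visit to $u$ and $\mathit{n\text{-}unite}(p(u), u)$ on retreat from $u$ is Tarjan's streamlined form \cite{pathcomp:t} of the Hopcroft--Ullman algorithm \cite{lca:ahu}, and maintains the invariant that, while the search has reached but not yet left a vertex $w$, $\mathit{n\text{-}find}(z) = \mathit{nca}(w, z)$ for every already-visited $z$. Hence when $\mathit{dfs}(x)$ finishes processing an arc $(x, y)$, it correctly appends $(x, y)$ to $\mathit{arcs}(\mathit{nca}(x, y))$ and increments $\mathit{total}(y)$. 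Since an arc $(x, y)$ with $\mathit{nca}(x, y) = u$ has both ends in the subtree of $u$ in $T$, it is processed somewhere inside $\mathit{dfs}(u)$, hence before $\mathit{postvisit}(u)$; therefore, when $\mathit{postvisit}(u)$ reads $\mathit{arcs}(u)$, it contains exactly $\{ (x, y) \in A \mid \mathit{nca}(x, y) = u \}$, the value Version 2 has after its initialization. With this in hand, an induction on the steps of the search should show that Version 3 produces, step for step, the same sequence of disjoint-set operations, bag manipulations, and assignments to $d$ as Version 2, so the two compute the same immediate dominators, and Theorem \ref{lemma:GD2-correct} completes the argument.

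The delicate point, and the step I expect to be the main obstacle, is the timing of the updates to $\mathit{total}$. Unlike in Version 2, where every $\mathit{total}(u)$ is set to its full value $|\{ (x, u) \in A \}|$ before any main-loop iteration begins, in Version 3 an arc $(x, u)$ contributes to $\mathit{total}(u)$ only when $\mathit{dfs}$ processes it, and for the tree arc $(p(u), u)$, and for forward and cross arcs into $u$, this can happen after $\mathit{postvisit}(u)$ has already executed. One must therefore check, first, that $\mathit{postvisit}(u)$ never branches on the value of $\mathit{total}(u)$ itself (within $\mathit{postvisit}(u)$ the count $\mathit{total}(u)$ is only incremented by value-independent accumulations in the second \textbf{while} loop and corrected at the end by $\mathit{total}(u) \leftarrow \mathit{total}(u) - \mathit{added}(u)$, while the deletion tests use $\mathit{total}(v)$ for vertices $v \neq u$), and, second, that by the time $\mathit{total}(u)$ is consulted for a deletion test, every arc into $u$ has already been processed by the search, so that $\mathit{total}(u)$ agrees with the value it would have at the corresponding moment of Version 2. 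By Corollary \ref{corollary:transform} and the structure of the contractions, $\mathit{total}(u)$ is consulted for a deletion test only in the postvisit of a current proper ancestor of $u$ in $T$, which is processed later in postorder; tracking the search to confirm that all arcs into $u$ have been traversed by then, and hence that the two versions remain in lockstep, is where the real work lies.
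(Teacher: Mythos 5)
Your overall strategy --- reduce Version 3 to Version 2 via Theorem \ref{lemma:GD2-correct} together with the correctness of the interleaved nca computation --- is exactly the paper's (one-line) proof, and your verification that $\mathit{arcs}(u)$ equals $\{(x,y) \in A \mid \mathit{nca}(x,y) = u\}$ before $\mathit{postvisit}(u)$ runs is sound. You have also correctly isolated the one non-routine step: whether $\mathit{total}$ has the right value at the moments it is tested. But you leave that step unproved, and the claim you hope to establish --- that by the time $\mathit{total}(v)$ is consulted for a deletion test, every arc into $v$ has already been traversed by the search --- is false for the pseudocode as written. In Version 3 an arc $(x,y)$ increments $\mathit{total}(y)$ only when $\mathit{dfs}(x)$ examines it; a forward arc whose tail $x$ is a proper ancestor of $u$ (or a cross arc whose tail lies in a part of the graph visited after $u$'s subtree) is examined only after $\mathit{postvisit}(u)$ has returned, yet $\mathit{postvisit}(u)$ does test $\mathit{total}(v)$ for descendants $v$ of $u$. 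Concretely, take $V = \{s,a,b,c\}$ with arcs $(s,a)$, $(a,b)$, $(b,c)$, $(s,c)$, and let the search at $s$ examine $(s,a)$ before $(s,c)$, so that $T$ is the path $s,a,b,c$ and $(s,c)$ is a forward arc. When $\mathit{postvisit}(b)$ runs, only $(b,c)$ has been traversed, so $\mathit{total}(c)=1$; deleting $c$ from $\mathit{out}(b)$ drives $\mathit{total}(c)$ to $0$ and contracts $c$ into $b$, assigning $d(c)=b$, whereas $d(c)=s$. Version 2, whose $\mathit{total}(c)$ is initialized to the full in-degree $2$, correctly waits until iteration $s$. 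So the two versions are not in lockstep and the induction you sketch cannot close. (A smaller slip: the test ``$\mathit{total}(v)=0$'' is executed even when $v=u$; only the decrement is guarded by $v \neq u$.)

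To be clear, you have put your finger on a genuine difficulty that the paper's own one-sentence proof does not address. The obstruction is precisely the one the paper itself acknowledges for Algorithm AD: with postorder processing, arcs out of not-yet-postvisited ancestors have not been seen, so in-degree information is incomplete. The equivalence with Version 2 can be rescued by initializing $\mathit{total}$ to the full in-degrees in a separate $O(m)$ pass before the search (and dropping the increments inside $\mathit{dfs}$), or by separating the search from the contraction loop as Algorithm HD does in Section \ref{sec:loops}. Your write-up should either make such a repair explicit and then carry out the lockstep induction, or prove the timing claim by some other means; as it stands the key step is both missing and, for the algorithm as literally written, unprovable.
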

\begin{proof}
The Theorem follows from Theorem \ref{lemma:GD2-correct} and the correctness of the nca computation.
\end{proof}

\section{Loops and Dominators}
\label{sec:loops}

Many applications of dominators to global flow analysis also require a loop nesting forest $H$ (defined with respect to some depth-first search tree $T$).  We can find such an $H$ as part of the computation of dominators.  Indeed, the part of Algorithm GD that does transforms is almost the same as Tarjan's algorithm for finding $H$, and if we eliminate the other contractions done by Algorithm GD, we obtain a version of Tarjan's algorithm.  We can turn this idea around: we find $H$ using Tarjan's algorithm, and then find dominators by running Algorithm GD modified to choose transforms using $H$.
This gives us Algorithm HD, which computes $H$ as well as $D$.  The algorithm consists of three steps: initialization, the $H$ loop, which computes $H$, and the $D$ loop, which computes $D$.  The $D$ loop is the main loop of Algorithm GD, modified to use $H$ to select transforms and to process vertices in reverse preorder. The use of preorder eliminates the need to compute nca's; the use of $H$ eliminates the need to use \emph{in}-bags.  The $H$ loop is a depth-first search that builds a depth-first spanning tree $T$ with parent function $p$, constructs a list $\mathit{revpre}$ of the vertices in reverse preorder with respect to the search for use in the $D$ loop, builds the loop nesting forest $H$ with respect to $T$, and finds an exit $\mathit{exit}(v)$ from each loop $\mathit{loop}(v)$ that is contained in a larger loop.  To build $H$, it processes each vertex $u$ in postorder, doing a backward search from $u$ among its descendants in $T$.  Any bottom-up order will do, but using postorder avoids the need to compute nca's to determine which arcs to traverse in backward searches \cite{dominators:bgkrtw}.  The vertices reached by the backward search are exactly those in $\mathit{loop}(u)$. To avoid revisiting vertices that are in nested loops, the algorithm contracts each loop into its head once it is found. This is Tarjan's loop-finding algorithm~\cite{st:t} streamlined to eliminate the need for nca's \cite{dominators:bgkrtw}. The computation of $H$ is essentially the same as Algorithm GD with the contractions not in transforms and the computation of $D$ omitted. The $H$ and $D$ loops can use the same disjoint set data structure; the $D$ loop reinitializes the sets to be singletons.

From $H$, a set of loop exits, and $D$, one can compute a low-high order of the vertices, which suffices to easily verify that $D$ is correct \cite{domcert}.  Computing a low-high order takes an extra $O(m)$-time pass, which we omit; \cite{domcert} gives the details of this computation.  If a low-high order is not needed, loop exits need not be computed, and \emph{in}-bags of vertices (as in Algorithm GD) can be used in place of arc sets ($\mathit{in\text{-}arcs(v)}$) to guide the backward searches. The $H$ loop can be extended to test whether the graph is reducible and to identify the reducible loops \cite{loops:Ramalingam1999,st:t}.


\begin{figure}
\begin{center}
\fbox{
\begin{minipage}[h]{16cm}
\begin{center}
\textbf{Algorithm HD: Find Loops and Dominators in a General Graph}
\end{center}
\begin{description}\setlength{\leftmargin}{10pt} \setlength{\itemsep}{0pt} \parskip0pt \parsep0pt
\item[Initialization:] \textbf{for} $u$ in $V$ \textbf{do} $\mathit{visited}(u) \leftarrow \textbf{false}$; $\mathit{revpre} = [\ ]$; $H \leftarrow \{ \ \}$;
\item[$H$ loop:] $\mathit{dfs}(s)$;
\item[$D$ loop:] \textbf{for} $u \in \mathit{revpre}$ \textbf{do}
\item[] \hspace{7.5ex} $\{$ \; $\mathit{out}(u) \leftarrow [\ ]$; $\mathit{make\text{-}set}(u)$; $\mathit{added}(u) \leftarrow 0$; $\mathit{same}(u) \leftarrow \{u\}$;
\item[] \hspace{10.5ex} \textbf{for} $(x, u) \in A$ \textbf{do}
\item[] \hspace{10.5ex} $\{$ \ add $u$ to $\mathit{out}(\mathit{find}(x))$; $\mathit{added}(\mathit{find}(u)) \leftarrow \mathit{added}(\mathit{find}(u)) + 1$ \ $\}$
\item[] \hspace{10.5ex} \textbf{while} $\mathit{out}(u) \not= [\ ]$ \textbf{do}
\item[] \hspace{10.5ex} $\{$ \ delete some $y$ from $\mathit{out}(u)$; $v \leftarrow \mathit{find}(y)$;
\item[] \hspace{13.5ex} \textbf{if} $v \not= u$ \textbf{then} $\{$ $\mathit{total}(v) \leftarrow \mathit{total}(v) - 1$; $\mathit{added}(v) \leftarrow \mathit{added}(v) - 1$ $\}$
\item[] \hspace{13.5ex} \textbf{if} $\mathit{total}(v) = 0$ \textbf{then}
\item[] \hspace{13.5ex} $\{$ \ $x \leftarrow \mathit{find}(p(v))$;
\item[] \hspace{16.5ex} \textbf{if} $u = x$ \textbf{then} \textbf{for} $w \in \mathit{same}(v)$ do $d(w) \leftarrow u$
\item[] \hspace{16.5ex} \textbf{else} $\mathit{same}(x) \leftarrow \mathit{same}(x) \cup \mathit{same}(v)$;
\item[] \hspace{16.5ex} $\mathit{unite}(p(v), v)$; $\mathit{out}(x) \leftarrow \mathit{out}(x) \cup \mathit{out}(v)$ \ $\}$ \ $\}$
\item[] \hspace{10.5ex} \textbf{for} $(u, z) \in H$ \textbf{do}
\item[] \hspace{10.5ex} $\{$ \ $v \leftarrow \mathit{find}(z)$;
\item[] \hspace{13.5ex} \textbf{if} $v \not= u$ \textbf{then}
\item[] \hspace{13.5ex} $\{$ \ $\mathit{same}(u) \leftarrow \mathit{same}(u) \cup \mathit{same}(v)$;
\item[] \hspace{16.5ex} $x \leftarrow \mathit{find}(p(v))$; $\mathit{unite}(p(v), v)$;
\item[] \hspace{16.5ex} $\mathit{out}(x) \leftarrow \mathit{out}(x) \cup \mathit{out}(v)$;
\item[] \hspace{16.5ex} $\mathit{total}(x) \leftarrow \mathit{total}(x) + \mathit{total}(v)$;
\item[] \hspace{16.5ex} $\mathit{added}(x) \leftarrow \mathit{added}(x) + \mathit{added}(v)$ \ $\}$ \ $\}$
\item[] \hspace{10.5ex} $\mathit{total}(u) \leftarrow \mathit{total}(u) - \mathit{added}(u)$; $\mathit{added}(u) \leftarrow 0$ \ $\}$\\
\item[] procedure $\mathit{dfs}(u)$:
\item[] $\{$ \ \ $\mathit{previsit}(u)$;
\item[] \hspace{3ex} \textbf{for} $(u, v) \in A$ \textbf{do}
\item[] \hspace{3ex} $\{$ \ \textbf{if} $\mathit{visited}(v) = \textbf{false}$ \textbf{then} $\{$ $\mathit{dfs}(v)$; $p(v) \leftarrow u$ $\}$
\item[] \hspace{6ex} $\mathit{total}(v) \leftarrow \mathit{total}(v) + 1$; add $(u, v)$ to $\mathit{in\text{-}arcs}(\mathit{find}(v))$ \ $\}$
\item[] \hspace{3ex} $\mathit{postvisit}(u)$ \ $\}$\\
\item[] procedure $\mathit{previsit}(u)$:
\item[] $\{$ \ \ $\mathit{visited}(u) \leftarrow \textbf{true}$; $\mathit{total}(u) \leftarrow 0$; $\mathit{in\text{-}arcs}(u) \leftarrow [\ ]$;
\item[] \hspace{3ex} $\mathit{make\text{-}set}(u)$; add $u$ to the front of $\mathit{revpre}$ \ $\}$\\
\item[] procedure $\mathit{postvisit}(u)$:
\item[] $\{$ \ \ \textbf{while} $\mathit{in\text{-}arcs}(u) \not= [\ ]$ \textbf{do}
\item[] \hspace{3ex}  $\{$ \ delete some $(z, y)$ from $\mathit{in\text{-}arcs}(u)$;
\item[] \hspace{6ex}  $v \leftarrow \mathit{find}(z)$;
\item[] \hspace{6ex}  \textbf{while} $v \not= u$ \textbf{do}
\item[] \hspace{6ex}  $\{$ \ add $(u, v)$ to $H$; $\mathit{exit}(v) \leftarrow (z, y)$;
\item[] \hspace{9ex}  $x \leftarrow \mathit{find}(p(v))$; $\mathit{unite}(p(v), v)$;
\item[] \hspace{9ex}  $\mathit{in}(x) \leftarrow \mathit{in}(x) \cup \mathit{in}(v)$;
\item[] \hspace{9ex}  $v \leftarrow x$ \ $\}$ \ $\}$ \ $\}$
\end{description}
\end{minipage}
}
\end{center}
\end{figure}


\begin{theorem}
\label{LD-correct}
Algorithm HD is correct.
\end{theorem}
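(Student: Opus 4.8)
The plan is to decompose the correctness of Algorithm HD into two parts, matching its two loops: first, that the $H$ loop correctly computes the depth-first spanning tree $T$ (with parent function $p$), the reverse-preorder list $\mathit{revpre}$, the loop nesting forest $H$ with respect to $T$, and an exit from each loop contained in a larger loop; and second, that the $D$ loop, given a correct $H$, correctly computes $D$. For the first part I would appeal to the correctness of Tarjan's loop-finding algorithm~\cite{st:t}, streamlined to avoid nearest-common-ancestor computations by processing vertices in postorder~\cite{dominators:bgkrtw}: $\mathit{dfs}(s)$ builds $T$ and $\mathit{revpre}$ in the standard way, and for each vertex $u$ visited in postorder, $\mathit{postvisit}(u)$ performs a backward search from $u$ among the descendants of $u$ in $T$, contracting each nested loop into its head as it is found, so that the set of vertices reached is exactly $\mathit{loop}(u)$ and the arcs added to $H$ are exactly the pairs $(u,z)$ with $h(z)=u$. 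I would also record the structural fact, used below, that $\mathit{loop}(u)$ is the set of descendants of $u$ in $H$, so that $\mathit{loop}(u)\setminus\{u\}$ is the disjoint union of the sets $\mathit{loop}(z)$ over the children $z$ of $u$ in $H$.

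For the second part, the key observation is that the $D$ loop is Algorithm GD with two modifications: vertices are processed in reverse preorder rather than postorder, and the cyclic phase (the second \textbf{while} loop of Algorithm GD) is driven by $H$ rather than by detecting back arcs, with the \emph{in}-bags eliminated. I would handle the change of order exactly as in Version~3 of Algorithm AD: reverse preorder on $T$ is a bottom-up order (every vertex is numbered less than its $T$-parent), so Lemmas~\ref{lemma:parent-dom}, \ref{lemma:contract-1}, \ref{lemma:loop}, \ref{lemma:loop-transform}, \ref{lemma:marked-arc}, and the invariant argument of Lemma~\ref{lemma:invariants-2} apply, and inserting $u$ into $\mathit{out}(\mathit{find}(x))$ for every arc $(x,u)$ at the start of iteration $u$ is harmless because a cross arc $(x,u)$ cannot become a tree or forward arc before $\mathit{nca}(x,u)$ is processed, as in Corollary~\ref{lemma:corollary3-pass2}. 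The heart of the argument is to show that processing the children $z$ of $u$ in $H$ and contracting, for each, the tree path from $\mathit{find}(z)$ up to $u$ achieves the same effect as Algorithm GD's transforms. For this I would prove, by induction on the processing order (which respects $H$, since each child of $u$ in $H$ is a proper descendant of $u$ in $T$ and hence precedes $u$ in reverse preorder), the invariant that when iteration $u$ reaches its cyclic phase, $d(x)>u$ for every descendant $x$ of $u$ in $T$ and, for each child $z$ of $u$ in $H$, $\mathit{loop}(z)$ has already been contracted into the single vertex $\mathit{find}(z)$, which is a descendant of $u$ in $T$. Given this, the inner loop ``\textbf{while} $v\neq u$'' that walks $v$ from $\mathit{find}(z)$ up to $u$ performs a sequence of contractions, each an instance of $\mathit{transform}$ whose hypotheses are met by Lemmas~\ref{lemma:loop} and~\ref{lemma:loop-transform} (since $v\in\mathit{loop}(u)$ throughout), so the combined effect over all children $z$ is to contract all of $\mathit{loop}(u)$ into $u$ while preserving the dominators of the surviving vertices; this re-establishes the invariant for the next iteration, and the assignments $d(w)\leftarrow u$ are correct by Lemma~\ref{lemma:parent-dom}. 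By the analog of Corollary~\ref{corollary:general-start-vertex}, only $s$ remains at the end, so every immediate dominator has been computed.

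It then remains to verify the bookkeeping analogs of Lemmas~\ref{lemma:general-invariants2-a}--\ref{lemma:general-invariants2-d} and~\ref{lemma:general-pass2}: that the $\mathit{out}$-bags contain exactly the right vertices during each iteration, that $\mathit{total}(v)$ reaches $0$ precisely when $v$ should be contracted in the acyclic phase, that $\mathit{total}$ and $\mathit{added}$ are maintained correctly under the lazy deletions, and that no vertex $x<u$ has an entering back arc except in the middle of the cyclic phase. With those in hand the theorem follows from Theorem~\ref{theorem:GD-correct-2} and the correctness of the $H$ loop, exactly as Theorem~\ref{theorem:GD3-correct} follows from Theorem~\ref{lemma:GD2-correct}.

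I expect the main obstacle to be the invariant in the previous-to-last paragraph: because the cyclic phase is now driven by $H$ rather than by back arcs actually present in the current graph, I must show that $\mathit{find}(z)$ is genuinely a descendant of $u$ in $T$ with $\mathit{loop}(z)$ fully contracted when $u$ is processed --- that is, that no head of a nested loop is prematurely absorbed by an acyclic contraction before its enclosing loop is processed, and that the contractions implied by the ``\textbf{while} $v\neq u$'' walk line up exactly with the transforms Algorithm GD would perform. Once this synchronization between the $H$-guided contractions and the $\mathit{out}$-bag/$\mathit{total}$ mechanism is established, the rest is a routine transcription of the Algorithm GD correctness proof.
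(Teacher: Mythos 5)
Your proposal follows the same route as the paper: the paper's entire proof of this theorem is that it ``is immediate from the proof of Theorem \ref{lemma:GD2-correct} and the correctness of the $H$ loop,'' which is precisely your decomposition into (i) the $H$ loop as Tarjan's loop-finding algorithm and (ii) the $D$ loop as a reordered, $H$-driven variant of Algorithm GD. Your write-up is considerably more detailed --- in particular the synchronization invariant you flag as the main obstacle (that $\mathit{loop}(z)$ is already contracted into $\mathit{find}(z)$, a descendant of $u$, when iteration $u$ reaches its cyclic phase) is left implicit in the paper --- but the underlying argument is the same.
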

\begin{proof}
The theorem is immediate from the proof of Theorem \ref{lemma:GD2-correct} and the correctness of the $H$ loop.
\end{proof}

The running time of Algorithm HD is $O(m \alpha(n, m/n))$, or $O(m)$ on a RAM, depending on the disjoint set implementation.  The $D$ loop can be folded into the $H$ loop, but only at the cost of doing an nca computation to fill the \emph{in}-bags.  The result is an algorithm that consists of initialization and a single depth-first search, but that uses three disjoint set data structures concurrently, one to find nca's, one to find loops, and one to find dominators. Keeping the $H$ and $D$ loops separate seems likely to gives a more efficient implementation, but this is a question to be resolved by experiments.

Ramalingam's reduction uses a loop nesting forest in a different way: to transform the graph by deleting certain arcs and adding other arcs and vertices to make the graph acyclic while preserving dominators. A streamlined version of his transformation that avoids adding vertices is the following. Let $G$ be a flow graph, let $T$ be a depth-first spanning tree of $G$ with parent function $p$, and let $H$ be the loop nesting forest of $G$ with respect to $T$. Form $G'$ from $G$ by (i) deleting all back arcs; (ii) for each $u$ such that $\mathit{loop}(u)$ and $\mathit{loop}(h(u))$ have a common non-head entry $(v, w)$, add arc $(p(h(u)), u)$; and (ii) for each non-head entry $(v, w)$ (into some loop), add arc $(v, u)$, where $\mathit{loop}(u)$ is the largest loop with entry $(v, w)$.  Since all loops containing $w$ are nested, $u$ is an ancestor in $T$ of every $u'$ such that $(v, w)$ enters $\mathit{loop}(u')$, which implies that $(v, w)$ is a non-head entry of $\mathit{loop}(u)$.  Every arc added by (ii) is a forward arc, and every arc added by (iii) is a forward or cross arc, so $G'$ is acyclic, since it contains no back arcs.  The number of arcs added by (ii) is at most $n - 2$ and by (iii) at most $m$, so $G'$ has $O(m)$ arcs.

\begin{theorem}
\label{theorem:Ramalingam-reduction}
Graphs $G$ and $G'$ have the same dominators.
\end{theorem}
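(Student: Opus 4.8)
The plan is to establish that $G$ and $G'$ have the same dominator relation, hence the same dominator tree, by a path argument of the kind used for Lemmas~\ref{lemma:contract-1} and~\ref{lemma:loop-transform}. Since $G'$ has the same vertex set as $G$ and every vertex stays reachable from $s$ (the depth-first tree $T$ survives in $G'$, because only back arcs are deleted and a tree arc is never a back arc), it suffices to prove that for every vertex $w$ and every vertex $x \neq s$, some path from $s$ to $w$ avoids $x$ in $G$ if and only if some such path exists in $G'$. Recall that $G'$ is acyclic, so every cycle of $G$ uses a back arc and its vertices lie in a common loop.

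For the ``only if'' direction I would start from a simple path $P$ from $s$ to $w$ in $G$ avoiding $x$ and reroute the arcs of $P$ absent from $G'$, namely its back arcs, innermost first. When $P$ uses a back arc interior to a loop $\mathit{loop}(u)$, take the last non-head entry $(v,w')$ by which $P$ enters that loop beforehand, and replace the segment of $P$ from $v$ to the head $u$ (or to the point where $P$ leaves $\mathit{loop}(u)$) by the type-(iii) arc $(v, u_{\max})$, where $\mathit{loop}(u_{\max})$ is the largest loop with entry $(v,w')$, followed by the tree path back down to where $P$ continues; when this jump passes over a nested sub-loop head that $P$ still uses, the type-(ii) arc supplies exactly the incoming arc that was lost. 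The delicate case is when $x$ lies inside a loop we reroute around: there $x$ is a proper descendant of the loop head, so tree paths to and through that head avoid $x$, and combining them with the original non-back arcs of the loop (all of which remain in $G'$) rebuilds an $s$-$w$ path in $G'$ avoiding $x$, just as in the proof of Lemma~\ref{lemma:loop-transform}.

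For the ``if'' direction I would take a simple path $P'$ from $s$ to $w$ in $G'$ avoiding $x$ and replace each added arc on $P'$ by a path in $G$: a type-(iii) arc $(v,u)$ becomes the original entry $(v,w')$ followed by a path from $w'$ to $u$ inside $\mathit{loop}(u)$, which exists because $\mathit{loop}(u)$ induces a strongly connected subgraph of $G$; a type-(ii) arc is handled analogously through the shared non-head entry. Each such substitution avoids $x$ unless $x$ lies in the loop being expanded, and that case, like its mirror image above, is where Lemma~\ref{lemma:loop} is needed to conclude that $x$ still fails to dominate $w$ in $G$.

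The main obstacle I anticipate is the bookkeeping for nested loops that share a non-head entry -- precisely what step (ii) is for. Jumping such an entry straight to the head of the outermost enclosing loop would leave a sub-loop head unreachable along paths that must avoid its tree-parent, so the naive rerouting of the ``only if'' direction fails; one must verify that the single arc $(p(h(u)), u)$ restores exactly the missing connectivity and introduces no new dominance. I would therefore run the whole argument as an induction on the loop nesting forest, bottom-up: each step ``collapses'' one loop by deleting its interior back arcs and redirecting its non-head entries, the inductive hypothesis supplies the facts of the form ``$d(\cdot) > u$ on all descendants of $u$'' that the path surgery of Lemmas~\ref{lemma:parent-dom}, \ref{lemma:contract-1}, and~\ref{lemma:loop-transform} relies on, Lemma~\ref{lemma:loop} governs the case where the avoided vertex lies inside the collapsed loop, and one checks that performing all the collapses yields exactly $G'$, with the type-(ii) arcs being precisely those that patch the skipped-over sub-loop heads.
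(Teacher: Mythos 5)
Your overall strategy---show both directions by path surgery, replacing back arcs by the new type-(ii)/(iii) arcs in one direction and expanding the new arcs into entry-plus-loop-internal paths in the other---is the same as the paper's, and your reverse direction is essentially right. But there is a genuine gap in the forward direction, at exactly the point you flag as delicate. You claim that when $x$ lies inside a loop being rerouted around, ``$x$ is a proper descendant of the loop head, so tree paths to and through that head avoid $x$,'' and that the surviving non-back arcs of the loop then rebuild the path. The first half is only half true: the tree path \emph{to} the head avoids $x$, but the tree path from the head \emph{down} to the vertex $u'$ where $P$ re-enters (the head of the inner loop closed by the last back arc) passes straight through $x$ whenever $x$ is a descendant of the outer head and a proper ancestor of $u'$; and the non-back arcs of the loop give no guarantee of an alternative route from the head to $u'$ avoiding $x$. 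This is precisely the case the type-(ii) arcs exist for, and handling it requires a specific choice that your sketch never makes: let $u''$ be the ancestor of $u'$ with $x$ a proper ancestor of $u''$, with $(v,w)$ entering $\mathit{loop}(u'')$, and with $\mathit{loop}(u'')$ largest among such; then $x$ is a descendant of $h(u'')$, so the tree path from $s$ to $p(h(u''))$, the new arc $(p(h(u'')), u'')$, and the tree path from $u''$ to $u'$ together avoid $x$. Saying the type-(ii) arc ``supplies exactly the incoming arc that was lost'' for a skipped sub-loop head describes a reachability repair, not an $x$-avoidance repair, and does not identify which arc to use or why the result misses $x$.

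Two smaller issues. First, your ``innermost first, one back arc at a time'' rerouting is harder to control than the paper's device of taking the \emph{last} back arc $(z,u')$ on a simple $P$ and replacing the entire prefix of $P$ up to $u'$ in one step; with per-arc rerouting you must argue that earlier reroutings do not destroy the simplicity or the entry structure you rely on later. Second, the proposed bottom-up induction that ``collapses'' loops does not obviously terminate at $G'$: the transformation defining $G'$ deletes back arcs and adds shortcut arcs on the \emph{original} vertex set, it contracts nothing, so identifying the residue of a sequence of contractions with $G'$ is itself a claim that needs proof and is not supplied. The checkable path-replacement argument, with the explicit three-way case analysis on the position of $x$ relative to $u$, $u''$, and $u'$, is what actually closes the theorem.
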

\begin{proof}
Suppose $x$ does not dominate $y$ in $G$.  Then there is a simple path $P$ from $s$ to $y$ in $G$ that avoids $x$.
This path is in $G'$ unless it contains a back arc. If it does, let $(z, u')$ be the last back arc on $P$. Since $P$ is simple, $(z, u')$ is preceded on $P$ by a non-head entry $(v, w)$ to $\mathit{loop}(u')$.  Let $\mathit{loop}(u)$ be the largest loop entered by $(v, w)$.  Then $u$ is an ancestor of $u'$ in $T$.  Also $x \not= u'$.
We modify $P$ to produce a path in $G'$ from $s$ to $y$ that avoids $x$.

There are three cases.  If $x$ is not a proper ancestor of $u'$ in $T$, then replacing the part of $P$ from $s$ to $u'$ by the path in $T$ from $s$ to $u'$ gives the desired path. If $x$ is a proper ancestor of $u$ in $T$, then replacing the part of $P$ from $v$ to $u'$ by the new arc $(v, u)$ followed by the path in $T$ from $u$ to $u'$ gives the desired path.  The third and last case is $x$ a descendant of $u$ and a proper ancestor of $u'$ in $T$.  Let $u''$ be the ancestor of $u'$ such that $x$ is a proper ancestor of $u''$, arc $(v, w)$ enters $\mathit{loop}(u'')$, and among all such $u''$, $\mathit{loop}(u'')$ is largest.  Vertex $u''$ is well-defined since $u'$ is a candidate.  Also $u'' \not= u$, so $h(u'')$ is defined.  Replacing the part of $P$ from $s$ to $u'$ by the path in $T$ from $s$ to $p(h(u''))$ followed by the new arc $(p(h(u'')), u'')$ followed by the path in $T$ from $u''$ to $u'$ gives the desired path.
We conclude that $x$ does not dominate $y$ in $G'$.

Conversely, suppose $x$ does not dominate $y$ in $G'$.  Let $P$ be a simple path in $G'$ from $s$ to $y$ that avoids $x$.  For each new arc on $P$, if any, we replace part or all of $P$ containing the new arc by a path in $G$ that avoids $x$.  After doing all such replacements, the result is a path in $G$ from $s$ to $y$ that avoids $x$.

Suppose $P$ contains a new arc $(p(h(u)), u)$ such that $\mathit{loop}(u)$ and $\mathit{loop}(h(u))$ have a
common non-head entry $(v, w)$. If $x$ is not a descendant of $p(h(u))$ and an ancestor of $u$, then we replace $(p(h(u)), u)$ by the path in $T$ from $p(h(u))$ to $u$.  If $x$ is a descendant of $h(u)$ and a proper ancestor of $u$, we replace the part of $P$ from $s$ to $u$ by the path in $T$ from $s$ to $v$, followed by $(v, w)$, followed by a path in $\mathit{loop}(u)$ from $w$ to $u$.  Since $v$ is not in $\mathit{loop}(h(u))$, $v$ is not a descendant of $h(u)$, but $x$ is, so the path in $T$ from $s$ to $v$ avoids $x$.  Since $x$ is a proper ancestor of $u$, $x \not \in \mathit{loop}(u)$, so the path in $\mathit{loop}(u)$ from $v$ to $u$ also avoids $x$.

Suppose $P$ contains a new arc $(v, u)$ such that $\mathit{loop}(u)$ is the largest loop with non-head entry $(v, w)$. If $x$ is not an ancestor of $u$, we replace the part of $P$ from $s$ to $u$ by the path in $T$ from $s$ to $u$.  If $x$ is an ancestor of $u$ it must be a proper ancestor of $u$.  In this case we replace the arc $(v, u)$ by $(v, w)$ followed by a path in $\mathit{loop}(u')$ from $v$ to $u'$.
We conclude that $x$ does not dominate $y$ in $G$.
\end{proof}

One can compute $G'$ be augmenting the computation of $H$ appropriately.  This takes only simple data structures and $O(m)$ additional time \cite{loops:Ramalingam}.  Applying Algorithm AD to $G'$ produces the dominator tree of $G$.  Our version of Ramalingam's transformation adds only arcs that are added by Version 1 of Algorithm GD.  As we saw in Section \ref{sec:general}, the dynamics of that algorithm allow it to keep track only of arcs formed by contractions, not the extra arcs added to preserve dominators.

\section{Remarks}
\label{sec:final}

As mentioned in the introduction, Gabow~\cite{dominators:poset} has also developed an algorithm for finding dominators that uses only simple data structures and static-tree disjoint set union. The heart of his algorithm is the computation of the minimal-edge poset~\cite{poset:Gabow91,minset:poset}. This computation operates on a transformed graph.  The transformation increases the graph size by a constant factor, which increase its running time and storage space. Our algorithm operates entirely on the original graph. Gabow's presentation relies on his work on the minimal-edge poset; ours is self-contained.  His algorithm uses an nca computation on a static tree. It also uses an algorithm for finding strongly connected components, augmented to contract components as they are found.  Our algorithm does not require an algorithm for strongly connected components, and the variant presented in Section \ref{sec:loops} does not need to compute nca's.

We believe that the techniques of Buchsbaum et al. \cite{dominators:bgkrtw} can be applied to our algorithm to make it run on a pointer machine in $O(m)$ time, but the details are not straightforward, so we leave this as an open problem.  Our algorithm is simple enough that it may be competitive in practice with the LT algorithm, and we plan to do experiments to find out whether this is true and to determine the best possible practical implementation.

\bibliographystyle{plain}
\bibliography{domsurvey}

\end{document}